\documentclass{llncs}
\usepackage{graphicx}
\usepackage{hyperref}
\usepackage{xcolor}
\usepackage{xspace}
\usepackage{paralist}
\usepackage{amsmath,amssymb}
\usepackage{subfig}
\usepackage{multirow}
\usepackage{dsfont}
\usepackage{wrapfig}
\usepackage{todonotes}
\usepackage{enumitem}
\usepackage{ amssymb }

\newif\ifarxiv
\arxivtrue


\newtheorem{claimx}{Claim}

\hypersetup{pdfborder={0 0 0.4}}

\renewenvironment{proof}
{{\bf Proof:}}{\hspace*{\fill}\fbox{}\par\vspace{2mm}}


\graphicspath{{img/}}

\newcommand{\nt}{{NT}\xspace}

\newcommand{\NP}{$\mathbb{NP}$\xspace}

\renewcommand{\NP}{$\mathbb{NP}$\xspace}

\newcommand{\NPC}{\mbox{$\mathbb{NP}$-complete}\xspace}

\newcommand{\NPCN}{\mbox{$\mathbb{NP}$-completeness}\xspace}

\newcommand{\NPHN}{\mbox{$\mathbb{NP}$-hardness}\xspace}


\definecolor{blue}{rgb}{0.274,0.392,0.666}
\definecolor{red}{rgb}{0.627,0.117,0.156}
\definecolor{Red}{rgb}{1,0,0}
\definecolor{green}{rgb}{0,0.588,0.509}
\definecolor{gray}{rgb}{0.5,0.5,0.5}

\renewenvironment{proof}
{\noindent{\em Proof:}}{\hspace*{\fill}\qed\par\vspace{2mm}}
\newcommand{\remove}[1]{}

\newcommand{\try}{Computing NodeTrix Representations of\\ Clustered Graphs}
\title{\try\thanks{Research partially supported by MIUR project AMANDA, prot. 2012C4E3KT\_001.}}

\author{Giordano {Da Lozzo}, Giuseppe {Di Battista}, Fabrizio {Frati}, Maurizio {Patrignani}
\institute{Roma Tre University, Italy\\
\email{\{dalozzo,gdb,frati,patrigna\}@dia.uniroma3.it}
  }
}

\begin{document}
\ifarxiv
\pagestyle{plain}
\else
\fi
\maketitle

\begin{abstract}
NodeTrix representations are a popular way to visualize clustered graphs; they represent clusters as adjacency matrices and inter-cluster edges as curves connecting the matrix boundaries. We study the complexity of constructing NodeTrix representations focusing on planarity testing problems, and we show several \NPCN results and some polynomial-time algorithms. Building on such algorithms we develop a JavaScript library for NodeTrix representations aimed at reducing the crossings between edges incident to the same matrix.
\end{abstract}


\section{Introduction and Overview} \label{se:intro}

NodeTrix representations have been introduced by Henry, Fekete, and McGuffin~\cite{hfm-nhvsn-07} in one of the most cited papers of the InfoVis conference~\cite{citevis}. 
A NodeTrix representation is a hybrid representation for the visualization of social networks where the node-link paradigm is used to visualize the overall structure of the network, within which adjacency matrices show communities.

Formally, a NodeTrix (\nt for short) representation is defined as follows. 
A {\em flat clustered graph} $(V,E,\mathcal{C})$ is a graph $(V,E)$ with a partition $\mathcal{C}$ of $V$ into sets $V_1,\dots,V_k$, called {\em clusters},  that can be defined according to the application needs. The word ``flat'' is used to underline that clusters are not arranged in a multi-level hierarchy (see, e.g.,~\cite{cdfk-atcefcg-14,df-ectefcgsf-j-09} for two papers dealing with non-flat clustered graphs). An edge $(u,v) \in E$ with $u \in V_i$ and $v \in V_j$ is an {\em intra-cluster edge} if $i=j$ and is an {\em inter-cluster edge} if $i\neq j$. 
In an {\em \nt representation} clusters $V_1,\dots,V_k$ are represented by non-overlapping symmetric adjacency matrices $M_1,\dots,M_k$, where $M_i$ is drawn in the plane
so that its boundary is a square $Q_i$ with sides parallel to the
coordinate axes. Thus, the matrices $M_1,\dots,M_k$ convey the information
about the intra-cluster edges of $(V,E,{\cal C})$, while each inter-cluster
edge $(u,v)$ with $u \in V_i$ and $v \in V_j$ is represented by a curve
connecting a point on $Q_i$ with a point on $Q_j$, where the point on $Q_i$ (on
$Q_j$) belongs to the column or to the row of $M_i$ (resp.\ of $M_j$) associated with $u$ (resp.\ with $v$). 

Several papers aimed at improving the readability of \nt representations by reducing the number of crossings between inter-cluster edges.
For this purpose, vertices can have duplicates in different matrices~\cite{hbf-ircsn-08} or clusters can be computed so to have dense intra-cluster graphs and a planar inter-cluster graph~\cite{bbdlpp-lbars-11}. 

In this paper we study the problem of automatically constructing an \nt representation of a given flat clustered graph. This problem combines traditional graph drawing issues, like the placement of a set of geometric objects in the plane (here the squares $Q_1,\dots,Q_k$) and the routing of the graph edges (here the inter-cluster edges), with a novel algorithmic challenge: To handle the degrees of freedom given by the choice of the \emph{order} for the rows and the columns of the matrices and by the choice of the \emph{side} of the matrices to which the inter-cluster edges attach to. Indeed, the order of the rows and columns of a matrix $M_i$ is arbitrary, as long as $M_i$ is symmetric; further, an inter-cluster edge incident to $M_i$ can arbitrarily exit $M_i$ from four sides: left or right if it exits $M_i$ from its associated row, or top or bottom if it exits $M_i$ from its associated column. 

When working on a new model for graph representations, the very first step is usually to study the complexity of testing if a graph admits a planar representation within that model. Hence, in Section~\ref{se:complexity} we deal with the problem of testing if a flat clustered graph admits a planar \nt representation. An \nt representation is {\em planar} if no inter-cluster edge $e$ intersects any matrix $M_i$, except possibly at an end-point of $e$ on $Q_i$, and no two inter-cluster edges $e$ and $e'$ cross each other, except possibly at a common end-point. The {\sc Nodetrix Planarity} ({\sc \nt Planarity} for short) problem asks if a flat clustered graph admits a  planar \nt representation.

Our findings show how tough the problem is (see Table~\ref{tab:complexity}).
Namely, we show that {\sc \nt Planarity} is \NPC and remains so 
even if the
order of the rows and of the columns of the matrices is fixed (i.e., it is part of the
input), or if the exit sides of the inter-cluster edges on the matrix boundaries are fixed. It is easy to show that {\sc \nt Planarity} becomes linear-time solvable if both the row-column order and the exit sides of the inter-cluster edges are fixed. But this is probably too restrictive for practical applications since all the degrees of freedom that are representation-specific are lost.

\begin{table}[tb!]
\centering
\begin{tabular}{|p{2.2cm}|p{1cm}|p{2cm}|p{2cm}||p{2cm}|p{2cm}|}

\cline{3-6}
\multicolumn{2}{c}{\multirow{2}{*}{}} & \multicolumn{2}{|c||}{{\bf General Model}} &
\multicolumn{2}{|c|}{{\bf Monotone Model}} \\
\cline{3-6}
\multicolumn{2}{c|}{} & {\bf Free Sides} & {\bf Fixed Sides} & {\bf Free Sides} & {\bf Fixed Sides} \\
\hline
\multirow{2}{*}{\begin{minipage}{2cm}\bf Row/Column Order\end{minipage}} & {\bf Free} &
\textcolor{red}{$\mathds{NPC}$}~[Th.~\ref{th:free-ordering-free-sides}] &
\textcolor{red}{$\mathds{NPC}$}~[Th.~\ref{th:free-ordering-fixed-sides}] &
\textcolor{red}{$\mathds{NPC}$}~[Th.~\ref{th:monotone-free-ordering-free-sides}]
&
\textcolor{red}{$\mathds{NPC}$}~[Th.~\ref{th:monotone-free-ordering-fixed-sides}
]\\
\cline{2-6}
& {\bf Fixed} &
\textcolor{red}{$\mathds{NPC}$}~[Th.~\ref{th:fixed-ordering-free-sides}] &
\textcolor{green}{$\mathds{P}$}~[Th.~\ref{th:fixed-ordering-fixed-sides}] &
\textcolor{green}{$\mathds{P}$}~[Th.~\ref{th:monotone-fixed-ordering-free-sides}
]$\dag$ &
\textcolor{green}{$\mathds{P}$}~[Th.~\ref{th:monotone-fixed-order-and-side-polynomial}]\\
\hline
\cline{3-6}
\cline{3-6}
\end{tabular}
\medskip
\caption{Complexity results for {\sc \nt Planarity}. The result marked $\dag$ assumes that the number of clusters is constant.}
\label{tab:complexity}
\end{table}

Motivated by such complexity results, in Section~\ref{se:monotone} we study a
more constrained model
that is 
still useful for practical applications.
A {\em monotone \nt representation} is an \nt representation in which
the matrices have prescribed positions and the inter-cluster edges are represented by $xy$-monotone curves inside the convex hull of their incident matrices. We require that this convex hull, which might contain many edges, 
does not intersect any other matrix.
We study this model for two reasons.  First, in most of (although not in all) the
available examples of \nt representations the inter-cluster edges are represented by $xy$-monotone curves (see, e.g., the NodeTrix clips and prototype available online~\cite{nodetrix-demo}). Second, we are interested in supporting a visualization system where the position of the matrices is decided by the user and the inter-cluster edges are automatically drawn with ``few'' crossings. Therefore, the crossings between inter-cluster edges not incident to a common matrix are somehow unavoidable, as they depend on the matrix positions selected by the users, and we are only interested in reducing the number of \emph{local} crossings, that are the crossings between pairs of edges incident to the same matrix.

We say that an \nt representation is {\em locally planar} if no two inter-cluster edges incident to the same matrix cross. While testing if a flat clustered graph admits a monotone \nt locally planar representation is \NPC even if the sides are fixed (see Table~\ref{tab:complexity}), the problem becomes polynomial-time solvable in the reasonable scenario in which the number of matrices is constant, the order of the rows and columns is fixed, and the sides of the matrices to which the inter-cluster edges attach is variable.

Building on the insights for the last result, we developed a library (Section~\ref{se:editor}) for \nt representations (a demo is available online~\cite{giordano-demo}). The adopted techniques allow the user to move the matrices around while the layout of the inter-cluster edges is automatically recomputed; this happens without any slowdown of the interaction.

Conclusions and open problems are discussed in Section~\ref{se:conclusions}
where {\sc \nt Planarity} is related to graph drawing problems of theoretical
interest.

Before proceeding to prove our results, we establish formal definitions and notation. An {\nt representation} consists of:\begin{enumerate}
\item A {\em row-column order} $\sigma_i$ for each cluster $V_i$, that is, a bijection $\sigma_i: V_i \leftrightarrow \{1,\dots,|V_i|\}$.
\item A {\em side assignment} $s_i$ for each inter-cluster edge incident to $V_i$, that is, an injective mapping $s_i: \bigcup_{j\neq i}{E_{i,j}} \rightarrow \{\textrm{\sc{t}}, \textrm{\sc{b}}, \textrm{\sc{l}}, \textrm{\sc{r}}\}$, where $E_{i,j}$ is the set of inter-cluster edges between the clusters $V_i$ and $V_j$ ($V_i$ and $V_j$ are \emph{adjacent} if $E_{i,j}\neq \emptyset$).
%
\item A {\em matrix} $M_i$ for each cluster $V_i$, that is, a representation of $V_i$ as a symmetric adjacency matrix such that: 
\begin{enumerate}
\item the boundary of $M_i$ is a square $Q_i$ with sides parallel to the coordinate axes; let $\min_x(Q_i)$ be the minimum $x$-coordinate of a point on $Q_i$; $\min_y(Q_i)$, $\max_x(Q_i)$, and $\max_y(Q_i)$ are defined analogously; 
\item the left-to-right order of the columns and the top-to-bottom order of the rows in $M_i$ is $\sigma_i$; and 
\item any two distinct matrices are disjoint; if $V_i$ has only one vertex, we often talk about the matrix representing that vertex, rather than the matrix representing $V_i$.
\end{enumerate} 
\item An {\em edge drawing} for each inter-cluster edge $e=(u,v)$ with $u \in V_i$ and $v \in V_j$, that is, a representation of $e$ as a Jordan curve between two points $p_u$ and $p_v$ defined as follows. Let $m^u_{\textrm{\sc{t}}}$ be the mid-point of the line segment that is the intersection of the top side of $Q_i$ with the column associated to $u$ in $M_i$; points $m^u_{\textrm{\sc{b}}}$, $m^u_{\textrm{\sc{l}}}$, and $m^u_{\textrm{\sc{r}}}$ are defined analogously. Then $p_u$ coincides with $m^u_{\textrm{\sc{t}}}$, $m^u_{\textrm{\sc{b}}}$, $m^u_{\textrm{\sc{l}}}$, or $m^u_{\textrm{\sc{r}}}$ if $s_i(e)=\textrm{\sc{t}}$, $s_i(e)=\textrm{\sc{b}}$, $s_i(e)=\textrm{\sc{l}}$, or $s_i(e)=\textrm{\sc{r}}$, respectively. Point $p_v$ is defined analogously. 
\end{enumerate}

\remove{


Hybrid space-filling and
node-link layouts~\cite{imms-hsffd-09,sgj-ogal-93,zmc-ehctn-05}.

Hybrid orthogonal and circular drawings for one-to-may matched
graphs~\cite{ddlp-vaotm-10}.

A comparison between advantages of matrix-based representations versus node-link ones is
described in~\cite{gfc-orgunl-05}.

\textsc{NodeTrix} combines node-link and matrix-based representations in the
same layout~\cite{hfm-nhvsn-07}. Clusters are defined manually by the user,
interacting with the drawing.

While the techniques described in~\cite{dbf-gb-05} can be used for searching 
orderings of vertices that clarify matrix representations, edges that have a
node-link representation are usually subject to many crossings. In order to
produce more readable drawings the model of~\cite{hfm-nhvsn-07} was modified
in~\cite{hbf-ircsn-08} by allowing nodes to have duplicates in different
matrices. The authors of~\cite{bbdlpp-lbars-11}, instead, propose to compute
clusters automatically in order to guarantee both dense intra-cluster graphs and
a planar inter-cluster graph. 

Hybrid matrix and anchored drawings for semi-bipartite graphs~\cite{mz-dsgam-11}.

\textsc{TreeMatrix}: Hybrid visualization technique for clustered graphs that
combines the use of adjacency matrices, node-link and arc diagrams to show the
graph, and also combines the use of nested inclusion and icicle diagrams to show
the hierarchical clustering~\cite{rmf-thvcg-12}.

\textsc{TopoLayout} is a multilevel graph drawing algorithm that uses different
node-link layouts for representing distinct portions of a network on the basis
of their topological properties~\cite{ama-tmglt-07}. 

\textsc{Grouse} takes as input a graph and a cluster hierarchy and allows the
user to interactively explore it, visualizing each cluster with the preferred
layout algorithm or with the one selected by
\textsc{TopoLayout}~\cite{ama-gfbsg-07}. 

\textsc{GrouseFlocks} extends \textsc{Grouse} by allowing the user to
automatically create cluster hierarchies based on node attributes or user
interactions~\cite{ama-gsegh-08}.

}

\section{Testing NodeTrix Planarity}\label{se:complexity}

In this section we study the time complexity of testing {\sc NodeTrix Planarity} for a flat clustered graph. We start with the following.

\begin{theorem}\label{th:free-ordering-free-sides}
{\sc NodeTrix Planarity} is \NPC even if at most three clusters contain more than one vertex.
\end{theorem}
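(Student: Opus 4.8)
The plan is to prove membership in \NP first, and then to prove \NPHN by a reduction from an \NPH ordering problem.

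For membership in \NP, observe that the planarity of an \nt representation is a purely combinatorial property: it is determined by the row-column orders $\sigma_i$, by the side assignments $s_i$, and by the rotation system they induce at the matrices, not by the precise coordinates of the squares $Q_i$ or by the precise curves. Concretely, replace each matrix $M_i$ by a wheel whose rim $C_i$ carries the endpoints of the inter-cluster edges incident to $V_i$ in the cyclic order forced by $\sigma_i$ and $s_i$ --- along the top in column order, down the right in row order, back along the bottom in reverse column order, and up the left in reverse row order --- and whose hub pins this cyclic order down. The clustered graph admits a planar \nt representation realizing $(\sigma_i,s_i)_i$ if and only if the resulting graph is planar: from a planar embedding one recovers a representation by deleting each hub, fattening each rim $C_i$ into a square, and drawing the adjacency matrix of $V_i$ in the freed face. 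Since planarity testing is linear time, guessing the $\sigma_i$'s and $s_i$'s puts {\sc \nt Planarity} in \NP.

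For \NPHN I would reduce from \betp --- given a ground set and a family of triples, decide whether there is a linear order placing the middle element of each triple between the other two --- which is \NPC; it may be convenient to factor the reduction through an auxiliary ``matrix-and-links'' problem (\pname) that isolates the combinatorial core. From a \betp instance I build a flat clustered graph with exactly three non-singleton clusters $V_1,V_2,V_3$, each with one row/column per ground-set element plus a few auxiliary ones, and polynomially many singleton clusters. The order $\sigma_1$ is meant to encode the sought linear order; the other two matrices, their incident edges, and the singletons form a rigid \emph{frame} whose job is threefold: (i)~force $\sigma_2$ and $\sigma_3$ to be consistent copies (or mirror images) of $\sigma_1$; (ii)~force all but a few inter-cluster edges to exit the matrices at prescribed sides, so that the only genuine remaining degree of freedom is $\sigma_1$; and (iii)~leave, between consecutive elements along the frame, ``channels'' through which the constraint gadgets must be routed. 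Each triple $(a,b,c)$ is then handled by a constant-size gadget of singleton vertices attached to the rows/columns of $a$, $b$, $c$, designed so that it admits a crossing-free routing exactly when $b$ lies between $a$ and $c$ in $\sigma_1$. A planar \nt representation of the whole construction then exists if and only if the \betp instance is feasible, and the construction is plainly polynomial.

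The heart of the argument --- and the step I expect to be the main obstacle --- is taming the \emph{free sides}. Since each inter-cluster edge may leave a matrix from any of its four sides, the cyclic order of edges around a matrix is only loosely connected to $\sigma_i$, and a careless construction admits spurious planar representations that cheat by sending an edge out of an unintended side or around the frame through the outer face. One therefore has to prove that the frame is rigid: in every planar \nt representation the heavy edge bundles between $V_1$, $V_2$, $V_3$ leave fixed sides up to one global reflection, this fixes the mutual rotation of the three matrices, and no constraint gadget can evade its betweenness test by re-routing. Everything else --- \NP membership, polynomial size, and the behaviour of a single isolated gadget --- is routine once this rigidity is in place.
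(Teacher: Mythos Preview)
Your \NP-membership argument is fine and matches the paper's: the paper also guesses $(\sigma_i,s_i)$ and then tests planarity (it invokes a constrained-planarity result rather than your wheel encoding, but the two are equivalent).

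For \NPHN the paper takes a genuinely different route: it reduces from \textsc{Partitioned $3$-Page Book Embedding}, not from \betp. This choice is not incidental. In the paper's construction the three big matrices $M_1,M_2,M_3$ each contain a copy of $V$ plus four ``corner'' vertices; a bounding cycle $D$ of singletons together with corner edges pins the four corner vertices to the extremes of each $\sigma_i$ and forces, for every $M_i$, three of the four sides to be completely occupied by side-filling and order-preserving edges. The fourth side of $M_i$ is then the unique place where the paths encoding the edges of page $E_i$ can land, so ``three pages $\leftrightarrow$ three matrices, one free side each'' is the structural reason three non-singleton clusters arise. The order-preserving edges between $M_i$ and $M_{i+1}$ synchronise the three row-column orders, and the hard work (Claims~1--3 and Lemmata~1--2 in the paper) is exactly the rigidity argument you flag: showing that in \emph{any} planar representation the corner edges hit the intended sides, the corner vertices sit at the extremes, and consequently all equivalence edges for page $i$ are forced to the same side of $Q_i$.

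Your \betp plan is plausible but has a real gap beyond ``rigidity is hard'': you have not explained why three big matrices are the right number, nor what occupies the other three sides of the matrix that hosts the betweenness gadgets. With free sides, distinct triples' gadgets can a priori attach to different sides of the same matrix, and since each side behaves like an independent page this gives the instance strictly more freedom than the linear order you want to encode. The paper neutralises this by filling three sides per matrix and spending one matrix per page; a \betp reduction needs an analogous mechanism, and your frame description (``force all but a few inter-cluster edges to exit at prescribed sides'') asserts the conclusion without supplying it. Until you specify the corner/side-filling machinery and prove the analogue of the paper's Claim~3, the reduction is not established. (As a data point: the paper \emph{does} use \betp, but for the fixed-side variant with two clusters, where this difficulty disappears.)
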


\remove{ 
\begin{sketch}
Lemma~\ref{le:np} will prove that {\sc \nt Planarity} is in $\mathbb{NP}$. For the $\mathbb{NP}$-hardness we give a reduction from the $\mathbb{NP}$-complete problem {\sc Partitioned $3$-Page Book Embedding}~\cite{adn-aspbep-15} that, given a graph $(V, E= E_1\cup E_2\cup E_3)$, asks whether a total ordering $\cal O$ of $V$ exists such that no two edges $e$ and $e'$ in the same set $E_i$ have alternating end-vertices in $\cal O$. We construct an instance $(V',E',\mathcal{C}')$ of {\sc \nt Planarity} from $(V, E = E_1\cup E_2\cup E_3)$ as follows; see Fig.~\ref{fi:hardness-variableorder-short}.


The instance $(V',E',\mathcal{C}')$ has a cycle $D$ composed of vertices $u_l$, $t^i_j$, $u_r$, $b^i_j$, and $u^i_b$, where $i=1,\dots,3$ and $j=1,\dots,7$ (each in a distinct cluster containing that vertex only) and of inter-cluster edges called {\em bounding edges}. The instance contains three ``big'' clusters $V''_i=V'_i \cup \{x_i,y_i,w_i,z_i\}$ with $i=1,2,3$, where $V'_i$ is in bijection with $V$; these are the only clusters with more than one vertex. Any two vertices, one in $V'_i$ and one in $V'_{i+1}$, that are in bijection (via a vertex in $V$) are connected by an {\em order-preserving} edge. Further, the vertices $x_i,y_i,w_i,z_i$ of $V''_i$ are connected to the vertices in $D$ via {\em corner edges}, and {\em side-filling edges} connect $u_l$ with every vertex in $V'_1$, $u_r$ with every vertex in $V'_3$, and $u^i_b$ with every vertex in $V'_i$. Finally, $(V',E',\mathcal{C}')$ contains paths corresponding to the edges in $E$; namely, for every $e=(r,s)\in E_i$, $(V',E',\mathcal{C}')$ contains a cluster $\{u'_e\}$ and two {\em equivalence edges} $(u'_e,r'_i)$ and $(u'_e,s'_i)$, where $r'_i$ and $s'_i$ are the vertices in $V'_i$ in bijection with $r$ and $s$, respectively. The construction can be easily performed in polynomial time. We now prove the equivalence between the two instances.

  \begin{figure}[tb]
    \centering
    \subfloat[]{
      \includegraphics[height=0.24\textwidth,page=3]{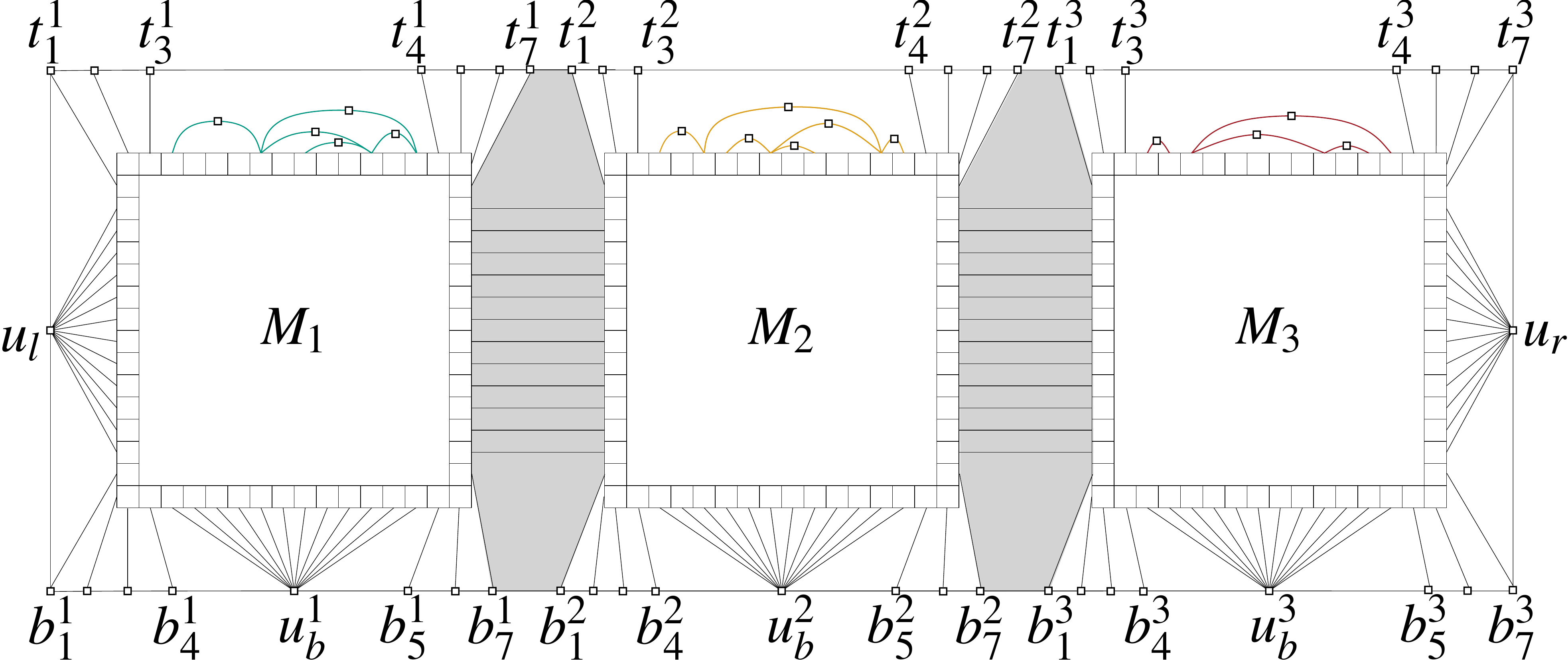}
      \label{fi:lemma-variableorder-a-short}
    }
    \subfloat[]{
      \includegraphics[height=0.29\textwidth,page=1]{img/hardness-variableorder}
      \label{fi:hardness-variableorder-b-short}
    }
    \caption{
      (a) An instance $(V, E= E_1\cup E_2 \cup E_3)$ of {\sc Partitioned $3$-Page Book Embedding} and (b) the corresponding instance $(V',E', \mathcal{C}')$ of {\sc \nt Planarity}.}
    \label{fi:hardness-variableorder-short}
  \end{figure}

For the direction ($\Longrightarrow$), consider a total order $\cal O$ of $V$ which solves instance $(V,E)$. An order $\sigma'_i$ of $V'_i$ is constructed from $\cal O$ via the bijection between $V'_i$ and $V$; then define an order $\sigma_i$ of $V''_i$ as $x_i,y_i,\sigma'_i,w_i,z_i$. Embed $D$ in the plane and embed each matrix $M_i$ representing $V''_i$ inside $D$ with row-column order $\sigma_i$. The corner, order-preserving, and side-filling edges are routed inside $D$ so that their end-vertices are not assigned to the top side of $M_i$; for example, the side-filling edges incident to $u_l$ (to $u^1_b$) are assigned to the left (resp.\ bottom) side of $M_1$ and the order-preserving edges incident to $V'_1$ are assigned to the right side of $M_1$; the order-preserving edges can be routed without crossings since $\sigma'_i$ and $\sigma'_{i+1}$ coincide (via the bijection of $V'_i$ and $V'_{i+1}$ with $V$). Finally, each path $(r'_i,u'_e,s'_i)$ corresponding to an edge $(r,s)\in E_i$ is incident to the top side of $M_i$; no two of these paths have alternating end-vertices since $\sigma'_i$ coincides with $\cal O$ (via the bijection between $V'_i$ and $V$) and since no two edges in $E_i$ have alternating end-vertices in $\cal O$. This results in a planar \nt representation of $(V',E',\mathcal{C}')$. 

The direction ($\Longleftarrow$) is more involved. Consider a planar \nt representation $\Gamma$ of $(V',E',\mathcal{C}')$. First, the matrices representing clusters not in $D$ induce a connected part of $\Gamma$, hence they are all on the same side of $D$%
, say they are all inside $D$. Second, the boundary $Q_i$ of $M_i$ and the corner edges incident to it subdivide the interior of $D$ into five regions, namely one containing $M_i$, and four incident to the sides of $Q_i$. All the vertices in $V'_i$ are incident to each of the latter four regions; this is proved by arguing that the first two and the last two vertices in the row-column order $\sigma_i$ of $M_i$ are among $\{x_i,y_i,w_i,z_i\}$, and by arguing about how the corner edges are incident to the sides of $Q_i$%
. Third, the side-filling edges incident to a same vertex in $D$ ``fill'' one of such regions and so do the order-preserving edges connecting $M_i$ with $M_{i+1}$ (or with $M_{i-1}$). Hence, all the equivalence edges are incident to the same side of $Q_i$%
. Finally, the order-preserving edges between vertices in $V'_i$ and in $V'_{i+1}$ are in the region shared by $M_i$ and $M_{i+1}$%
; these regions are gray in Fig.~\ref{fi:hardness-variableorder-short}. This implies that $\sigma'_i$ and $\sigma'_{i+1}$ are either the same or the reverse of each other, via the bijection with the vertices in $V$%
. Hence, we define an order $\cal O$ of $V$ according to the bijection with the order $\sigma'_i$ of $V'_i$; then no two edges in $E_1$, in $E_2$, or in $E_3$ have alternating end-vertices, as otherwise the corresponding paths would cross in $\Gamma$.
\end{sketch}
} 

\begin{proof}
The membership in $\mathbb{NP}$ of {\sc \nt Planarity} is proved in Lemma~\ref{le:np}. 

For the $\mathbb{NP}$-hardness we give a reduction from an $\mathbb{NP}$-complete problem called {\sc Partitioned $3$-Page Book Embedding}~\cite{adn-aspbep-15}, whose input is a graph $(V, E)$ with $E$ partitioned into three sets $E_1$, $E_2$, and $E_3$. The problem asks whether a total ordering $\cal O$ of $V$ exists such that the end-vertices of any two edges $e$ and $e'$ in the same set $E_i$ do not alternate in $\cal O$; $e$ and $e'$ {\em alternate} if an end-vertex of $e'$ is between the two end-vertices of $e$ in $\cal O$ and vice versa. 

  \begin{figure}[htb]
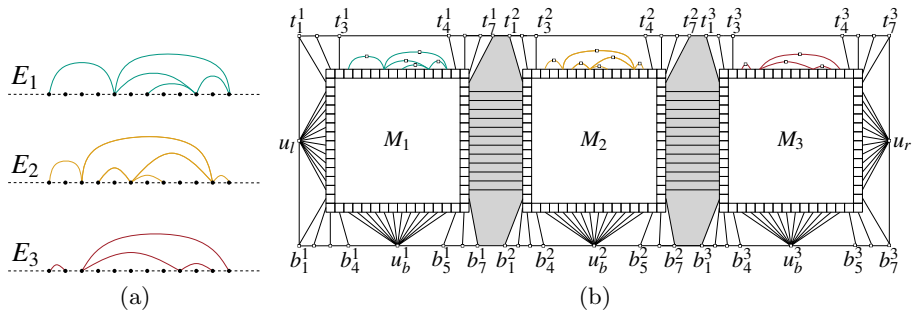

    \centering
    \subfloat[]{
      \includegraphics[height=0.24\textwidth,page=3]{img/hardness-variableorder}
      \label{fi:lemma-variableorder-a}
    }
    \subfloat[]{
      \includegraphics[height=0.29\textwidth,page=1]{img/hardness-variableorder}
      \label{fi:hardness-variableorder-b}
    }
    \caption{
      (a) An instance $(V, E= E_1\cup E_2 \cup E_3)$ of {\sc Partitioned $3$-Page Book Embedding} and (b) the corresponding instance $(V',E', \mathcal{C}')$ of {\sc \nt Planarity}. The gray regions are $R^{1,2}$ and $R^{2,3}$.}
    \label{fi:hardness-variableorder}
  \end{figure}

We show how to construct in polynomial time an instance $(V',E',\mathcal{C}')$ of {\sc \nt Planarity} starting from an instance $(V, E = E_1\cup E_2\cup E_3)$ of {\sc Partitioned $3$-Page Book Embedding}. Refer to Fig.~\ref{fi:hardness-variableorder}. We define $\mathcal{C}'$ (and hence implicitly $V'$) as follows. 

\begin{itemize}
\item Set $\mathcal{C}'$ contains three clusters $V''_i=V'_i \cup \{x_i,y_i,w_i,z_i\}$ with $i=1,2,3$, where sets $V'_i$ are in bijection with $V$ (and hence with each other); we denote by $v'_i$ the vertex in $V'_i$ that is in bijection with a vertex $v\in V$; 
\item for every edge $e\in E$, set $\mathcal{C}'$ contains a cluster $\{u'_e\}$; and
\item set $\mathcal{C}'$ contains clusters $\{u^1_b\}$, $\{u^2_b\}$, $\{u^3_b\}$, $\{u_l\}$, $\{u_r\}$ and, for $i=1,\dots,7$, clusters $\{t^1_i\}$, $\{t^2_i\}$, $\{t^3_i\}$, $\{b^1_i\}$, $\{b^2_i\}$, and $\{b^3_i\}$. 
\end{itemize}

The set $E'$ contains an arbitrary set of intra-cluster edges and the following inter-cluster edges.

\begin{itemize}
\item {\em equivalence edges}: edges $(u'_e,r'_i)$ and $(u'_e,s'_i)$, for every edge $e=(r,s)\in E_i$;
\item {\em bounding edges}: the edges of cycle $D=(u_l,t^1_1,\dots,t^1_7,t^2_1,\dots,t^2_7,t^3_1,\dots,t^3_7,u_r,\\b^3_7,b^3_6,b^3_5,u^3_b,b^3_4,\dots,b^3_1,b^2_7,b^2_6,b^2_5,u^2_b,b^2_4,\dots,b^2_1,b^1_7,b^1_6,b^1_5,u^1_b,b^1_4,\dots,b^1_1,u_l)$;
\item {\em order-preserving edges}: edges $(v'_1,v'_2)$ and $(v'_2,v'_3)$, for every vertex $v\in V$;
\item {\em side-filling edges}: edges between $u_l$ and every vertex in $V'_1$, edges between $u_r$ and every vertex in $V'_3$, and edges between $u^i_b$ and every vertex in $V'_i$ for $i=1,2,3$; and
\item {\em corner edges}: for $i=1,2,3$, edges $(t^i_1,y_i)$, $(t^i_2,x_i)$, $(t^i_3,y_i)$, $(t^i_4,w_i)$, $(t^i_5,z_i)$, $(t^i_6,x_i)$, $(t^i_7,y_i)$, $(b^i_1,w_i)$, $(b^i_2,z_i)$, $(b^i_3,x_i)$, $(b^i_4,y_i)$, $(b^i_5,w_i)$, $(b^i_6,z_i)$, and $(b^i_7,w_i)$.
\end{itemize}

The described construction can easily be performed in polynomial time. We now prove the equivalence between the instance $(V, E)$ of {\sc Partitioned $3$-Page Book Embedding} and the instance $(V',E',\mathcal{C}')$ of {\sc \nt Planarity}. 

The direction ($\Longrightarrow$) is easy to prove. Suppose that $(V, E)$ is a positive instance of {\sc Partitioned $3$-Page Book Embedding} and let $\cal O$ be a total ordering of $V$ such that the end-vertices of any two edges $e$ and $e'$ in the same set $E_i$ do not alternate in $\cal O$. For $i=1,2,3$, let $\sigma_i$ be the total order of the vertices in $V''_i$ such that $x_i$, $y_i$, $w_i$, and $z_i$ are the first, second, last but one, and last vertex in $\sigma_i$, respectively, and the vertices in $V'_i$ (which all follow $x_i$ and $y_i$ and precede $w_i$ and $z_i$ in $\sigma_i$) are ordered so that, for every $r'_i,s'_i\in V'_i$, vertex $r'_i$ precedes vertex $s'_i$ in $\sigma_i$ if and only if $r$ precedes $s$ in $\cal O$. For $i=1,2,3$, we represent $V''_i$ as a symmetric adjacency matrix $M_i$ whose left-to-right order of the columns is $\sigma_i$; every other cluster in $\mathcal{C}'$ consists of a single vertex and we arbitrarily define a matrix for it. We embed these matrices in the plane (except for the matrices representing vertices $u'_e$ with $e\in E$, which will be embedded later) so that no two of them overlap. The side assignment for the inter-cluster edges is as follows.

First, we assign: 
\begin{itemize}
\item edges $(t^i_j,t^i_{j+1})$, $(t^i_7,t^{i+1}_1)$, $(b^i_j,b^i_{j+1})$, $(b^i_4,u^i_b)$, $(u^i_b,b^i_5)$, and $(b^i_7,b^{i+1}_1)$ to the right (left) side of the matrix representing the first (resp.\ second) vertex in the pair; 
\item edges $(b^1_1,u_l)$, $(u_l,t^1_1)$, $(b^3_7,u_r)$, and $(u_r,t^3_7)$ to the top (bottom) side of the matrix representing the first (resp.\ second) vertex in the pair;
\item every other edge incident to $t^i_j$ to the bottom side of the matrix representing $t^i_j$; 
\item every other edge incident to $b^i_j$ to the top side of the matrix representing $b^i_j$; 
\item every other edge incident to $u^i_b$ to the top side of the matrix representing $u^i_b$; 
\item every other edge incident to $u_l$ to the right side of the matrix representing $u_l$; and 
\item every other edge incident to $u_r$ to the left side of the matrix representing $u_r$.
\end{itemize}

By suitably routing the bounding edges, we get that $M_1$, $M_2$, and $M_3$ we ensure that the cycle $D$ bounds the outer face of the representation, with $M_1$, $M_2$, and $M_3$ inside it. 

Second, we assign the side-filling and order-preserving edges so that none of them is assigned to the top side of a matrix $M_i$. To achieve this, we assign: the side-filling edges incident to $u_l$ and to $u_r$ to the left side of $M_1$ and to the right side of $M_3$, respectively; the side-filling edges incident to $u^i_b$ to the bottom side of $M_i$; and the order-preserving edges between vertices in $V'_i$ and vertices in $V'_{i+1}$ to the right side of $M_i$ and to the left side of $M_{i+1}$. Route all these edges inside $D$. In particular, the order-preserving edges can be routed without crossings since the top-to-bottom order of the vertices in $V'_i$ along the right side of $M_i$ is $\sigma_i-\{x_i,y_i,w_i,z_i\}$, since the top-to-bottom order of the vertices in $V'_{i+1}$ along the left side of $M_{i+1}$ is $\sigma_{i+1}-\{x_{i+1},y_{i+1},w_{i+1},z_{i+1}\}$, and since $\sigma_i-\{x_i,y_i,w_i,z_i\}$ and $\sigma_{i+1}-\{x_{i+1},y_{i+1},w_{i+1},z_{i+1}\}$ are the same ordering according to the bijection between $V'_i$ and $V'_{i+1}$. 

Third, we assign the corner edges $(t^i_2,x_i)$, $(t^i_3,y_i)$, $(t^i_4,w_i)$, and $(t^i_5,z_i)$ to the top side of $M_i$, $(t^i_6,x_i)$, $(t^i_7,y_i)$, and $(b^i_7,w_i)$ to the right side of $M_i$, $(b^i_3,x_i)$, $(b^i_4,y_i)$, $(b^i_5,w_i)$, and $(b^i_6,z_i)$ to the bottom side of $M_i$, and $(t^i_1,y_i)$, $(b^i_1,w_i)$, and $(b^i_2,z_i)$ to the left side of $M_i$. The corner edges can routed inside $D$ without crossing the side-filling and order-preserving edges since by construction their end-vertices in $V''_i$ are the first, second, last but one, and last vertex in $\sigma_i$.

Fourth, embed the matrices representing $u'_e$ with $e\in E_i$ in the region $R_i$ delimited by edges $(t^3_i,t^4_i)$, $(t^3_i,y_i)$, and $(t^4_i,w_i)$, and by the top side of $M_i$. Assign the equivalence edges $(u'_e,r'_i)$ and $(u'_e,s'_i)$ to the top side of $M_i$ and to any sides of the matrix representing $u'_e$; route these edges in $R_i$. This can be done without introducing crossings, since the left-to-right order $\sigma_i-\{x_i,y_i,w_i,z_i\}$ of the vertices in $V'_i$ along the top side of $M_i$ is the same as the order $\cal O$ of the vertices in $V$ (according to the bijection between $V'_i$ and $V$), hence no two pairs of edges $((p'_i,u'_e),(u'_e,q'_i))$ and $((r'_i,u'_f),(u'_f,s'_i))$ have alternating end-points along the top side of $M_i$ given that edges $(p,q)$ and $(r,s)$ do not have alternating end-vertices in $\cal O$.

The proof of the direction ($\Longleftarrow$) is more involved. Given a solution $\Gamma$ for the instance $(V',E',\mathcal{C}')$ of {\sc \nt Planarity}, we need to define an ordering $\cal O$ for the vertex set $V$ in the instance $(V, E)$ of {\sc Partitioned $3$-Page Book Embedding}. In order to do that, we prove some claims and lemmata about the structure of $\Gamma$. For $i=1,2,3$, let $M_i$ be the matrix representing $V''_i$ in $\Gamma$, let $Q_i$ be its boundary, and let $\sigma_i$ be the left-to-right order of the vertices in $V''_i$ along the top side of $Q_i$. We have the following.

\begin{claimx} \label{cl:matrices-arrangement-D}
The matrices $M_1$, $M_2$, and $M_3$, and the matrices representing vertices $u'_e$ for all $e\in E$ are  on the same side of $D$ in $\Gamma$. 
\end{claimx}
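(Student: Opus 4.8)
The plan is to exploit the fact that the clusters not on the cycle $D$ — namely $V''_1$, $V''_2$, $V''_3$ and the singleton clusters $\{u'_e\}$ for $e\in E$ — together with all the inter-cluster edges among them form a \emph{connected} subgraph $H$ of the instance graph. First I would verify this connectivity directly from the construction: each singleton $\{u'_e\}$ with $e=(r,s)\in E_i$ is joined by equivalence edges to $r'_i$ and $s'_i$ in $V''_i$, so every $u'_e$-cluster is adjacent to one of $V''_1,V''_2,V''_3$; and the $V''_i$'s are linked to one another by the order-preserving edges $(v'_1,v'_2)$ and $(v'_2,v'_3)$ for any $v\in V$. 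Hence $H$ is connected (assuming $V\neq\emptyset$, which we may without loss of generality; otherwise the claim is vacuous).

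Next I would argue that in the planar \nt representation $\Gamma$, the cycle $D$ — drawn as a closed Jordan curve formed by the matrices of its singleton clusters and the bounding edges between them — separates the plane into exactly two regions, its interior and its exterior. None of the matrices or edges of $H$ touches $D$: the only inter-cluster edges incident to vertices of $D$ are the bounding edges themselves (forming $D$), the side-filling edges, and the corner edges, and none of these has an endpoint in $H$ (side-filling edges connect $D$-vertices to $V'_i\subseteq V''_i$ — wait, these \emph{do} meet $H$; so this needs care). Let me restate: the edges joining $D$ to the rest are the side-filling and corner edges; the corner edges end at $\{x_i,y_i,w_i,z_i\}\subseteq V''_i$ and the side-filling edges end at $V'_i\subseteq V''_i$, so they meet $H$ only at the matrices $M_1,M_2,M_3$, not at any $\{u'_e\}$. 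The key point is simply that $H$ is \emph{connected} and disjoint from the curve $D$ as a set of points in the interior of the drawing (no matrix of $H$ intersects $D$ since matrices are pairwise disjoint, and no edge of $H$ crosses a matrix of $D$ or a bounding edge by planarity of $\Gamma$). A connected point set disjoint from a Jordan curve lies entirely in one of its two complementary regions; applying this to $H$ gives that all of $M_1,M_2,M_3$ and all matrices $M(u'_e)$ lie on the same side of $D$, proving the claim.

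The main obstacle I anticipate is making the ``connected point set lies in one region'' argument fully rigorous in the \nt setting: one must check that $D$ really is realized in $\Gamma$ as a simple closed curve (it is, because $D$ is a cycle through distinct singleton clusters, the matrices are pairwise disjoint, and the bounding edges are non-crossing Jordan arcs meeting only at shared endpoints), and that each connected piece of $H$ — each matrix together with its incident $H$-edges — is a connected subset of the plane avoiding $D$, so that it lies in a single face of the plane subdivided by $D$; then connectivity of $H$ (established above) propagates the ``same region'' conclusion across all its pieces. I would phrase the final step as: pick any matrix, say $M_1$; it lies in one region of $\mathbb{R}^2\setminus D$, and any other matrix of $H$ is connected to $M_1$ through a sequence of edges and matrices of $H$, none of which meets $D$, so it lies in the same region. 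This finishes the proof.
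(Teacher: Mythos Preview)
Your proposal is correct and follows essentially the same approach as the paper: both argue that the non-$D$ clusters are connected among themselves via the order-preserving and equivalence edges, and that a connected piece of the drawing disjoint from $D$ must lie entirely on one side of it (the paper phrases this last step as ``otherwise some edge or matrix would cross $D$, contradicting planarity''). Your explicit Jordan-curve framing and the $V\neq\emptyset$ caveat are more detailed than the paper's version, but the underlying idea is identical.
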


\begin{proof}
The statement follows from the fact that the clusters $V''_i$ for $i=1,2,3$ and $\{u'_e\}$ for $e\in E$ are connected by inter-cluster edges that are not bounding edges. Indeed, the clusters $V''_1$, $V''_2$, and $V''_3$ are connected by order-preserving edges and each cluster $\{u'_e\}$ is connected to a cluster $V''_i$ by two equivalence edges. Thus, if the matrices representing two of these clusters were on opposite sides of $D$ in $\Gamma$, there would exist: (i) an order-preserving or an equivalence edge crossing a matrix representing a vertex in $D$ or crossing a bounding edge, or (ii) a matrix among $M_1$, $M_2$, and $M_3$ or a matrix representing a vertex $u'_e$ with $e\in E$ overlapping a matrix representing a vertex in $D$ or crossing a bounding edge. However, this would
 contradict the planarity of $\Gamma$.
\end{proof}

We henceforth assume that $M_1$, $M_2$, and $M_3$, as well as the matrices representing the vertices $u'_e$ for all $e\in E$, are {\em inside} $D$ in $\Gamma$. Indeed, by Claim~\ref{cl:matrices-arrangement-D}, these matrices are on the same side of $D$ in $\Gamma$. If they are outside $D$, then the matrices representing vertices in $D$ are all incident to an internal face $f$ of $\Gamma$. Hence, changing the outer face of $\Gamma$ to $f$ ensures that $M_1$, $M_2$, and $M_3$ and the matrices representing the vertices $u'_e$ for all $e\in E$ are inside $D$ in $\Gamma$. This change of the outer face can be accomplished by rerouting the inter-cluster edges that are dual to a simple path in the dual of $\Gamma$ from $f$ to the outer face.

In the following we also assume that $t^1_1$, $t^1_2$, and $t^1_3$ are encountered in this order when traversing $D$ in clockwise direction. This is not a loss of generality, up to a reflection of~$\Gamma$.

Let $\Gamma''$ be the restriction of $\Gamma$ to $M_1$, $M_2$, and $M_3$, to the matrices representing the vertices $t^i_j$, $b^i_j$, $u^i_b$, $u_l$, and $u_r$, and to the bounding and corner edges of $(V',E',\mathcal{C}')$. For $i=1,2$, let $R^{i,i+1}$ be the region of the plane delimited by the edges $(t^i_7,y_i)$, $(b^i_7,w_i)$, $(t^i_7,t^{i+1}_1)$, $(b^i_7,b^{i+1}_1)$, $(t^{i+1}_1,y_{i+1})$, and $(b^{i+1}_1,w_{i+1})$, by the boundaries of $M_i$ and $M_{i+1}$, and by the boundaries of the matrices representing vertices $t^i_7$, $b^i_7$, $t^{i+1}_1$, and $b^{i+1}_1$. We have the following.

\begin{claimx} \label{cl:matrices-disjoint}
Every order-preserving edge connecting a vertex in $V'_i$ with a vertex in $V'_{i+1}$ lies inside $R^{i,i+1}$. 
\end{claimx}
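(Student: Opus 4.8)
\noindent The plan is to reduce the statement to a description of the faces of $\Gamma''$. First, I would observe that any order-preserving edge $e=(v'_i,v'_{i+1})$ is neither a bounding nor a corner edge, and that by the planarity of $\Gamma$ it neither crosses a bounding or a corner edge nor enters the interior of any matrix of $\Gamma''$; in particular it meets $M_i$ and $M_{i+1}$ only at its end-points $p_{v'_i}\in Q_i$ and $p_{v'_{i+1}}\in Q_{i+1}$. Since $\Gamma''$ is connected (each of $M_1,M_2,M_3$ is attached to $D$ by corner edges, and every other matrix of $\Gamma''$ lies on $D$) and, by Claim~\ref{cl:matrices-arrangement-D} and our choice of the outer face, the cycle $D$ bounds the outer face of $\Gamma$ and hence of $\Gamma''$, the open curve $e\setminus\{p_{v'_i},p_{v'_{i+1}}\}$ is contained in the interior of $D$ and lies inside a single bounded face $f$ of $\Gamma''$. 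Thus it suffices to prove that $R^{i,i+1}$ is a face of $\Gamma''$ and that $f=R^{i,i+1}$.

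The heart of the argument is an analysis of the embedding of $\Gamma''$. The $14$ corner edges incident to $M_i$ have one end-point in $\{x_i,y_i,w_i,z_i\}$ and the other on one of the two sub-paths $t^i_1,\dots,t^i_7$ and $b^i_1,\dots,b^i_7$ of $D$. The plan is to exploit that these are sub-paths of the cycle $D$ and that the corner-edge pattern reuses each of $x_i,y_i,w_i,z_i$ as an end-point enough times to pin down the rotation of the corner edges around $Q_i$: one derives that $\{x_i,y_i\}$ are the first two and $\{w_i,z_i\}$ the last two entries of the row-column order of $M_i$, and that $Q_i$ together with its $14$ corner edges subdivides the interior of $D$ into five regions, one containing $M_i$ and four incident respectively to the top, bottom, left, and right side of $Q_i$. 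Since every vertex of $V'_i$ lies strictly between $\{x_i,y_i\}$ and $\{w_i,z_i\}$ in the row-column order, it has on every side of $Q_i$ a boundary point lying on the boundary of the corresponding one of these four regions; hence $f$ is one of the four side-regions of $Q_i$ and, symmetrically, one of the four side-regions of $Q_{i+1}$.

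It then remains to match regions across $M_i$ and $M_{i+1}$. The bounding edges $(t^i_7,t^{i+1}_1)$ and $(b^i_7,b^{i+1}_1)$ connect the ``top'' arc of $M_i$ to that of $M_{i+1}$ and the two ``bottom'' arcs to each other; together with the corner edges $(t^i_7,y_i)$, $(b^i_7,w_i)$, $(t^{i+1}_1,y_{i+1})$, $(b^{i+1}_1,w_{i+1})$, the relevant portions of $Q_i$ and $Q_{i+1}$, and the boundaries of the matrices of $t^i_7,b^i_7,t^{i+1}_1,b^{i+1}_1$, they bound exactly the region $R^{i,i+1}$ of the definition, which is therefore a single face of $\Gamma''$, incident to the right side of $Q_i$ and to the left side of $Q_{i+1}$. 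Moreover, $R^{1,2}$ and $R^{2,3}$ are the only bounded faces of $\Gamma''$ incident to two of $M_1,M_2,M_3$: the top- and bottom-regions of $Q_i$ are incident only to $M_i$ among the big matrices, the left-region of $Q_1$ only to $M_1$, the right-region of $Q_3$ only to $M_3$, and the region containing $M_i$ is incident to no vertex of $V'_i$ on $Q_i$. Since $f$ is incident to $Q_i$ in the row or column of $v'_i\in V'_i$ and to $Q_{i+1}$ in the row or column of $v'_{i+1}\in V'_{i+1}$, it must be incident to both $M_i$ and $M_{i+1}$, so $f=R^{i,i+1}$, as desired.

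The step I expect to be the main obstacle is the embedding analysis of $\Gamma''$ in the second paragraph, and in particular excluding ``exotic'' planar embeddings — ones in which the corner edges wind around $Q_i$ in an unintended rotation, or in which a matrix $M_j$ lies inside a region bounded by corner edges of another matrix. This amounts to a finite but fiddly planar-embedding case analysis that relies on the consecutiveness of $t^i_1,\dots,t^i_7$ (and of $b^i_1,\dots,b^i_7$) along $D$ and on the repetitions in the corner-edge pattern; the rest of the proof is routine.
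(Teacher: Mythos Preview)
Your proposal is correct in outline but takes a substantially heavier route than the paper. You reduce the claim to a full face analysis of $\Gamma''$, which in turn requires pinning down the rotation of all fourteen corner edges around each $Q_i$ and locating $\{x_i,y_i\}$ and $\{w_i,z_i\}$ at the ends of the row-column order --- essentially the content of Claim~\ref{cl:corners}, which the paper proves \emph{after} this claim and for a different purpose. You correctly identify this embedding analysis as the main obstacle, and it is indeed fiddly.

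The paper sidesteps all of that with a short topological separation argument. For $i=1$ it defines $R^1$ as the minimal simple region containing $M_1$ together with the path $D^1=(w_1,b^1_7,\dots,u_l,\dots,t^1_7,y_1)$, and $R^2$ as the minimal simple region containing $M_2$, $M_3$, and the complementary path $D^2$. Since $D^1$ and $D^2$ share no vertices with each other or with the opposite matrices, $R^1$ and $R^2$ are disjoint; the only faces of $\Gamma''$ incident to both are the outer face and $R^{1,2}$. Because $M_1$ and $M_2$ lie inside $D$, neither is incident to the outer face, so any order-preserving edge between them must lie in $R^{1,2}$. No corner-edge rotation analysis is needed.

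What your approach buys is a more explicit picture of $\Gamma''$ (and it would let you merge Claims~\ref{cl:matrices-disjoint} and~\ref{cl:corners} if you wished). What the paper's approach buys is brevity: the claim follows from a Jordan-curve style separation without ever determining where $x_i,y_i,w_i,z_i$ sit in $\sigma_i$.
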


\begin{proof}
Assume that $i=1$; the discussion with $i=2$ is analogous. 

We define three regions that partition the interior of $\Gamma''$. Region $R^1$ is the minimal simple (i.e., without holes) region of the plane containing $M_1$ and containing the matrices and edges representing path $D^1=(w_1,b^1_7,b^1_6,b^1_5,u^1_b,b^1_4,b^1_3,b^1_2,$ $b^1_1,u_l,t^1_1,\dots,t^1_7,y_1)$. Region $R^2$ is the minimal simple region of the plane containing $M_2$, $M_3$, and containing the matrices and edges representing path $D^2=(y_2,t^2_1,\dots,t^2_7,t^3_1,\dots,t^3_7,u_r,b^3_7,b^3_6,b^3_5,u^3_b,b^3_4,b^3_3,b^3_2,b^3_1,b^2_7,b^2_6,b^2_5,u^2_b,b^2_4,b^2_3,b^2_2,b^2_1,w_2)$. The third region is $R^{i,i+1}$. 

Since $D^1$ and $M_1$ do not share vertices with $D^2$, $M_2$, and $M_3$, since $D^1$ and $D^2$ are both incident to the outer face of $\Gamma''$, and since $R^1$ and $R^2$ are simple, we have that $R^1$ and $R^2$ are disjoint. Further, the only faces of $\Gamma''$ regions $R^1$ and $R^2$ are both incident to (and which the order-preserving edges lie because of the planarity of $\Gamma$) are the outer face of $\Gamma''$ and $R^{i,i+1}$; however, neither $M_i$ nor $M_{i+1}$ is incident to the outer face of $\Gamma''$, since $M_i$ and $M_{i+1}$ lie inside $D$. It follows that every order-preserving edge connecting a vertex in $V'_i$ with a vertex in $V'_{i+1}$ lies inside $R^{i,i+1}$. 
\end{proof}


We now present the following claim, which argues about the incidences between the corner edges and the square $Q_i$.

  \begin{figure}[tb!]
    \centering
    \subfloat[]{
      \includegraphics[height=0.28\textwidth,page=1]{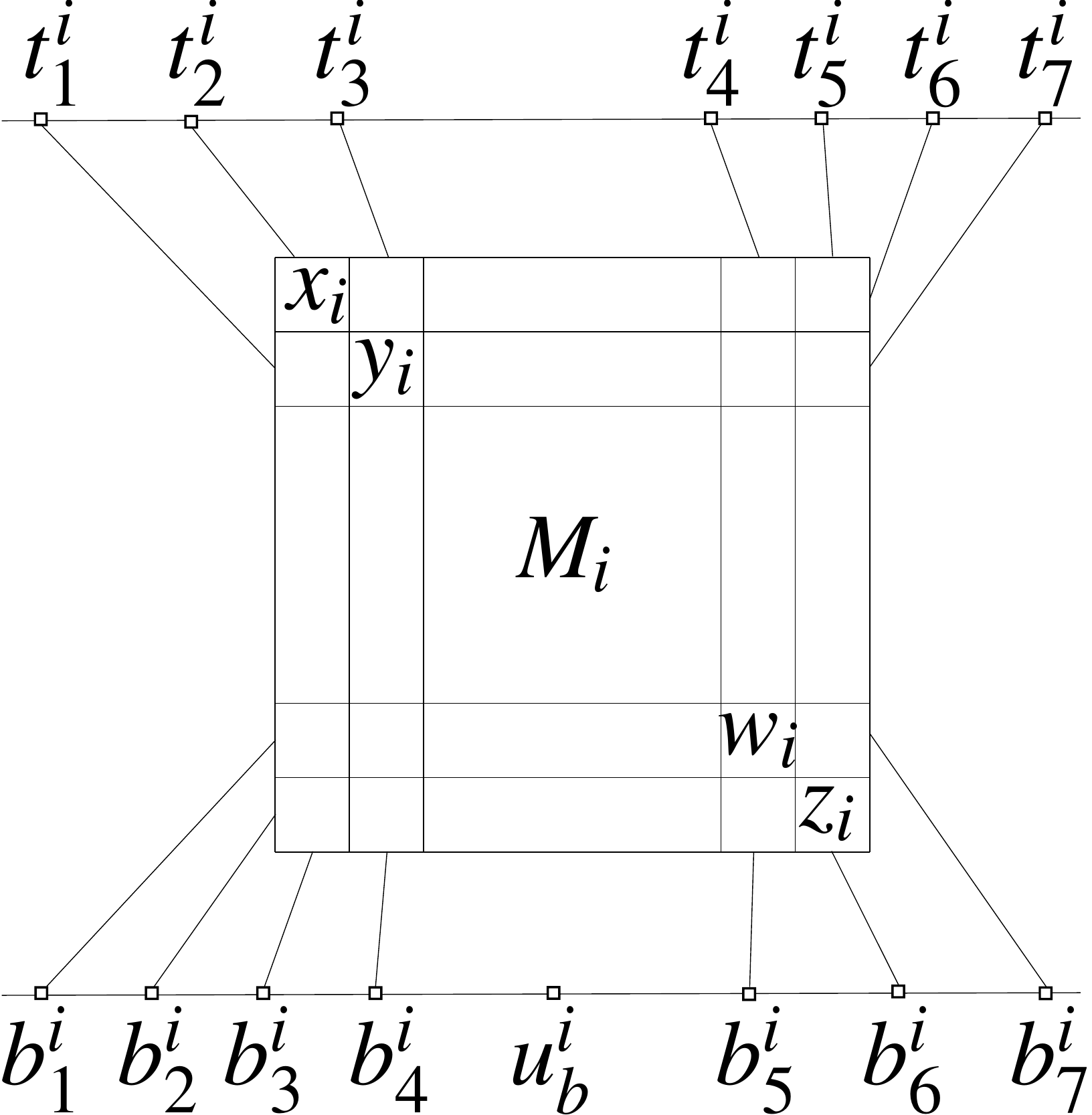}
      \label{fi:corners-sub1}
    }
    \hspace{8mm}
    \subfloat[]{
      \includegraphics[height=0.28\textwidth,page=1]{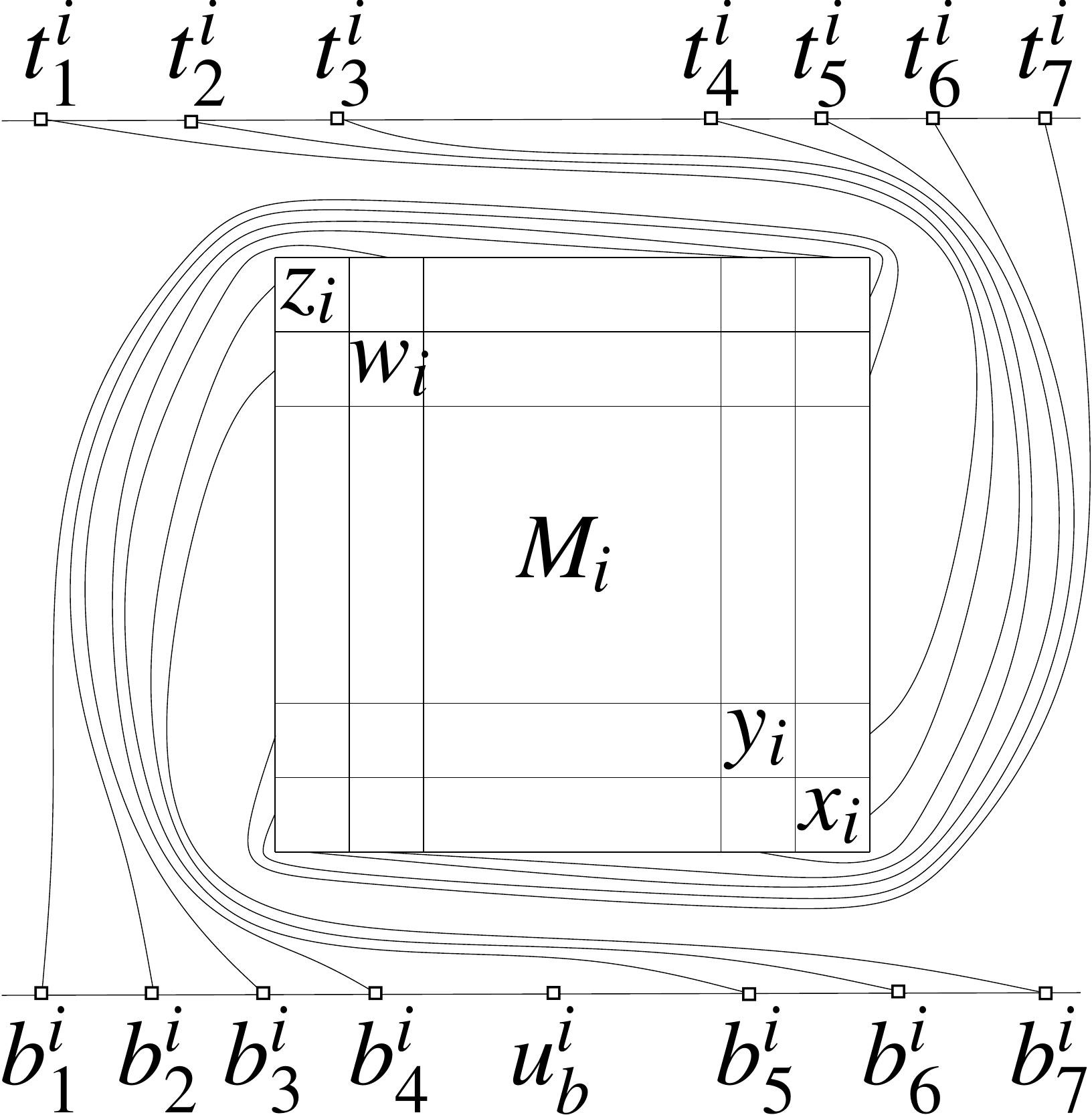}
      \label{fi:corners-sub2}
    }
    \caption{(a) Statement (1) of Claim~\ref{cl:corners}. (b) Statement (2) of Claim~\ref{cl:corners}.}
    \label{fi:corners}
  \end{figure}

\begin{claimx} \label{cl:corners}  
For each $i\in \{1,2,3\}$, one of the following statements holds true (see respectively Figs.~\ref{fi:corners-sub1} and~\ref{fi:corners-sub2}).

\begin{enumerate}
\item[(1)] Vertices $x_i$, $y_i$, $w_i$, and $z_i$ are the first, second, last but one, and last vertex in $\sigma_i$, respectively. Further, $(t^i_3,y_i)$ and $(t^i_4,w_i)$ are assigned to the top side of $Q_i$, $(t^i_7,y_i)$ and $(b^i_7,w_i)$ are assigned to the right side of $Q_i$, $(b^i_4,y_i)$ and $(b^i_5,w_i)$ are assigned to the bottom side of $Q_i$, and $(t^i_1,y_i)$ and $(b^i_1,w_i)$ are assigned to the left side of $Q_i$.
\item[(2)] Vertices $z_i$, $w_i$, $y_i$, and $x_i$ are the first, second, last but one, and last vertex in $\sigma_i$, respectively. Further, $(b^i_5,w_i)$ and $(b^i_4,y_i)$ are assigned to the top side of $Q_i$, $(b^i_1,w_i)$ and $(t^i_1,y_i)$ are assigned to the right side of $Q_i$, $(t^i_4,w_i)$ and $(t^i_3,y_i)$ are assigned to the bottom side of $Q_i$, and $(b^i_7,w_i)$ and $(t^i_7,y_i)$ are assigned to the left side of $Q_i$.
\end{enumerate}
\end{claimx}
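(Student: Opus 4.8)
\textbf{Proof plan for Claim~\ref{cl:corners}.}

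The plan is to analyze how the seven corner edges incident to each of $x_i$, $y_i$, $w_i$, $z_i$ must attach to the four sides of $Q_i$, exploiting the fact (established in the previous claims and in the routing discussion) that the corner edges, together with the bounding edges and the matrices representing the vertices $t^i_j$, $b^i_j$, lie in a fixed rotational arrangement around $Q_i$. The key structural fact I would use is that the vertices $t^i_1,\dots,t^i_7,b^i_7,\dots,b^i_1$ (or their reverse) are encountered in this cyclic order along $D$, and hence their matrices appear in this cyclic order around the boundary of the region containing $M_i$; consequently the corner edges emanate from $Q_i$ in a cyclic order that is a subsequence of $(t^i_1,t^i_2,\dots,t^i_7,b^i_7,\dots,b^i_1)$ (up to reflection, which is why the claim has two symmetric cases). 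The endpoints of these corner edges on $Q_i$ are among $x_i,y_i,w_i,z_i$, and each such vertex occupies exactly one row-segment and one column-segment on $Q_i$, i.e.\ at most two sides.

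The core counting argument goes as follows. First I would observe that by the cyclic-order constraint, the multiset of sides used by the corner edges, read cyclically, must be a cyclic sequence over $\{\textrm{\sc t},\textrm{\sc r},\textrm{\sc b},\textrm{\sc l}\}$ in which each side label forms a contiguous block (otherwise two corner edges would cross, or a corner edge would cross a bounding edge, violating planarity of $\Gamma$). The sequence of endpoint-vertices, namely $(y_i,x_i,y_i,w_i,z_i,x_i,y_i,w_i,z_i,x_i,y_i,w_i,z_i,w_i)$ read off from the list $(t^i_1,\dots,t^i_7,b^i_1,\dots,b^i_7)$ in the claim statement (matching the corner-edge definitions $(t^i_1,y_i),(t^i_2,x_i),\dots$), interacts with this block structure: because each of $x_i,y_i,w_i,z_i$ can touch at most two sides, and because consecutive corner edges with the same endpoint vertex that straddle a side boundary force that vertex's row or column to be the extreme one on that boundary, the only way to place $y_i$ and $w_i$ so that all their corner edges can be routed without crossing is for $y_i$ and $w_i$ to be, respectively, either the second-from-left and second-from-right columns (Statement (1)) or their mirror images (Statement (2)). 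Then $x_i$ and $z_i$, which must lie outside $y_i$ and $w_i$ respectively along the row/column order (again forced by the contiguous-block structure and the positions of their corner edges $(t^i_2,x_i),(t^i_6,x_i),(b^i_3,x_i)$ and $(t^i_5,z_i),(b^i_2,z_i),(b^i_6,z_i)$), must be the extreme column in each case, giving the stated first/second/last-but-one/last positions. Finally, once the positions of $x_i,y_i,w_i,z_i$ are fixed, the side assignments of the specific corner edges listed in the claim are determined: e.g.\ $(t^i_3,y_i)$ and $(t^i_4,w_i)$ must both go to the top side because they are consecutive in the cyclic order, their vertices $y_i,w_i$ are interior columns, and the edges $(t^i_2,x_i),(t^i_5,z_i)$ on either side pin down which side-block they belong to; symmetric reasoning handles the right, bottom, and left assignments.

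I expect the main obstacle to be making the ``contiguous block'' argument fully rigorous, i.e.\ precisely formalizing why the corner edges must leave $Q_i$ in a cyclic order consistent with the cyclic order of the $t^i_j,b^i_j$ along $D$, and why this, combined with the at-most-two-sides constraint on each of $x_i,y_i,w_i,z_i$, leaves exactly the two listed configurations. This requires carefully ruling out ``degenerate'' placements where, say, three of the four special vertices crowd onto one side, or where a corner edge wraps around a corner of $Q_i$ in an unexpected way; the cleanest route is probably to argue that each of the four sides of $Q_i$ is incident to one of the four regions into which the corner edges subdivide a neighborhood of $Q_i$ (as sketched in the proof of Theorem~\ref{th:free-ordering-free-sides}), to count that $14$ corner edges distributed among these four regions with the given endpoint pattern forces the $2$-$5$-$2$-$5$ (equivalently the reflected) split across the sides, and to read off the vertex positions and edge-side assignments from that split. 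The reflection ambiguity, handled by the convention that $t^1_1,t^1_2,t^1_3$ appear clockwise, accounts for the dichotomy between Statements (1) and (2) — but note the claim is stated for each $i$ independently, so I would also need to check that this reflection can be chosen consistently per index $i$, which it can, since the claim only asserts that one of the two holds for each $i$ separately.
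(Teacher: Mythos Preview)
Your proposal has two genuine gaps.

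First, your core counting premise is false. You write that ``each of $x_i,y_i,w_i,z_i$ can touch at most two sides,'' but in an \nt representation a vertex $v\in V''_i$ is associated with one row \emph{and} one column of $M_i$, so there is an attachment point for $v$ on \emph{each} of the four sides of $Q_i$ (the row contributes points on the left and right sides, the column on the top and bottom). Edges incident to $v$ may therefore be assigned to any of the four sides, and indeed the paper's proof hinges on exactly the opposite fact: it shows that the four corner edges incident to $y_i$ go to four \emph{distinct} sides (by a local contradiction argument using the edge $(t^i_2,x_i)$), and likewise for $w_i$. Your block-counting argument, built on the two-side constraint, does not get off the ground; the ``$2$-$5$-$2$-$5$ split'' you mention is also not what the construction forces (the side assignments of the six corner edges incident to $x_i$ and $z_i$ are not even uniquely determined by the claim).

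Second, you never address why the vertices of $V'_i$ (the copies of $V$) end up \emph{between} $y_i$ and $w_i$ in $\sigma_i$, which is what makes $x_i,y_i,w_i,z_i$ the first, second, last-but-one, and last vertices. Nothing in your corner-edge analysis touches $V'_i$ at all. The paper closes this gap with an ingredient you omit entirely: the \emph{side-filling edges} from $u^i_b$ to every vertex of $V'_i$. Once $(b^i_4,y_i)$ and $(b^i_5,w_i)$ are known to sit on the bottom side, the region $R_b$ they bound together with $(b^i_4,u^i_b),(u^i_b,b^i_5)$ traps all the side-filling edges, forcing every $V'_i$-vertex to appear between $y_i$ and $w_i$ in $\sigma_i$.

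The paper's actual route is more direct than a global counting argument: (i) show the four $y_i$-edges (and the four $w_i$-edges) hit four different sides; (ii) pin down \emph{which} side $(t^i_1,y_i)$ uses by a contradiction exploiting the $w_i$-edges, then propagate via the cyclic order along $D$; (iii) place $x_i$ and $z_i$ relative to $y_i,w_i$ using $(t^i_2,x_i)$ and $(b^i_6,z_i)$; (iv) use the side-filling edges to squeeze $V'_i$ in between. You would need steps (i) and (iv) in some form, and your current plan has neither.
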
   

\begin{proof}
We prove that, if $y_i$ precedes $w_i$ in $\sigma_i$, then statement (1) holds true. A similar proof shows that, if $w_i$ precedes $y_i$ in $\sigma_i$, then statement (2) holds true.

First, each of the four corner edges $(t^i_1,y_i)$, $(t^i_3,y_i)$, $(t^i_7,y_i)$, and $(b^i_4,y_i)$ incident to $y_i$ is assigned to a distinct side of $Q_i$ in $\Gamma$. Indeed assume, for a contradiction, that two of these corner edges, say $(t^i_1,y_i)$ and $(t^i_3,y_i)$, are assigned to the same side of $Q_i$, as in Fig.~\ref{fi:claim3-sub1}; this implies that the end-points of $(t^i_1,y_i)$ and $(t^i_3,y_i)$ on $Q_i$ coincide. Let $p^*$ be the end-point of these edges on $Q_i$; thus, $p^*$ is on the boundary of a row or column of $M_i$ associated to $y_i$. Let $R^*$ be the region delimited by the corner edges $(t^i_1,y_i)$ and $(t^i_3,y_i)$, by the bounding edges $(t^i_1,t^i_2)$ and $(t^i_2,t^i_3)$, and by the boundaries of the matrices representing $t^i_1$, $t^i_2$, and $t^i_3$. Suppose that $Q_i$ is not contained in $R^*$. If edge $(t^i_2,x_i)$ leaves the matrix representing $t^i_2$ outside $D$, then it crosses $D$ since $M_i$ is inside $D$, a contradiction. Otherwise, $(t^i_2,x_i)$ leaves the matrix representing $t^i_2$ inside $R^*$. Since $p^*$ is the only point on $Q_i$ incident to $R^*$ and since $p^*$ is not on the boundary of a row or column  of $M_i$ associated to $x_i$, we have that $(t^i_2,x_i)$ crosses the boundary of $R^*$, a contradiction. If $Q_i$ is contained in $R^*$, a contradiction can be derived analogously by considering the routing of edge $(t^i_6,x_i)$ rather than $(t^i_2,x_i)$.

  \begin{figure}[htb]
    \centering
    \subfloat[]{
      \includegraphics[width=0.2\textwidth]{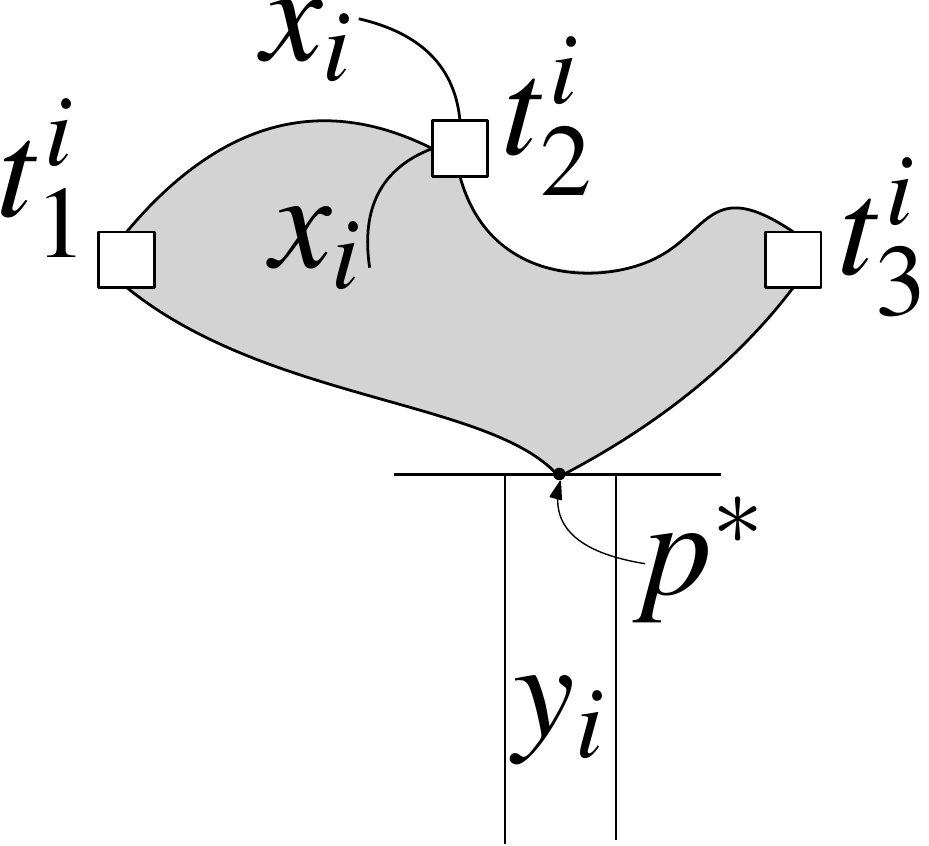}
      \label{fi:claim3-sub1}
    }
    \hspace{0.1\textwidth}
    \subfloat[]{
      \includegraphics[width=0.25\textwidth]{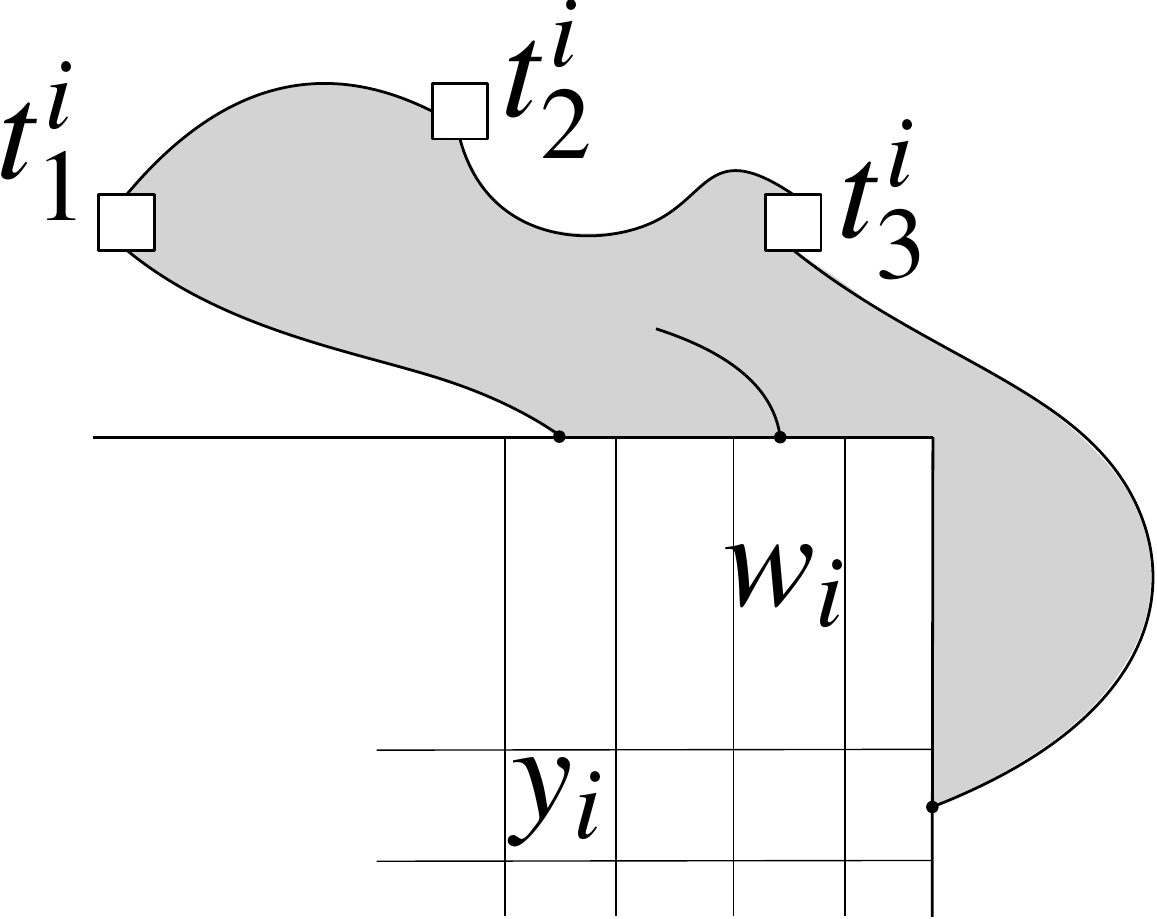}
      \label{fi:claim3-sub2}
    }
    \hspace{0.1\textwidth}
    \subfloat[]{
      \includegraphics[width=0.22\textwidth]{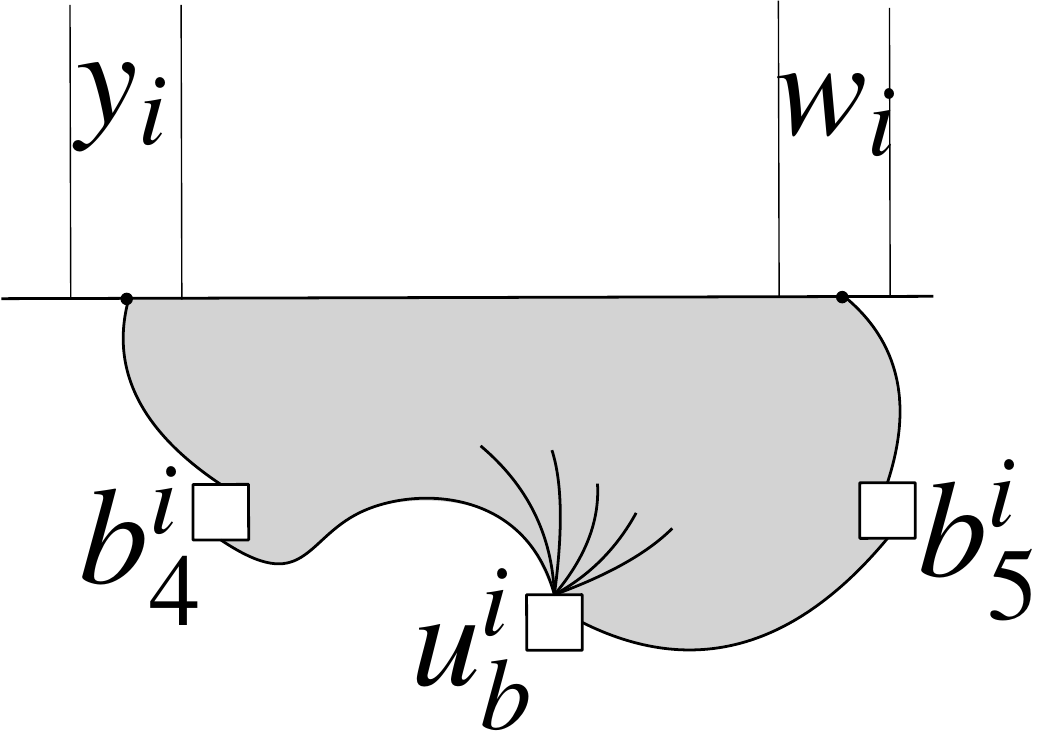}
      \label{fi:claim3-sub3}
    }
    \caption{(a) Contradiction to the planarity of $\Gamma$ if $(t^i_1,y_i)$ and $(t^i_3,y_i)$ are assigned to the same side of $Q_i$. The gray region is $R^*$. (b) Contradiction to the planarity of $\Gamma$ if $(t^i_1,y_i)$ is assigned to the top side of $Q_i$. (c) All the vertices in $V'_i$ come after $y_i$ and before $w_i$ in $\sigma_i$. The gray region is $R_b$.}
    \label{fi:claim3}
  \end{figure}

A similar argument proves that each of the four corner edges $(t^i_4,w_i)$, $(b^i_7,w_i)$, $(b^i_5,w_i)$, and $(b^i_1,w_i)$ incident to $w_i$ is assigned to a distinct side of $Q_i$ in $\Gamma$.

We now prove that $(t^i_1,y_i)$ is assigned to the left side of $Q_i$. Assume the contrary, for a contradiction; refer to Fig.~\ref{fi:claim3-sub2}, where $(t^i_1,y_i)$ is assigned to the top side of $Q_i$. Consider the line segment $Q^+$ traversed when walking along $Q_i$ in clockwise direction from the end-point of $(t^i_1,y_i)$ to the end-point of $(t^i_3,y_i)$. If $(t^i_1,y_i)$ is assigned to the top, right, or bottom side of $Q_i$, we have that $Q^+$ entirely contains the top side of the column of $M_i$ associated to $w_i$, the right side of the row of $M_i$ associated to $w_i$, or the left side of the row of $M_i$ associated to $w_i$, respectively. Since one of the four corner edges incident to $w_i$ has its end-point on this segment and since none of $t^i_1$, $t^i_2$, and $t^i_3$ is adjacent to $w_i$, it follows that: (i) one of the corner edges $(t^i_1,y_i)$ and $(t^i_3,y_i)$, or one of the corner edges incident to $w_i$ crosses a matrix representing a vertex in $D$ or crosses one of the bounding edges; or (ii) two among the corner edges $(t^i_1,y_i)$ and $(t^i_3,y_i)$, and among the corner edges incident to $w_i$ cross each other. In both cases we get a contradiction to the planarity of $\Gamma$.  

Since $(t^i_1,y_i)$ is assigned to the left side of $Q_i$, since each of the four corner edges incident to $y_1$ is assigned to a distinct side of $Q_i$, and since $t^i_1$, $t^i_3$, $t^i_7$, and $b^i_4$ appear in this clockwise order along $D$, we have that $(t^i_3,y_i)$, $(t^i_7,y_i)$, and $(b^i_4,y_i)$ are assigned to the top, right, and bottom side of $Q_i$, respectively.

An analogous proof shows that $(t^i_4,w_i)$, $(b^i_7,w_i)$, $(b^i_5,w_i)$, and $(b^i_1,w_i)$ are assigned to the top, right, bottom, and left side of $Q_i$, respectively. 

Further, $(t^i_2,x_i)$ is assigned to the left or top side of $Q_i$, since $(t^i_1,y_i)$ and $(t^i_3,y_i)$ are assigned to the left and top side of $Q_i$, respectively, and since $t^i_1$, $t^i_2$, and $t^i_3$ appear in this clockwise order along $D$. In both cases, $x_i$ precedes $y_i$ in $\sigma_i$. An analogous argument proves that $z_i$ follows $w_i$ in $\sigma_i$. Hence, $x_i$, $y_i$, $w_i$, and $z_i$ appear in this order in $\sigma_i$.


It remains to argue that $x_i$, $y_i$, $w_i$, and $z_i$ are the first, second, last but one, and last vertex in $\sigma_i$, respectively; refer to Fig.~\ref{fi:claim3-sub3}. Let $R_b$ be the region delimited by the corner edges $(b^i_4,y_i)$ and $(b^i_5,w_i)$, by the bounding edges $(b^i_4,u^i_b)$ and $(u^i_b,b^i_5)$, and by the boundaries of the matrices representing $b^i_4$, $u^i_b$, $b^i_5$, and $V''_i$. If any side-filling edge incident to $u^i_b$ leaves the matrix representing $u^i_b$ outside $D$, then this edge crosses $D$, a contradiction to the planarity of $\Gamma$. Otherwise, every side-filling edge incident to $u^i_b$ leaves the matrix representing $u^i_b$ inside $R_b$. Since $(b^i_4,y_i)$ and $(b^i_5,w_i)$ are both assigned to the bottom side of $Q_i$, it follows that all the side-filling edges incident to $u^i_b$ have their other end-point on the bottom side of $Q_i$, between the end-point of $(b^i_4,y_i)$ and the end-point of $(b^i_5,w_i)$. Thus, all the vertices in $V'_i$ come after $y_i$ and before $w_i$ in $\sigma_i$; this concludes the proof of statement (1). 
\end{proof}


We are now ready to state and prove the following two lemmata.

\begin{lemma} \label{le:same-side}
For each $i\in\{1,2,3\}$, all the equivalence edges belonging to the set $E'_i=\{(u'_e,r'_i),(u'_e,s'_i) : e=(r,s)\in E_i\}$ are assigned to the same side of $Q_i$, except possibly for the edges $(u'_e,r'_i)$ and $(u'_e,s'_i)$ such that $r'_i$ and $s'_i$ are consecutive in $\sigma_i$. 
\end{lemma}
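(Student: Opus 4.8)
\noindent\emph{Proof plan.}
The plan is to pin down, for each cluster $i$, a single side of $Q_i$ — the \emph{free} side — to which every equivalence edge of $E_i$ can attach, and to show that the remaining three sides are so crowded by other inter-cluster edges that an equivalence edge reaching one of them is confined to a region that meets $Q_i$ only along two consecutive rows/columns of $V'_i$, hence must join two consecutive vertices of $V'_i$.

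First I would invoke Claim~\ref{cl:corners} and assume, without loss of generality, that statement~(1) holds for cluster $i$; the case of statement~(2) is symmetric, with the roles of the top and bottom sides of $Q_i$ exchanged. Then $x_i,y_i,w_i,z_i$ occupy the first two and last two positions of $\sigma_i$, all vertices of $V'_i$ lie strictly between $y_i$ and $w_i$ in $\sigma_i$, and the eight corner edges incident to $y_i$ and $w_i$ are spread one per side of $Q_i$ exactly as in the claim. Working in the planar drawing $\Gamma''$, for each side $\ell$ of $Q_i$ I would consider the portion $I^\ell$ of $\ell$ lying strictly between the row/column of $y_i$ and that of $w_i$; by the positions of $y_i$ and $w_i$ in $\sigma_i$, $I^\ell$ contains precisely the rows/columns of the vertices of $V'_i$. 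Using that $M_1,M_2,M_3$ lie inside $D$ and that no corner edge crosses $D$, I would argue that $I^\ell$ is incident, on the side opposite $M_i$, to a single face $F^\ell$ of $\Gamma''$, bounded by $I^\ell$, the two corner edges incident to $y_i$ and $w_i$ assigned to $\ell$, a subpath of $D$, and the boundaries of the $D$-matrices on that subpath; and I would match these four faces with regions already at hand, namely $F^{r}=R^{i,i+1}$ for $i\in\{1,2\}$, $F^{l}=R^{i-1,i}$ for $i\in\{2,3\}$, and $F^{b}$ the region $R_b$ from the proof of Claim~\ref{cl:corners}.

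Next I would show that each side of $Q_i$ other than the top one is \emph{saturated}: its face carries inter-cluster edges of $(V',E',\mathcal C')$ attaching to $Q_i$ at the row/column of \emph{every} vertex of $V'_i$. For $F^{b}=R_b$ these are the side-filling edges incident to $u^i_b$ (the proof of Claim~\ref{cl:corners} already places them in $R_b$ attaching to the bottom side of $Q_i$). For $F^{r}$ with $i\in\{1,2\}$ (resp.\ $F^{l}$ with $i\in\{2,3\}$) these are the order-preserving edges between $V'_i$ and $V'_{i+1}$ (resp.\ between $V'_{i-1}$ and $V'_i$), which by Claim~\ref{cl:matrices-disjoint} lie in $R^{i,i+1}$ (resp.\ $R^{i-1,i}$), whose only intersection with $Q_i$ is $I^{r}$ (resp.\ $I^{l}$). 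For $F^{r}$ with $i=3$ (resp.\ $F^{l}$ with $i=1$) these are the side-filling edges incident to $u_r$ (resp.\ $u_l$); that they all lie in the corresponding face and attach to the right (resp.\ left) side of $Q_i$ I would prove by repeating the $R_b$-type argument of Claim~\ref{cl:corners} for those edges. In each such face the saturating edges, together with $Q_i$, cut the face into slots, and the key point is that each slot meets $Q_i$ only along the stretch between the rows/columns of two \emph{consecutive} vertices of $V'_i$ (the two extreme slots meet $Q_i$ at one vertex of $V'_i$ and at $y_i$, resp.\ $w_i$).

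Finally, suppose that for some $e=(r,s)\in E_i$ the equivalence edge $(u'_e,r'_i)$ is assigned to a non-top side $\ell$ of $Q_i$. Since $\{u'_e\}$ is a singleton cluster, $M_{u'_e}$ has only the two incident edges $(u'_e,r'_i)$ and $(u'_e,s'_i)$, both ending on $Q_i$. As $(u'_e,r'_i)$ enters $F^\ell$ through $I^\ell$ and cannot cross $\partial F^\ell$ (made of edges of $\Gamma''$ and matrix boundaries), both it and $M_{u'_e}$ lie in $\overline{F^\ell}$, so $M_{u'_e}$ lies in a single slot; this slot cannot be an extreme one (that would force $s'_i=r'_i$), so it meets $Q_i$ only between two consecutive vertices $r'_i$ and $p'_i$ of $V'_i$, and since $(u'_e,s'_i)$ also ends on $Q_i$ inside this slot while $s\neq r$, we get $s'_i=p'_i$, i.e., $r'_i$ and $s'_i$ are consecutive in $\sigma_i$ (and $(u'_e,s'_i)$ is on side $\ell$ too). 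Hence every equivalence edge of $E'_i$ whose two $V'_i$-endpoints are not consecutive in $\sigma_i$ is assigned to the top side of $Q_i$, which is the statement. I expect the main obstacle to be the combination of the first two steps: making the face/region identifications in $\Gamma''$ fully rigorous — in particular handling the $u_l$ and $u_r$ side-filling edges, which are not treated in the proof of Claim~\ref{cl:corners} — and establishing that no slot on a saturated side touches $Q_i$ at a vertex of $V'_i$ other than the two consecutive ones, since this is exactly what forbids a ``long'' equivalence edge on a saturated side.
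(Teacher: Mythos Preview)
Your proposal is correct and takes essentially the same approach as the paper: both rest on the fact that, under statement~(1) of Claim~\ref{cl:corners}, each of the three non-top sides of $Q_i$ is filled by side-filling or order-preserving edges reaching every vertex of $V'_i$, which confines any equivalence path landing there between two consecutive rows/columns. The paper organizes this as two steps---first showing that both equivalence edges of a given $e$ lie on the same side of $Q_i$ via a region argument, then exhibiting a direct crossing with the saturating edge to an intermediate vertex $p'_i$---whereas your slot decomposition delivers both conclusions at once; the substance is identical.
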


\begin{proof}
We prove that every edge in $E'_i=\{(u'_e,r'_i),(u'_e,s'_i) : e=(r,s)\in E_i\}$ is assigned to the same side of $Q_i$ edges $(t^i_3,y_1)$ and $(t^i_4,w_1)$ are assigned to, except possibly for the edges $(u'_e,r'_i)$ and $(u'_e,s'_i)$ such that $r'_i$ and $s'_i$ are consecutive in $\sigma_i$. The statement clearly implies Lemma~\ref{le:same-side}. Note that $(t^i_3,y_i)$ and $(t^i_4,w_i)$ are assigned either both to the top or both to the bottom side of $Q_i$, by Claim~\ref{cl:corners}. 

We first prove that, for any edge $e=(r,s)\in E_i$, the edges $(u'_e,r'_i)$ and $(u'_e,s'_i)$ are assigned to the same side of $Q_i$. Indeed suppose, for a contradiction, that the edge $(u'_e,r'_i)$ is assigned to, say, the top side of $Q_i$ and the edge $(u'_e,s'_i)$ is assigned to a different side of $Q_i$. By Claim~\ref{cl:corners}, there exist two corner edges that are assigned to the top side of $Q_i$ and that are incident to the second and to the last but one vertex in $\sigma_i$, respectively; these corner edges are $(t^i_3,y_i)$ and $(t^i_4,w_i)$, or $(b^i_5,w_i)$ and $(b^i_4,y_i)$. Assume that they are $(t^i_3,y_i)$ and $(t^i_4,w_i)$, as the other case is analogous. Consider the region $R_t$ delimited by $(t^i_3,y_i)$ and $(t^i_4,w_i)$, by the bounding edge $(t^i_3,t^i_4)$, by the boundaries of the matrices representing $t^i_3$ and $t^i_4$, and by the top side of $Q_i$. By Claim~\ref{cl:corners}, the incidence point of edge $(u'_e,r'_i)$ on the top side of $Q_i$ is between the incidence points of $(t^i_3,y_i)$ and $(t^i_4,w_i)$ on the top side of $Q_i$, given that $r'_i\in V'_i$. Since edge $(u'_e,r'_i)$ does not cross $M_i$, it leaves $M_i$ inside $R_t$. Since $(u'_e,s'_i)$ is assigned to a side of $Q_i$ different from the top side, then edge $(u'_e,r'_i)$, or edge $(u'_e,s'_i)$, or the matrix representing $u'_e$ crosses the boundary of $R_t$, a contradiction to the planarity of $\Gamma$. It follows that the edges $(u'_e,r'_i)$ and $(u'_e,s'_i)$ are assigned to the same side of $Q_i$. 
Now suppose that, for some $e=(r,s)\in E_i$, the edges $(u'_e,r'_i)$ and $(u'_e,s'_i)$ are both assigned to a side different from the one the edges $(t^i_3,y_i)$ and $(t^i_4,w_i)$ are assigned to, and suppose that $r'_i$ and $s'_i$ are not consecutive in $\sigma_i$. By Claim~\ref{cl:corners}, the edges $(u'_e,r'_i)$ and $(u'_e,s'_i)$ are both assigned to the side $(b^i_4,y_i)$ and $(b^i_5,w_i)$ are assigned to, or to the side $(t^i_1,y_i)$ and $(b^i_1,w_i)$ are assigned to, or to the side $(t^i_7,y_i)$ and $(b^i_7,w_i)$ are assigned to.

Assume first that $(u'_e,r'_i)$ and $(u'_e,s'_i)$ are both assigned to the side $(b^i_4,y_i)$ and $(b^i_5,w_i)$ are assigned to. As in the proof of Claim~\ref{cl:corners}, we can define $R_b$ as the region delimited by the edges $(b^i_4,y_i)$, $(b^i_5,w_i)$, $(b^i_4,u^i_b)$, and $(u^i_b,b^i_5)$, and by the boundaries of the matrices representing $b^i_4$, $u^i_b$, $b^i_5$, and $V''_i$; then every side-filling edge incident to $u^i_b$ leaves the matrix representing $u^i_b$ inside $R_b$, as otherwise it would cross $D$. It follows that all the side-filling edges incident to $u^i_b$ have their other end-point on the same side of $Q_i$ edges $(b^i_4,y_i)$ and $(b^i_5,w_i)$ are assigned to. Let $p'_i\in V'_i$ be any vertex  between $r'_i$ and $s'_i$ in $\sigma_i$. This vertex exists by hypothesis. Since $p'_i$ is between $r'_i$ and $s'_i$ in $\sigma_i$, the incidence point of the side-filling edge $(u^i_b,p'_i)$ on $Q_i$ is between the incidence points of $(u'_e,r'_i)$ and $(u'_e,s'_i)$ on $Q_i$. It follows that $(u^i_b,p'_i)$ crosses $M_i$, or the matrix representing $u'_e$, or $(u'_e,r'_i)$, or $(u'_e,s'_i)$.    

The cases in which $(u'_e,r'_i)$ and $(u'_e,s'_i)$ are both assigned to the side $(t^i_1,y_i)$ and $(b^i_1,w_i)$ are assigned to (with $i=1$), or to the side $(t^i_7,y_i)$ and $(b^i_7,w_i)$ are assigned to (with $i=3$) are analogous to the previous one, with $u_l$ or $u_r$ playing the role of $u^i_b$, respectively. 

Assume next that $(u'_e,r'_i)$ and $(u'_e,s'_i)$ are both assigned to the side $(t^i_7,y_i)$ and $(b^i_7,w_i)$ are assigned to and $i\leq 2$. By Claim~\ref{cl:corners} and by the planarity of $\Gamma$, the matrix representing $u'_e$ and the edges $(u'_e,r'_i)$ and $(u'_e,s'_i)$ lie inside region $R^{i,i+1}$. By Claim~\ref{cl:matrices-disjoint}, every order-preserving edge between a vertex in $V'_i$ and a vertex in $V'_{i+1}$ also lies inside $R^{i,i+1}$, hence it is assigned to the same side of $Q_i$ as $(t^i_7,y_i)$ and $(b^i_7,w_i)$. Again by Claim~\ref{cl:corners}, every order-preserving edge between a vertex in $V'_i$ and a vertex in $V'_{i+1}$ has its end-point between the end-points of $(t^i_7,y_i)$ and $(b^i_7,w_i)$ along the side of $Q_i$ these edges are all assigned to. Let $p'_i\in V'_i$ be any vertex  between $r'_i$ and $s'_i$ in $\sigma_i$. This vertex exists by hypothesis. Since $p'_i$ is between $r'_i$ and $s'_i$ in $\sigma_i$, then the incidence point of the order-preserving edge $(p'_i,p'_{i+1})$ on $Q_i$ is between the incidence points of $(u'_e,r'_i)$ and $(u'_e,s'_i)$ on $Q_i$. It follows that $(p'_i,p'_{i+1})$ crosses $M_i$, or the matrix representing $u'_e$, or $(u'_e,r'_i)$, or $(u'_e,s'_i)$. 

The case in which $(u'_e,r'_i)$ and $(u'_e,s'_i)$ are both assigned to the side $(t^i_1,y_i)$ and $(b^i_1,w_i)$ are assigned to and $i\geq 2$ is analogous to the previous one. 

This concludes the proof of Lemma~\ref{le:same-side}. 
\end{proof}

\begin{lemma} \label{le:same-ordering}
For any $i,j \in \{1,2,3\}$, the orderings $\sigma_i-\{x_i,y_i,w_i,z_i\}$ of $V'_i$ and $\sigma_j-\{x_j,y_j,w_j,z_j\}$ of $V'_j$ either are the same ordering or are the reverse of each other (according to the bijection between $V'_i$ and $V'_j$). 
\end{lemma}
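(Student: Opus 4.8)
The plan is to extract a linear order of $V$ from the row-column order of any one of the matrices $M_i$, so I first need to know that these three orders essentially agree. I would begin by observing that ``being equal to or the reverse of'' is an equivalence relation on the set of linear orders of a fixed ground set (transitivity holds because reversing twice is the identity); transporting everything through the bijections with $V$, it therefore suffices to prove the statement for two \emph{consecutive} clusters, i.e.\ for the pairs $(1,2)$ and $(2,3)$, since the pair $(1,3)$ then follows by transitivity and the case $i=j$ is trivial. So I fix $i\in\{1,2\}$ and work inside the region $R^{i,i+1}$.

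The second step is to read off the combinatorial structure of the boundary of $R^{i,i+1}$ from Claim~\ref{cl:corners}. That claim tells us that $(t^i_7,y_i)$ and $(b^i_7,w_i)$ are assigned to a common side of $Q_i$ (the right side under statement~(1), the left side under statement~(2)) and that $y_i$ and $w_i$ are the second and the last-but-one vertices of $\sigma_i$, so every vertex of $V'_i$ lies strictly between $y_i$ and $w_i$ in $\sigma_i$. I would then argue that the part of $Q_i$ lying on the boundary of $R^{i,i+1}$ is exactly the sub-arc $A_i$ of that side delimited by the end-points of $(t^i_7,y_i)$ and $(b^i_7,w_i)$, and that moving along $A_i$ one meets the end-points of the rows of $M_i$ associated with the vertices of $V'_i$ in the order $\sigma_i-\{x_i,y_i,w_i,z_i\}$, or in its reverse. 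The same reasoning, applied at $M_{i+1}$ with the corner edges $(t^{i+1}_1,y_{i+1})$ and $(b^{i+1}_1,w_{i+1})$, produces a sub-arc $A_{i+1}$ of $Q_{i+1}$ along which the vertices of $V'_{i+1}$ appear in the order $\sigma_{i+1}-\{x_{i+1},y_{i+1},w_{i+1},z_{i+1}\}$ up to reversal.

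Third, I would bring in Claim~\ref{cl:matrices-disjoint}: each of the $|V|$ order-preserving edges $(v'_i,v'_{i+1})$ lies inside $R^{i,i+1}$, whose boundary meets $Q_i$ only along $A_i$ and $Q_{i+1}$ only along $A_{i+1}$, so such an edge must attach to the point of $A_i$ associated with $v'_i$ and to the point of $A_{i+1}$ associated with $v'_{i+1}$. Since $\Gamma$ is planar these edges are pairwise non-crossing, hence they realize a non-crossing perfect matching inside the disk $R^{i,i+1}$ between the ordered points of $A_i$ and the ordered points of $A_{i+1}$, and this matching is precisely the bijection $v'_i\leftrightarrow v'_{i+1}$. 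Invoking the elementary fact that a non-crossing perfect matching between two disjoint boundary arcs of a disk is unique and pairs the $k$-th point of one arc with the $(|V|+1-k)$-th point of the other, I conclude that the order of the vertices of $V'_i$ along $A_i$ and the order of the vertices of $V'_{i+1}$ along $A_{i+1}$ agree up to reversal via the bijection; combined with the previous paragraph this yields that $\sigma_i-\{x_i,y_i,w_i,z_i\}$ and $\sigma_{i+1}-\{x_{i+1},y_{i+1},w_{i+1},z_{i+1}\}$ are the same order or the reverse of each other, which is the statement for $(i,i+1)$.

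I expect the main obstacle to be justifying that $R^{i,i+1}$ behaves like a genuine disk for the matching argument, i.e.\ that no matrix sits in its interior in a way that would let the order-preserving edges be routed ``around'' an obstacle and thereby violate the order. Here I would use Lemma~\ref{le:same-side}: the only matrices that can lie inside $R^{i,i+1}$ are matrices $M_{u'_e}$ with $e\in E_i$ (resp.\ $e\in E_{i+1}$), since $u'_e$'s only neighbours lie in $V''_i$ (resp.\ $V''_{i+1}$) and its equivalence edges cannot cross $\partial R^{i,i+1}$; for such a matrix to lie in $R^{i,i+1}$ both its equivalence edges must be assigned to the side of $Q_i$ (resp.\ $Q_{i+1}$) containing $A_i$ (resp.\ $A_{i+1}$), which by Lemma~\ref{le:same-side} forces $r'_i$ and $s'_i$ to be consecutive in $\sigma_i$; the matrix, together with its two equivalence edges, is then confined to the thin strip of $R^{i,i+1}$ enclosed between the attachments of two consecutive rows of $V'_i$ on $A_i$, and it obstructs none of the order-preserving edges. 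Making this confinement precise, together with the earlier claim pinning down which sub-arc of $Q_i$ is on $\partial R^{i,i+1}$ (rather than its complement), is where the argument requires genuine care; the remaining steps are routine, relying only on Claims~\ref{cl:matrices-disjoint} and~\ref{cl:corners} and the topology of non-crossing matchings.
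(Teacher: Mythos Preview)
Your approach is essentially the paper's: reduce to consecutive pairs (the paper only treats $i$ and $i+1$ anyway, leaving transitivity implicit), use Claim~\ref{cl:corners} to pin down the sides of $Q_i$ and $Q_{i+1}$ on $\partial R^{i,i+1}$, use Claim~\ref{cl:matrices-disjoint} to confine the order-preserving edges to $R^{i,i+1}$, and conclude that mismatched orders force a crossing. The paper phrases the last step as a four-case analysis on which side (left or right) of $Q_i$ and $Q_{i+1}$ carries the relevant corner edges; your non-crossing-matching formulation is the same argument.

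The one point where you diverge is the obstacle discussion. Your concern is legitimate---the paper simply asserts that alternating endpoints force a crossing and does not comment on matrices $M_{u'_e}$ that might sit inside $R^{i,i+1}$---but invoking Lemma~\ref{le:same-side} here is heavier than needed. A cleaner fix: any $M_{u'_e}$ lying in $R^{i,i+1}$ has both its equivalence edges in $R^{i,i+1}$ as well, and these edges attach it to $\partial R^{i,i+1}$; hence the union of all such matrices and equivalence edges is a forest rooted on the boundary, and removing it from the disk $R^{i,i+1}$ leaves a simply connected region. The alternation argument for the order-preserving edges then goes through directly, with no need to know that $r'_i,s'_i$ are consecutive. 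Note also that the \emph{statement} of Lemma~\ref{le:same-side} only says the equivalence edges share a common side; identifying that side as the one carrying $(t^i_3,y_i)$ rather than $(t^i_7,y_i)$ is in its proof, so your argument as written would need to cite the proof, not the lemma.
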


\begin{proof}
By Claim~\ref{cl:corners}, $(t^i_7,y_i)$ and $(b^i_7,w_i)$ are both assigned to the right or to the left side of $Q_i$, and $(t^{i+1}_1,y_{i+1})$ and $(b^{i+1}_1,w_{i+1})$ are both assigned to the right or to the left side of $Q_{i+1}$. We distinguish four cases, based on these two assignments.

If $(t^i_7,y_i)$ and $(b^i_7,w_i)$ are both assigned to the right side of $Q_i$, and $(t^{i+1}_1,y_{i+1})$ and $(b^{i+1}_1,w_{i+1})$ are both assigned to the left side of $Q_{i+1}$, then the orderings $\sigma_i-\{x_i,y_i,w_i,z_i\}$ of $V'_i$ and $\sigma_{i+1}-\{x_{i+1},y_{i+1},w_{i+1},z_{i+1}\}$ of $V'_{i+1}$ are the same ordering (according to the bijection between $V'_i$ and $V'_{i+1}$). In fact, traversing the right side of $Q_i$ from the incidence point with $(t^i_7,y_i)$ to the incidence point with $(b^i_7,w_i)$, the segments delimiting the rows associated to the vertices in $V'_i$ are encountered in the order $\sigma_i-\{x_i,y_i,w_i,z_i\}$; this is because by Claim~\ref{cl:corners} the vertex $y_i$ precedes the vertex $w_i$ in $\sigma_i$. Analogously, traversing the left side of $Q_{i+1}$ from the incidence point with $(t^{i+1}_1,y_{i+1})$ to the incidence point with $(b^{i+1}_1,w_{i+1})$, the segments delimiting the rows associated to the vertices in $V'_{i+1}$ are encountered in the order $\sigma_{i+1}-\{x_{i+1},y_{i+1},w_{i+1},z_{i+1}\}$. Now if there were a pair of vertices $p'_i$ and $q'_i$ such that $p'_i$ and $q'_i$ are in this order in $\sigma_i$ and such that $q'_{i+1}$ and $p'_{i+1}$ are in this order in $\sigma_{i+1}$, then the order-preserving edges $(p'_i,p'_{i+1})$ and $(q'_i,q'_{i+1})$ would cross each other, contradicting the planarity of $\Gamma$. 

If $(t^i_7,y_i)$ and $(b^i_7,w_i)$ are both assigned to the left side of $Q_i$, and $(t^{i+1}_1,y_{i+1})$ and $(b^{i+1}_1,w_{i+1})$ are both assigned to the right side of $Q_{i+1}$, the proof is analogous. 

If $(t^i_7,y_i)$ and $(b^i_7,w_i)$ are both assigned to the right side of $Q_i$, and $(t^{i+1}_1,y_{i+1})$ and $(b^{i+1}_1,w_{i+1})$ are both assigned to the right side of $Q_{i+1}$, then the orderings $\sigma_i-\{x_i,y_i,w_i,z_i\}$ of $V'_i$ and $\sigma_{i+1}-\{x_{i+1},y_{i+1},w_{i+1},z_{i+1}\}$ of $V'_{i+1}$, respectively, are the reverse of each other (according to the bijection between $V'_i$ and $V'_{i+1}$). In fact, traversing the right side of $Q_i$ from the incidence point with $(t^i_7,y_i)$ to the incidence point with $(b^i_7,w_i)$, the segments delimiting the rows associated to the vertices in $V'_i$ are encountered in the order $\sigma_i-\{x_i,y_i,w_i,z_i\}$. However, traversing the right side of $Q_{i+1}$ from the incidence point with $(t^{i+1}_1,y_{i+1})$ to the incidence point with $(b^{i+1}_1,w_{i+1})$, the segments delimiting the rows associated to the vertices in $V'_{i+1}$ are encountered in the reverse order of $\sigma_{i+1}-\{x_{i+1},y_{i+1},w_{i+1},z_{i+1}\}$; this is because by Claim~\ref{cl:corners} the vertex $y_i$ follows the vertex $w_i$ in $\sigma_i$. Now if there is a pair of vertices $p'_i$ and $q'_i$ such that $p'_i$ and $q'_i$ are in this order in $\sigma_i$ and such that $p'_{i+1}$ and $q'_{i+1}$ are in this order in $\sigma_{i+1}$, then the order-preserving edges $(p'_i,p'_{i+1})$ and $(q'_i,q'_{i+1})$ cross each other, contradicting the planarity of $\Gamma$.  

If $(t^i_7,y_i)$ and $(b^i_7,w_i)$ are both assigned to the left side of $Q_i$, and $(t^{i+1}_1,y_{i+1})$ and $(b^{i+1}_1,w_{i+1})$ are both assigned to the left side of $Q_{i+1}$ the proof is analogous.
\end{proof}

Lemmata~\ref{le:same-side} and~\ref{le:same-ordering} imply that $(V, E)$ is a positive instance of {\sc Partitioned $3$-Page Book Embedding}. In fact, consider the ordering $\cal O$ of $V$ corresponding to $\sigma_1-\{x_1,y_1,w_1,z_1\}$ according to the bijection between $V$ and $V'_1$. By Lemma~\ref{le:same-ordering}, for $i=2,3$, either $\sigma_i -\{x_i,y_i,w_i,z_i\}$ is the same or is the reverse of $\sigma_1-\{x_1,y_1,w_1,z_1\}$ (according to the bijection between $V'_i$ and $V'_1$). Then, for $i=1,2,3$, no two edges $e = (p,q)$ and $f=(r,s)$ in $E_i$ have alternating end-vertices in $\cal O$. Indeed, $e$ and $f$ cannot have alternating end-vertices if $p'_i$ and $q'_i$ are consecutive in $\sigma_i$, or if $r'_i$ and $s'_i$ are consecutive in $\sigma_i$, as in this case $p$ and $q$, or $r$ and $s$ would be consecutive in $\cal O$, respectively. If $p'_i$ and $q'_i$ are not consecutive in $\sigma_i$ and $r'_i$ and $s'_i$ are not consecutive in $\sigma_i$, then by Lemma~\ref{le:same-side} the edges $(u'_e,p'_i)$, $(u'_e,q'_i)$, $(u'_f,r'_i)$, and $(u'_f,s'_i)$ are all assigned to the same side of $Q_i$, hence if $e$ and $f$ had alternating end-vertices then: (i) $(u'_e,p'_i)$ or $(u'_e,q'_i)$ would cross $(u'_f,r'_i)$ or $(u'_f,s'_i)$, or (ii) $(u'_e,p'_i)$ or $(u'_e,q'_i)$ would cross the matrix representing $u'_f$ or $M_i$, or (iii) $(u'_f,r'_i)$ or $(u'_f,s'_i)$ would cross the matrix representing $u'_e$ or $M_i$, or (iv) two among $M_i$ and the matrices representing $u'_e$ and $u'_f$ would cross each other. In each case we would get a contradiction to the planarity of $\Gamma$. This completes the proof of the theorem.
\end{proof}

Let $G=(V,E,\mathcal{C})$ be a flat clustered graph with a given side assignment $s_i$, for each $V_i \in \mathcal{C}$. We say that $G$ is {\em \nt planar with fixed side} if $G$ admits an \nt planar representation $\Gamma$ such that, for every edge $e=(u,v) \in E$ with  $u \in V_i$ and $v \in V_j$, the incidence points of $e$ with the matrices $M_i$ and $M_j$ representing $V_i$ and $V_j$ in $\Gamma$ lie on the straight-line segments corresponding to the $s_i(e)$ side of $M_i$ and to the $s_j(e)$ side of $M_j$, respectively.

\begin{theorem}\label{th:free-ordering-fixed-sides}
{\sc NodeTrix Planarity with Fixed Side} is \NPC even for instances with two clusters.
\end{theorem}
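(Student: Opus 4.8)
The plan has two parts. For membership in $\mathbb{NP}$ I would simply reuse Lemma~\ref{le:np}: a planar \nt\ representation admits a polynomial-size combinatorial certificate, and checking in addition that each inter-cluster edge leaves each of its two incident matrices on the prescribed side is a trivial polynomial-time test. Hence {\sc NodeTrix Planarity with Fixed Side} is in $\mathbb{NP}$.

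For $\mathbb{NP}$-hardness I would reduce from an ordering problem — a book-embedding-type problem such as {\sc Partitioned $3$-Page Book Embedding} (as in Theorem~\ref{th:free-ordering-free-sides}) or \betp\ are the natural candidates — producing an instance with exactly two clusters $V_1$ and $V_2$. The idea is that the row-column order $\sigma_1$ of $M_1$ encodes the sought linear order of the ground set: $V_1$ would contain (a copy of) the ground set together with a few ``anchor'' vertices, while $V_2$ would consist essentially of low-degree ``connector'' vertices, one per constraint of the source instance. A constraint on two ground-set elements $a,b$ — say a page-$k$ edge $(a,b)$ — would be encoded by a connector $c\in V_2$ and the two inter-cluster edges $(a,c)$, $(b,c)$, whose sides on $M_1$ and on $M_2$ are fixed so as to force the arc $a$--$c$--$b$ into the ``channel'' reserved for page $k$ (for instance: the top side of $M_1$ for page $1$, its bottom side for page $2$, and the side of $M_1$ facing $M_2$ for page $3$, with the matching sides on $M_2$). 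Since each channel sees the vertices of $V_1$ precisely in the order $\sigma_1$ (or its reverse), two arcs of the same page cross if and only if their endpoints alternate in $\sigma_1$, which is exactly the forbidden configuration of the source problem.

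The forward direction ($\Longrightarrow$) is the easy one: given a valid ordering, place $M_1$ and $M_2$ side by side, set $\sigma_1$ (and $\sigma_2$) according to the solution, and route each channel as an outerplanar family of arcs. The reverse direction ($\Longleftarrow$) carries the weight of the proof and, as in Theorem~\ref{th:free-ordering-free-sides}, consists of a structural analysis of an arbitrary planar representation $\Gamma$: one must show that in $\Gamma$ the two matrices are arranged essentially as intended and — crucially — that the arcs of the different page classes are confined to pairwise-separated regions of the plane, so that planarity of $\Gamma$ forces exactly the non-alternation conditions of the source instance (with a ``consecutive in $\sigma_i$'' caveat of the same flavor as in Lemma~\ref{le:same-side}).

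I expect the main obstacle to be exactly this confinement argument. Unlike in the proof of Theorem~\ref{th:free-ordering-free-sides}, here we cannot build a rigid bounding frame out of singleton clusters — there are only two clusters — so the barrier separating the channels must be manufactured from the two matrices themselves together with a carefully chosen set of ``frame'' inter-cluster edges attached to fixed, corner-adjacent sides of $M_1$ and $M_2$. One then has to prove that in every planar representation these frame edges close up into a cycle-like barrier, which requires a case analysis on which side of each matrix a given frame/corner edge attaches to, analogous to Claim~\ref{cl:corners}. Once this separation is established, the equivalence with the source problem follows along the lines of Claim~\ref{cl:matrices-disjoint} and Lemmas~\ref{le:same-side} and~\ref{le:same-ordering}.
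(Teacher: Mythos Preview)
Your high-level plan — reduce from an ordering problem, let $\sigma_1$ encode the sought order, and use the fixed sides to carve the plane around $M_1$ into per-page ``channels'' — is natural and is indeed what the paper does in spirit. But the concrete gadget you sketch (one copy of the ground set in $V_1$, one connector $c\in V_2$ per page-$k$ edge $(a,b)$, both edges $(a,c),(b,c)$ forced to a fixed side of $M_1$ and a fixed side of $M_2$) breaks in the direction you call ``easy''. The key sentence ``two arcs of the same page cross if and only if their endpoints alternate in $\sigma_1$'' is false in this model. Take two \emph{nested} page-$k$ edges, say with endpoints $a<a'<b'<b$ on the relevant side of $M_1$, and connectors $c,c'$ on the fixed side of $M_2$. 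Once your frame edges close the channel into a disk, the cyclic boundary order is $a,a',b',b,c,c'$ or $a,a',b',b,c',c$ (the only freedom is the relative order of $c,c'$ in $\sigma_2$). In the first case the chords $(a,c)$ and $(a',c')$ interleave; in the second case $(b,c)$ and $(a',c')$ interleave. Either way the two nested arcs must cross, so a yes-instance of the source problem need not map to a planar \nt\ representation. The point is that the connector is not a free point in the page region (as the singleton $u'_e$ was in Theorem~\ref{th:free-ordering-free-sides}) but a point on the boundary of $M_2$, and this extra rigidity destroys the outerplanar ``stack of arcs'' picture.

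The paper sidesteps this by reducing from {\sc Betweenness} and, crucially, by putting \emph{$t$ copies} of the item set into $V_1$ (one per triplet) rather than a single copy. Two frame edges $e_\alpha,e_\beta$ together with $M_1,M_2$ form the separating cycle you anticipate; bundles of parallel edges on the top and bottom sides then zig-zag between $M_1$ and $M_2$ so that all $t$ copies are forced to carry the same permutation. Each triplet $\langle a_{b_j},a_{c_j},a_{d_j}\rangle$ is realised by a four-edge path $v_{[b_j,j]}\!-\!u'_j\!-\!v_{[c_j,j]}\!-\!u''_j\!-\!v_{[d_j,j]}$ in the $\textrm{\sc r}$-of-$M_1$/$\textrm{\sc l}$-of-$M_2$ channel, using rows of $M_1$ that are private to triplet $j$; hence distinct triplets never interact, and within a single triplet the path is drawable in that strip iff $v_{[c_j,j]}$ sits between the other two — exactly the betweenness condition. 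So the trick you are missing is not a more clever confinement argument for the backward direction, but replacing ``one copy of the ground set plus per-page channels'' by ``many copies plus order-propagating zig-zags'', which is what makes the forward direction go through.
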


\remove{ 
\begin{sketch}
Lemma~\ref{le:np} will prove that {\sc \nt Planarity with Fixed Side} is in $\mathbb{NP}$. For the \NPHN we give a reduction from {\sc Betweenness}~\cite{o-top-79}, whose input is a set of items $\{a_1,\dots a_h\}$ and a collection of $t$ ordered triplets $\tau_j = \langle a_{b_j},a_{c_j},a_{d_j} \rangle$. The goal is to find a total order of the items such that, for each $\tau_j = \langle a_{b_j},a_{c_j},a_{d_j} \rangle$, item $a_{c_j}$ is between $a_{b_j}$ and $a_{d_j}$. We construct the corresponding instance of {\sc \nt Planarity with Fixed Side} by defining a flat clustered graph $(V,E,\mathcal{C} = \{V_1, V_2\})$ and side assignments $s_1$ and $s_2$ as follows; refer to Fig~\ref{fi:hardness-free-ordering-fixed-sides-short}. Cluster $V_1$ ($V_2$) contains $t$ vertices for each $a_i$ plus two vertices $v_\alpha$ and $v_\beta$ (plus two vertices $u_\alpha$ and $u_\beta$ and $2t$ vertices $t_{j1}$ and $t_{j2}$ for $j=1,\dots,t$). Let $M_1$ and $M_2$ be matrices  representing $V_1$ and $V_2$, respectively; also, let $e_{\alpha}=(u_\alpha,v_\alpha)$ ($e_{\beta}=(u_\beta,v_\beta)$) be assigned to the right (left) side of $M_1$ and to the left (right) side of $M_2$. We associate to each element $a_i$ a $(2t+1)$-vertex path $\pi_i$ that starts at $u_{\beta}$, and repeatedly leaves the bottom side of $M_2$, enters the bottom side of $M_1$, leaves the top side of $M_1$, and enters the top side of $M_2$; this routing of $\pi_i$ can be prescribed by $s_1$ and $s_2$. Further, for every even $j$, the left-to-right order of the columns associated to the $j$-th vertices of paths $\pi_1,\dots,\pi_h$ in $M_1$ is the same. This allows us to introduce five inter-cluster edges for each triplet $\tau_j = \langle a_{b_j},a_{c_j},a_{d_j} \rangle$, all connecting the right side of $M_1$ with the left side of $M_2$. These edges connect the $j$-th vertex in $\pi_{b_j}$ to $t_{j1}$, the $j$-th vertex in $\pi_{c_j}$ to $t_{j1}$ and $t_{j2}$, and the $j$-th vertex in $\pi_{d_j}$ to $t_{j2}$. These five edges can be drawn without crossings if and only if the row associated to the $j$-th vertex in $\pi_{c_j}$ is between the rows associated to the $j$-th vertices in $\pi_{b_j}$ and $\pi_{d_j}$. This establishes the desired correspondence with the {\sc Betweenness} problem.
\end{sketch}
} 

\noindent
\begin{proof}
Lemma~\ref{le:np} will prove that {\sc \nt Planarity with Fixed Side} is in $\mathbb{NP}$. 

\begin{figure}[tb!]
    \centering
      \includegraphics[height=0.32\textheight]{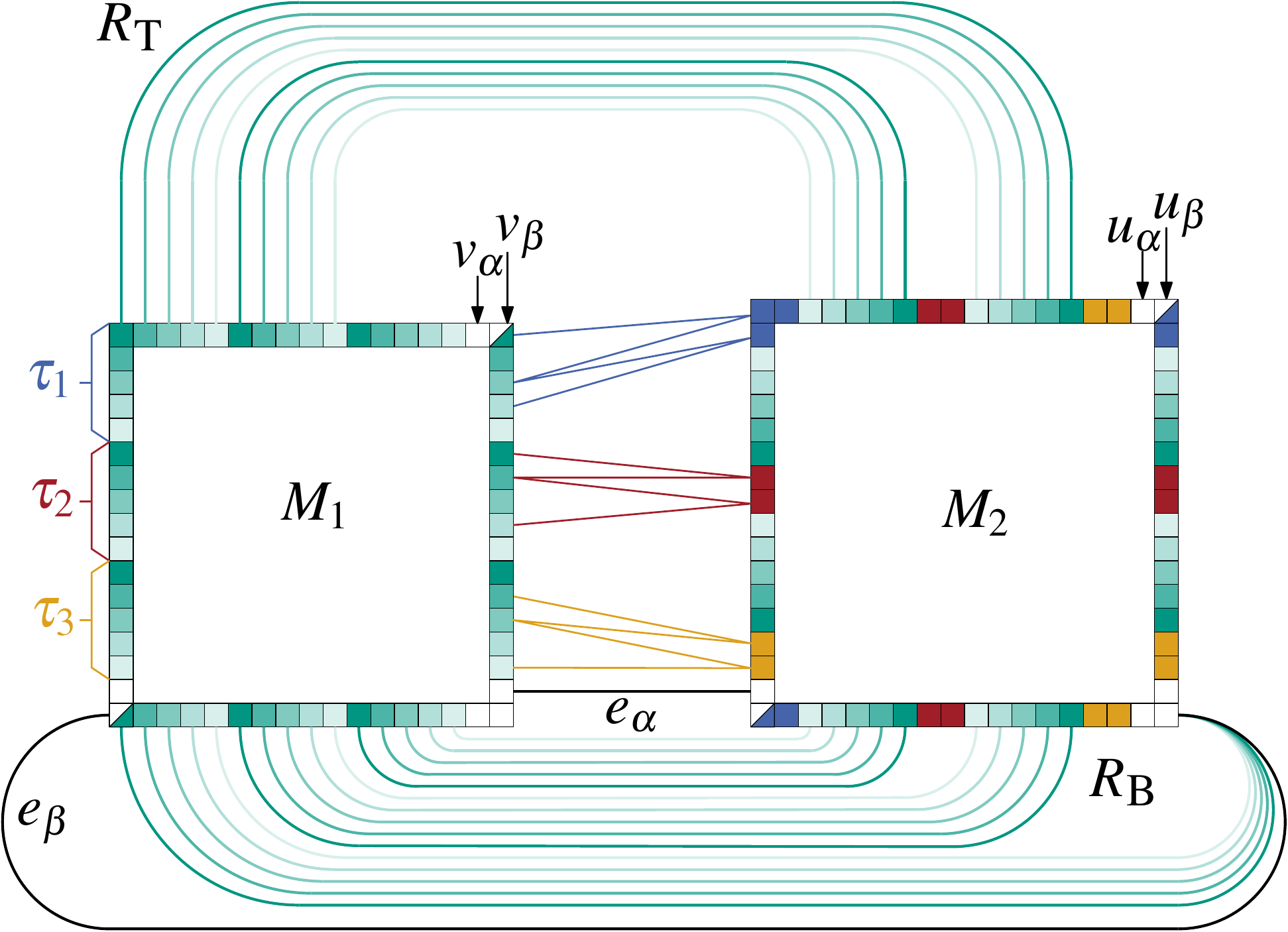}
    \caption{Illustration for the proof of Theorem~\ref{th:free-ordering-fixed-sides}.}
    \label{fi:hardness-free-ordering-fixed-sides}
  \end{figure}

We give a reduction from the \NPC problem {\sc Betweenness}~\cite{o-top-79}, where an instance is a collection of ordered triplets of items and the target is to find a total order of the items in which, for each of the given triplets, the middle item in the triplet appears somewhere between the other two items.

Consider an instance $\cal T$ of {\sc Betweenness}, i.e., a set of $h$ items $\{a_1, a_2, \dots a_h\}$ and a collection of $t$ ordered triplets of $\tau_j = \langle a_{b_j},a_{c_j},a_{d_j} \rangle$, with $j = 1, \dots, t$.
We construct the corresponding instance of {\sc \nt Planarity with Fixed Side} by defining a flat clustered graph $(V,E,\mathcal{C})$, where $\mathcal{C} = \{V_1, V_2\}$, and a side assignment $s_i$, with $i \in \{1,2\}$, as follows: 

\begin{itemize} 

\item cluster $V_1$ contains $2 + h \times t$ vertices: that is, $v_\alpha$, $v_\beta$, and one vertex $v_{[i,j]}$ for each $i=1, \dots, h$ and each $j=1, \dots, t$;

\item cluster $V_2$ contains $2 + h \times (t-1) + 2t$ vertices: that is, $u_\alpha$, $u_\beta$, plus one vertex $u_{[i,j]}$ for each $i=1, \dots, h$ and each $j=1, \dots, t-1$, plus two vertices $u'_j$ and $u''_j$ for each $j=1, \dots, t$.

\end{itemize}

The set $E$ contains an arbitrary set of intra-cluster edges and the following inter-cluster edges.

\begin{itemize} 

\item inter-cluster edge $e_{\alpha} = (v_\alpha, u_\alpha)$, with $s_1(e_\alpha)=\textrm{\sc{r}}$ and $s_2(e_\alpha)=\textrm{\sc{l}}$;

\item inter-cluster edge $e_\beta = (v_\beta, u_\beta)$, with $s_1(e_\beta)=\textrm{\sc{l}}$ and $s_2(e_\beta)=\textrm{\sc{r}}$;

\item for each $i=1, \dots, h$ an inter-cluster edge $e^\textrm{\sc{b}}_{[i,1]} = (u_\beta,v_{[i,1]})$, with $s_1(e^\textrm{\sc{b}}_{[i,1]})=\textrm{\sc{b}}$ and $s_2(e^\textrm{\sc{b}}_{[i,1]}) = \textrm{\sc{r}}$;

\item for each $i=1, \dots, h$ and $j=1, \dots, t-1$ an inter-cluster edge $e^\textrm{\sc{b}}_{[i,j]} = (u_{[i,j]},v_{[i,j+1]})$, with $s_1(e^\textrm{\sc{b}}_{[i,j]}) = s_2(e^\textrm{\sc{b}}_{[i,j]}) = \textrm{\sc{b}}$;

\item for each $i=1, \dots, h$ and $j=1, \dots, t-1$ an inter-cluster edge $e^\textrm{\sc{t}}_{[i,j]} = (v_{[i,j]},u_{[i,j]})$, with $s_1(e^\textrm{\sc{t}}_{[i,j]}) = s_2(e^\textrm{\sc{t}}_{[i,j]}) = \textrm{\sc{t}}$;

\item for each triplet $t_j = \langle a_{b_j}, a_{c_j}, a_{d_j} \rangle$, with $j = 1, \dots, t$, a path of four inter-cluster edges joining the five vertices $v_{[a_{b_j},j]}, u'_j, v_{[a_{c_j},j]}, u''_j$, and $v_{[a_{d_j},j]}$ in this order. Each edge $e$ of such a path has $s_1(e)=\textrm{\sc{r}}$ and $s_2(e)=\textrm{\sc{l}}$.

\end{itemize}

($\Longrightarrow$) Suppose that the items of $\cal T$ admit a total order $a_{\pi_1}, a_{\pi_2}, \dots, a_{\pi_h}$ in which for each of the given triplets, the middle item in the triplet appears somewhere between the other two items. We show how to construct a NodeTrix planar representation of $(V, E, \mathcal{C})$. 

Use for the matrix $M_1$ representing $V_1$ a row-column order $\sigma_1$ such that $\sigma_1(v_{[\pi_1,1]}) < \sigma_1(v_{[\pi_2,1]}) < \dots < \sigma_1(v_{[\pi_h,1]}) < \sigma_1(v_{[\pi_1,2]}) < \dots < \sigma_1(v_{[\pi_h,2]}) < \dots < \sigma_1(v_{[\pi_1,t]}) < \dots < \sigma_1(v_{[\pi_h,t]}) < \sigma_1(e_\alpha) < \sigma_1(e_\beta)$.
Use for the matrix $M_2$ representing $V_2$ a row-column order $\sigma_2$ such that 
$\sigma_2(u'_t) < \sigma_2(u''_t) < \sigma_2(u_{[\pi_h,t-1]}) < \sigma_2(u_{[\pi_{h-1},t-1]}) < \dots < \sigma_2(u_{[\pi_{1},t-1]}) < \sigma_2(u'_{t-1}) < \sigma_2(u''_{t-1}) < \sigma_2(u_{[\pi_h,t-2]})$ $<$ $\sigma_2(u_{[\pi_{h-1},t-2]})$ $< \dots < \sigma_2(u_{[\pi_{1},t-2]}) < \sigma_2(u'_{t-2}) < \sigma_2(u''_{t-2}) < \dots < 
\sigma_2(u_{[\pi_h,1]}) < \sigma_2(u_{[\pi_{h-1},1]}) < \dots < \sigma_2(u_{[\pi_{1},1]}) < \sigma_2(u'_1) < \sigma_2(u''_1) < \sigma_2(e_\alpha) < \sigma_2(e_\beta)$. It can be easily seen that the inter-cluster edges can be drawn attached to the sides imposed by $s_1$ and $s_2$ without crossings, as in Fig.~\ref{fi:hardness-free-ordering-fixed-sides}.

($\Longleftarrow$) Suppose that $(V, E, \mathcal{C})$ admits a NodeTrix planar representation where, for $i \in \{1,2\}$, each inter-cluster edge attaches according to the edge assignment $s_i$ to the matrix $M_i$ representing the cluster $V_i$. We show that $\cal T$ admits a total order in which for each triplet, the middle item in the triplet appears somewhere between the other two items. 

First observe that, whatever the row-column orders $\sigma_1$ and $\sigma_2$ chosen for matrices $M_1$ and $M_2$ are, respectively, the matrices $M_1$ and $M_2$ form, together with the edges $e_\alpha$ and $e_\beta$, a cycle that separates the top sides of the two matrices from their bottom sides. It follows that all inter-cluster edges that attach to the top side (to the bottom side) of $M_1$ or $M_2$ are drawn inside the same region delimited by the boundaries of $M_1$ and $M_2$, by $e_\alpha$, and by $e_\beta$; we denote by $R_\textrm{\sc t}$ (by $R_\textrm{\sc b}$) the one of these regions comprising the top side (resp.\ the bottom side) of $M_1$ and $M_2$. Refer again to Fig.~\ref{fi:hardness-free-ordering-fixed-sides} for an illustration.

Consider the inter-cluster edges $e^\textrm{\sc{b}}_{[i,1]}$, for $i =1, \dots, h$. Since they all attach to $u_\beta$ and since $s_1(e^\textrm{\sc{b}}_{[i,1]})=\textrm{\sc{b}}$, they are all drawn inside $R_\textrm{\sc b}$. Denote by $\pi = \pi_1, \dots, \pi_h$ the permutation of the indices $1, \dots, h$ such that $\sigma_1(v_{[\pi_1,1]}) < \sigma_1(v_{[\pi_2,1]}) < \dots < \sigma_2(v_{[\pi_h,1]})$ (recall that these are the end-vertices of the edges $e^\textrm{\sc{b}}_{[i,1]}$). 

We claim that $\sigma_1(v_{[\pi_1,1]}) < \dots < \sigma_1(v_{[\pi_h,1]}) < \sigma_1(v_{[\pi_1,2]}) < \dots < \sigma_1(v_{[\pi_h,2]}) < \dots < \sigma_1(v_{[\pi_1,t]}) < \dots \sigma_1(v_{[\pi_h,t]})$ holds true. First, we prove that, for each $j\in \{1, \dots, t\}$, the vertices $v_{[\pi_1,j]}, v_{[\pi_2,j]}, \dots, v_{[\pi_h,j]}$ appear in this order in the row-column order of $M_1$; indeed, by the definition of $\pi$, this is the case for $j=1$. Observe that the inter-cluster edges $e^\textrm{\sc{t}}_{[i,1]}$, for $i=1, \dots, h$, are drawn inside $R_\textrm{\sc t}$ and force $\sigma_2$ to be such that $\sigma_2(u_{[\pi_h,1]}) < \sigma_2(u_{[\pi_{h-1},1]}) < \dots < \sigma_2(u_{[\pi_1,1]})$ (recall that the end-vertices of the edge $e^\textrm{\sc{t}}_{[i,1]}$ are $u_{[i,1]}$ and $v_{[i,1]}$). Analogously, the inter-cluster edges $e^\textrm{\sc{b}}_{[i,1]}$, for $i = 1, \dots, h$, are drawn inside $R_\textrm{\sc b}$ and force $\sigma_1$ to be such that $\sigma_1(v_{[\pi_1,2]}) < \sigma_1(v_{[\pi_2,2]}) < \dots < \sigma_1(v_{[\pi_h,2]})$ (recall that the end-vertices of the edge $e^\textrm{\sc{b}}_{[i,1]}$ are $u_{[i,1]}$ and $v_{[i,2]}$). For $j=2, \dots, t-1$, the same argument can be repeated alternately for all the inter-cluster edges $e^\textrm{\sc{t}}_{[i,j]}$ and then for all the inter-cluster edges $e^\textrm{\sc{b}}_{[i,j]}$; then for any $j \in \{1, \dots t\}$, it holds true that $\sigma(v_{[\pi_1,j]}) < \sigma(v_{[\pi_2,j]}) < \dots < \sigma_1(v_{[\pi_h,j]})$. Also, it is easy to see that, in order for the drawing to be crossing-free, $\sigma_1(v_{[i',j']}) < \sigma_1(v_{[i'',j'']})$ whenever $j' < j''$. This concludes the proof of the claim.

Now, for each $j = 1, \dots, t$, consider the inter-cluster edges of the path $v_{[a_{b_j},j]}, u'_j,$ $v_{[a_{c_j},j]}, u''_j$, and $v_{[a_{d_j},j]}$. In any \nt planar representation of $(V, E, \mathcal{C})$ such a path forces $v_{[a_{c_j},j]}$ to be in the middle of $v_{[a_{b_j},j]}$ and $v_{[a_{d_j},j]}$, that is, it forces $a_{c_j}$ to be in the middle of $a_{b_j}$ and $a_{d_j}$ in $\pi$. It follows that $\pi$ is an ordering of the items $a_{\pi_1}, a_{\pi_2}, \dots, a_{\pi_h}$ in which, for each triplet, the middle item in the triplet appears between the other two items.
\end{proof}

Let $G=(V,E,\mathcal{C})$ be a flat clustered graph with a given row-column order $\sigma_i$, for each $V_i \in \mathcal{C}$. We say that $G$ is {\em \nt planar with fixed order} if it admits an \nt planar representation $\Gamma$ where, for each cluster $V_i \in \mathcal{C}$, each vertex $v \in V_i$ is associated with the $\sigma_i(v)$-th row and column of the matrix $M_i$ representing $V_i$ in $\Gamma$.

\begin{theorem}\label{th:fixed-ordering-free-sides}
{\sc NodeTrix Planarity with Fixed Order} is \NPC even if at most one cluster contains more than one vertex.
\end{theorem}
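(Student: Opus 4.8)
The membership of {\sc \nt Planarity with Fixed Order} in $\mathbb{NP}$ is again given by Lemma~\ref{le:np}. For the \NPHN, the plan is to reduce from the known \NPC problem {\sc $4$-Colorability of Circle Graphs}: given $n$ chords $c_1,\dots,c_n$ of a circle with $2n$ pairwise-distinct endpoints $p_1,\dots,p_{2n}$ listed in this cyclic order, decide whether $\{c_1,\dots,c_n\}$ can be partitioned into four sets, each consisting of pairwise non-crossing chords. Equivalently, this is $4$-page book embedding with the spine order fixed to $p_1,\dots,p_{2n}$. The driving intuition is that a single square matrix with a prescribed row--column order offers exactly four ``pages'', one per side, since along each side of $Q_1$ the rows (columns) appear in the order $\sigma_1$ up to reversal, and reversal does not affect which pairs of edges cross.

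Given a chord diagram as above, I would build $(V,E,\mathcal C)$ as follows. There is a single non-trivial cluster $V_1$ whose fixed order $\sigma_1$ is $v_0,v_1,\dots,v_{2n},v_{2n+1}$, where $v_i$ is in bijection with $p_i$ for $1\le i\le 2n$ while $v_0$ and $v_{2n+1}$ are dummies. For every chord $c_k=\{p_{s_k},p_{t_k}\}$, $\mathcal C$ contains a singleton cluster $\{w_k\}$ and $E$ contains the two inter-cluster edges $(w_k,v_{s_k})$ and $(w_k,v_{t_k})$; the intra-cluster edges of $V_1$ are arbitrary. Finally I would add a \emph{frame}: a cycle of singleton clusters enclosing the matrix $M_1$ representing $V_1$, together with a constant number of \emph{corner edges} joining $M_1$ (through a constant number of dummy rows, including $v_0$ and $v_{2n+1}$) to the frame, defined in analogy with the corner edges in the proof of Theorem~\ref{th:free-ordering-free-sides}. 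The purpose of the frame is to partition the region outside $M_1$ into four sub-regions $R_1,R_2,R_3,R_4$, one incident to each side of $Q_1$, and to force every $w_k$ to lie in one of them. The construction is clearly polynomial, and it uses only one cluster with more than one vertex.

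For the direction ($\Longleftarrow$), from a partition $\mathcal K_1,\dots,\mathcal K_4$ of the chords into non-crossing classes, I would place each $w_k$ with $c_k\in\mathcal K_\ell$ inside $R_\ell$ and route both edges of $w_k$ to the side of $M_1$ incident to $R_\ell$; since that side lists $v_1,\dots,v_{2n}$ in the cyclic order of $p_1,\dots,p_{2n}$ and no two chords of $\mathcal K_\ell$ cross, the paths $(v_{s_k},w_k,v_{t_k})$ with $c_k\in\mathcal K_\ell$ are pairwise nested or disjoint inside $R_\ell$ and can be drawn without crossings, while paths routed in distinct sub-regions do not interfere; this yields a planar \nt representation respecting $\sigma_1$. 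For ($\Longrightarrow$), from such a representation $\Gamma$, the frame confines each $w_k$ to a unique $R_{\ell(k)}$, whose boundary meets $Q_1$ along one side only, so both edges of $w_k$ attach to that side; this defines a map $\ell$ from the chords to $\{1,2,3,4\}$. If two chords $c_k,c_{k'}$ with $\ell(k)=\ell(k')$ crossed, then the $2n$ endpoints being distinct, the four attachment points $v_{s_k},v_{t_k},v_{s_{k'}},v_{t_{k'}}$ would alternate along that side of $M_1$, and the two paths $(v_{s_k},w_k,v_{t_k})$ and $(v_{s_{k'}},w_{k'},v_{t_{k'}})$, both confined to the same sub-region, would be forced to cross, contradicting planarity. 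Hence $\ell$ partitions the chords into four non-crossing classes, and the two instances are equivalent.

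The hard part is the frame. Without it, the path of a chord could exit $M_1$ from one side, wrap around a corner of $Q_1$, and re-attach its second endpoint to a \emph{different} side of $Q_1$; such corner-wrapping configurations have no counterpart among colour classes -- they behave instead like laminar separators of the chord set -- so the reduction would be unsound. The technical core of the proof is therefore to build, using only singleton clusters and the four sides of one matrix with prescribed row--column order, a gadget that provably confines every $w_k$ to exactly one of four side-regions of $Q_1$; this amounts to re-establishing, in the fixed-order setting and with a bounded number of corner edges, the kind of incidence structure around a matrix boundary that Claim~\ref{cl:corners} provides for the construction of Theorem~\ref{th:free-ordering-free-sides}.
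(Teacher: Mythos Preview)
Your approach is essentially the paper's: the same reduction from $4$-colourability of circle graphs, the same single non-trivial cluster with the chord endpoints in their cyclic order flanked by dummies, a singleton cluster per chord with two chord edges, and a bounding cycle with corner edges to split the outside of $M_1$ into four side-regions.

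One remark: you flag the frame as ``the hard part'' and point to Claim~\ref{cl:corners} as the kind of incidence analysis needed. In the fixed-order setting this is in fact the easy part. Because $v_0$ and $v_{2n+1}$ are \emph{forced} to be the first and last in $\sigma_1$, their rows and columns already sit at the four corners of $Q_1$; a cycle of just six singleton frame vertices, with six corner edges alternately attached to $v_0$ and $v_{2n+1}$, trivially cuts the annulus between $M_1$ and the frame into the four side-regions, with no case analysis required. The paper notes explicitly that the analogue of Claim~\ref{cl:corners} here ``is a trivial consequence of the fact that $v_\alpha$ and $v_\omega$ are the first and the last vertex in $\sigma_*$''. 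So your outline is correct, and the part you worried about is considerably lighter than you anticipated.
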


\remove{ 
\begin{sketch}
The membership in $\mathbb{NP}$ will be proved in Lemma~\ref{le:np}. For the $\mathbb{NP}$-hardness, we give a reduction from the $4$-coloring problem for {\em circle graphs}~\cite{u-okccg-88}. We construct in polynomial time~\cite{spinrad-rcg-94} a representation $\langle \mathcal{P}, \mathcal{O} \rangle$ of $G$, where $\mathcal{P}$ is a linear sequence of distinct points on a circle and $\mathcal{O}$ is a set of chords between points in $\mathcal{P}$ such that: (i) each chord $c \in \mathcal{O}$ corresponds to a vertex $n \in N$ and (ii) two chords $c',c''\in \mathcal{O}$ intersect if and only if $(n',n'') \in A$, where $n'$ and $n''$ are the vertices in $N$ corresponding to $c'$ and $c''$, respectively. Starting from $\langle \mathcal{P}, \mathcal{O} \rangle$ we construct an instance $(V,E,\mathcal{C})$ of {\sc NodeTrix Planarity with Fixed Order} as follows (refer to Fig.~\ref{fi:hardness-prescribedorder-short}). 
\begin{figure}[tb!]
    \centering
    \subfloat[]{
      \includegraphics[height=0.16\textheight,page=2]{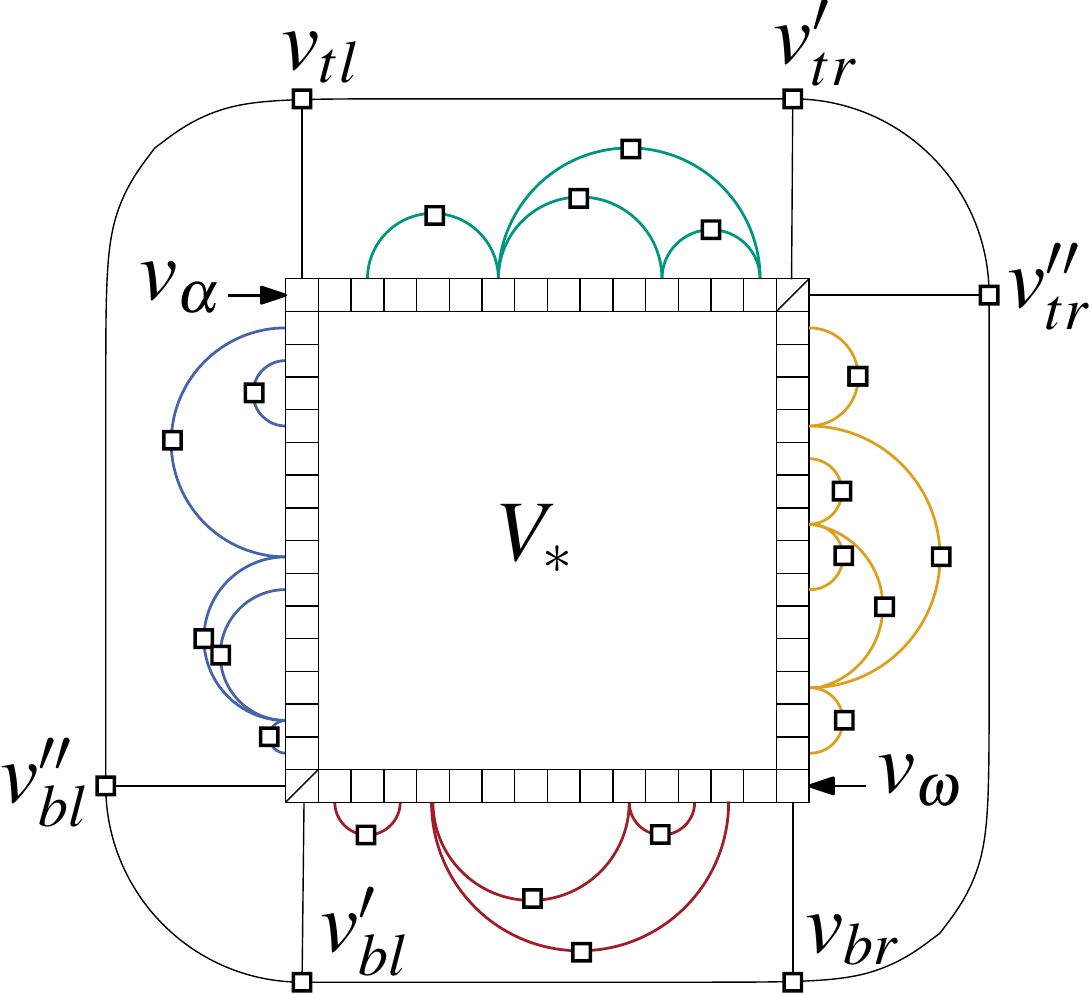}
    }\hfil
    \subfloat[]{
      \includegraphics[height=0.16\textheight,page=1]{img/hardness-fixedorder}
    }
    \caption{
      (a) An intersection representation $\langle \mathcal{P}, \mathcal{O} \rangle$  of a circle graph $G=(N,A)$. (b) Instance $(V,E,\mathcal{C})$ of {\sc NodeTrix Planarity} corresponding to $\langle \mathcal{P}, \mathcal{O} \rangle$.
    }
    \label{fi:hardness-prescribedorder-short}
  \end{figure}
The instance $(V,E,\mathcal{C})$ contains: (i) a cycle $D$ composed of vertices $v_{tl}$, $v'_{tr}$, $v''_{tr}$, $v_{br}$, $v'_{bl}$, and $v''_{bl}$ (each in a distinct cluster containing that vertex only) and of {\em bounding edges}; (ii) a cluster $V_*$ containing a vertex $v_i$ for each point $p_i \in \mathcal{P}$, plus vertices $v_\alpha$ and $v_\omega$; (iii) {\em corner edges} connecting $v_{tl}$, $v'_{tr}$, $v''_{tr}$, $v_{br}$, $v'_{bl}$, and $v''_{bl}$ with either $v_\alpha$ or $v_\omega$; and (iv) for every chord $c = (p_i,p_j) \in \mathcal{O}$, a path corresponding to $c$ composed of a cluster $\{v_c\}$ and of two {\em chord edges} $(v_i,v_c)$ and $(v_c,v_j)$. Let the row-column order $\sigma_{*}$ of $V_*$ be $v_\alpha,\mathcal{P},v_\omega$. We now prove the equivalence between the instances. 

($\Longrightarrow$) Suppose that the chords of $\langle \mathcal{P}, \mathcal{O} \rangle$ can be assigned colors $1,2,3,4$ so that no two chords with the same color cross. Embed $D$ in the plane and embed the matrix $M_*$ representing $V_*$ inside $D$ with row-column order $\sigma_*$. Route the corner edges inside $D$, subdividing the region inside $D$ and outside $M_*$ into four regions, each incident to a distinct side of $M_*$. Arbitrarily color these four regions with colors $1,2,3,4$; embed a path $(v_i,v_c,v_j)$ inside a region if the chord $(p_i,p_j)$ corresponding to $(v_i,v_c,v_j)$ has the color of the region. Then only paths in the same region might intersect, however if they do then they correspond to chords with the same color that cross in $\langle \mathcal{P}, \mathcal{O} \rangle$, given that the order of the vertices in $\sigma_*$ is the same as the order of the corresponding points in $\mathcal{P}$. 

($\Longleftarrow$) Suppose that $(V, E, \mathcal{C})$ has a planar \nt representation $\Gamma$ with row-column order $\sigma_*$ for the matrix $M_*$ representing $V_*$. Since $v_{\alpha}$ and $v_{\omega}$ are the first and last vertex in $\sigma_*$, the corner edges subdivide the region inside $D$ and outside $M_*$ into four regions, each incident to a distinct side of $M_*$, which we arbitrarily color $1,2,3,4$. By the planarity of $\Gamma$, each path $(v_i, v_c, v_j)$ is in one of such regions; then we color each chord $(p_i,p_j)$ with the color of the region path $(v_i, v_c, v_j)$ is embedded into. If two chords with the same color cross in $\langle \mathcal{P}, \mathcal{O} \rangle$, then the corresponding paths cross in $\Gamma$, as the order of the vertices in $\sigma_*$ is the same as the order of the corresponding points in $\mathcal{P}$. 
\end{sketch}
} 

\noindent
\begin{proof}
The membership in $\mathbb{NP}$ of {\sc \nt Planarity with Fixed Order} will be proved in Lemma~\ref{le:np}. 

For the $\mathbb{NP}$-hardness, we give a reduction from the \NPC problem that asks to determine whether a proper $4$-coloring exists for a {\em circle graph}~\cite{u-okccg-88}, which is an intersection graph of chords of a circle. Let $G=(N,A)$ be a circle graph. First, by means of the algorithm in~\cite{spinrad-rcg-94}, we construct in polynomial time an intersection representation $\langle \mathcal{P}, \mathcal{O} \rangle$ of $G$, where $\mathcal{P}$ is a linear sequence of distinct points on a circle and $\mathcal{O}$ is a set of chords between pairs of points in $\mathcal{P}$ such that: (i) each chord $c \in \mathcal{O}$ corresponds to a vertex $n \in N$ and (ii) two chords $c',c''\in \mathcal{O}$ intersect if and only if $(n',n'') \in A$, where $n'$ and $n''$ are the vertices in $N$ corresponding to $c'$ and $c''$, respectively. Then $G$ admits a proper $4$-coloring if and only if the chords in $\mathcal{O}$ can be $4$-colored so that no two chords of the same color intersect. Starting from $\langle \mathcal{P}, \mathcal{O} \rangle$ we construct an instance $(V,E,\mathcal{C})$ of {\sc NodeTrix Planarity with Fixed Order} as follows (refer to Fig.~\ref{fi:hardness-prescribedorder}). Set $\cal C$ contains:


  \begin{figure}[tb!]
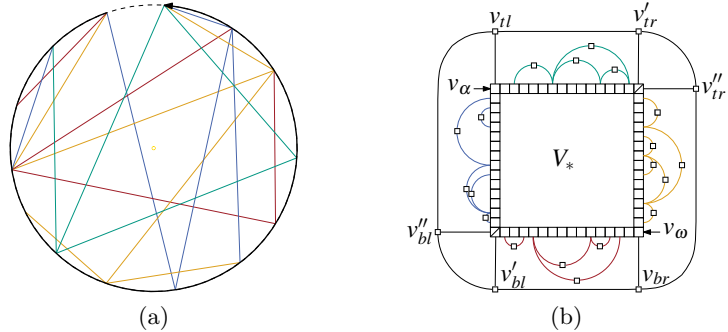

    \centering
    \subfloat[]{
      \includegraphics[height=0.2\textheight,page=2]{img/hardness-fixedorder}
      \label{fi:hardness-prescribedorder-a}
    }\hfil
    \subfloat[]{
      \includegraphics[height=0.2\textheight,page=1]{img/hardness-fixedorder}
      \label{fi:hardness-prescribedorder-b}
    }
    \caption{
      (a) An intersection representation $\langle \mathcal{P}, \mathcal{O} \rangle$  of a circle graph $G=(N,A)$. (b) Instance $(V,E,\mathcal{C})$ of {\sc NodeTrix Planarity} corresponding to $\langle \mathcal{P}, \mathcal{O} \rangle$.
    }
    \label{fi:hardness-prescribedorder}
  \end{figure}

\begin{itemize}
\item a cluster $V_*$ containing one vertex $v_i$ for each point $p_i \in \mathcal{P}$, plus two additional vertices $v_\alpha$ and $v_\omega$;
\item six clusters $\{v_{tl}\}$, $\{v'_{tr}\}$, $\{v''_{tr}\}$, $\{v_{br}\}$, $\{v'_{bl}\}$, and $\{v''_{bl}\}$, respectively; and
\item a cluster $\{v_c\}$, for each chord $c \in \mathcal{O}$.
\end{itemize}

The set $E$ contains an arbitrary set of intra-cluster edges and the following inter-cluster edges.

\begin{itemize} 
\item {\em bounding edges} $(v_{tl},v'_{tr})$, $(v'_{tr},v''_{tr})$, $(v''_{tr},v_{br})$, $(v_{br},v'_{bl})$, $(v'_{bl},v''_{bl})$, and $(v''_{bl}, v_{tl})$; 
\item {\em corner edges}  $(v_{tl},v_\alpha), (v'_{tr},v_\omega), (v''_{tr},v_\alpha), (v_{br},v_\omega)$, $(v'_{bl},v_\alpha)$, and $(v''_{bl},v_\omega)$; and 
\item {\em chord edges}: for each chord $c = (p_i,p_j) \in \mathcal{O}$, edges $(v_i,v_c)$ and $(v_c,v_j)$. 
\end{itemize}

Finally, we fix the row-column order $\sigma_{*}$ of the only non-unitary cluster $V_*$ to be $v_\alpha \circ \mathcal{P} \circ v_\omega$ (where with a slight abuse of notation we denote by $\mathcal{P}$ not only the order of the points on the circle in the given intersection representation of $G$, but also the corresponding order of the vertices in $V_* -\{v_\alpha,v_\omega\}$). We now prove the equivalence between the problem of properly $4$-coloring $\langle \mathcal{P}, \mathcal{O} \rangle$ and the constructed instance of {\sc NodeTrix Planarity with Fixed Order}.

($\Longrightarrow$) Suppose that the chords of $\langle \mathcal{P}, \mathcal{O} \rangle$ can be assigned colors $1,2,3,4$ so that no two chords with the same color intersect. We show how to construct a NodeTrix planar representation with fixed order of $(V, E, \mathcal{C})$. Represent clusters $V_*$, $\{v_{tl}\}$, $\{v'_{tr}\}$, $\{v''_{tr}\}$, $\{v_{br}\}$, $\{v'_{bl}\}$, and $\{v''_{bl}\}$ by matrices $M_*$, $M_{tl}$, $M'_{tr}$, $M''_{tr}$, $M_{br}$, $M'_{bl}$, and $M''_{bl}$, respectively, where the row-column order of $M_*$ is $\sigma_*$. Draw the bounding edges so that $M_*$ is inside the cycle $D$ they compose together with $M_{tl}$, $M'_{tr}$, $M''_{tr}$, $M_{br}$, $M'_{bl}$, and $M''_{bl}$. Draw the corner edges also inside $D$. The corner edges, together with the boundary of $M_*$, subdivide the region of the plane inside $D$ into five regions, namely one region internal to the boundary of $M_*$ and four regions incident to the top, right, bottom and left side of $M_*$. We refer to these regions as to the {\em top}, {\em right}, {\em bottom}, and {\em left region}, respectively. Depending on whether a chord $c=(p_i,p_j)$ has color $1$, $2$, $3$, or $4$, we draw the chord edges $(v_i, v_c)$ and $(v_c,v_j)$, as well as the matrix $M_c$ representing cluster $\{v_c\}$, inside the top, right, bottom, and left region, respectively. We claim that the obtained \nt representation of $(V, E, \mathcal{C})$ is planar. Suppose, for a contradiction, there is a crossing between two paths $(v_i, v_{c'}, v_j)$ and $(v_k, v_{c''}, v_h)$ corresponding to two chords $c'=(p_i,p_j)$ and $c''=(p_k,p_h)$. Then these two paths are in the interior of the same (top, right, bottom, or left) region, hence they attach to the same side of $M_*$; it follows that the two chords $c'$ and $c''$ have the same color. By the definition of $\sigma_*$, since the end-vertices of the two paths alternate along the side of $M_*$, the end-points of $c'$ and $c''$ alternate in $\cal P$. Hence, $c'$ and $c''$ cross, thus contradicting the fact that $c'$ and $c''$ have the same color.

($\Longleftarrow$) Suppose that $(V, E, \mathcal{C})$ admits a NodeTrix-planar representation $\Gamma$ with a row-column order $\sigma_*$ for the unique non-unitary cluster $V_*$. Denote by $M_*$ the matrix representing $V_*$ in $\Gamma$. We show that the chords of $\langle \mathcal{P}, \mathcal{O} \rangle$ are $4$-colorable so that no two chords of the same color intersect. 

Similarly to the proof of Theorem~\ref{th:free-ordering-free-sides}, it can be proved that the corner edges subdivide the side of $D$ that contains $M_*$ into five regions, defined as in the previous direction, so that all the vertices on the top, right, bottom, and left side of $M_*$ are incident to the top, right, bottom, and left region, respectively; while in the proof of Theorem~\ref{th:free-ordering-free-sides} this was ensured by Claim~~\ref{cl:corners}, it is here a trivial consequence of the fact that $v_{\alpha}$ and $v_{\omega}$ are the first and the last vertex in $\sigma_*$. By the planarity of $\Gamma$, both the incidence points of a path $(v_i, v_c, v_j)$ with the boundary of $M_*$ are on the same side of $M_*$. Then color all the chords $c=(p_i,p_j)$ in $\cal O$ such that path $(v_i, v_c, v_j)$ is in the top, right, bottom, or left region with color $1$, $2$, $3$, or $4$, respectively.

We claim that the obtained $4$-coloring of the chords of $\langle \mathcal{P}, \mathcal{O} \rangle$ is proper. Suppose, for a contradiction, there is a crossing in $\langle \mathcal{P}, \mathcal{O} \rangle$ between two chords $c'=(p_i,p_j)$ and $c''=(p_k,p_h)$ both with color $1$ -- the discussion for the other colors is analogous. Then $c'$ and $c''$ have alternating end-points in $\langle \mathcal{P}, \mathcal{O} \rangle$. Since the order of the points in $\cal P$ coincides with the order of the corresponding vertices in $\sigma_*$, it follows that paths $(v_i, v_{c'}, v_j)$ and $(v_k, v_{c''}, v_h)$ have alternating end-points on the top side of $M_*$, hence they cross, a contradiction to the planarity of $\Gamma$. This concludes the proof.
\end{proof}

Let $G=(V,E,\mathcal{C})$ be a flat clustered graph with a given row-column order $\sigma_i$ and side assignment $s_i$, for each $V_i \in \mathcal{C}$. Then $G$ is {\em \nt planar with fixed order and fixed side} if it is simultaneously planar with fixed order and with fixed side. 

\begin{theorem}\label{th:fixed-ordering-fixed-sides}
{\sc NodeTrix Planarity with Fixed Order and Fixed Side} can be solved in linear time.
\end{theorem}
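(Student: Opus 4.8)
The plan is to reduce {\sc \nt Planarity with Fixed Order and Fixed Side} to ordinary planarity testing. The intuition is that, once $\sigma_i$ and $s_i$ are fixed for every cluster $V_i$, each matrix $M_i$ behaves like a rigid ``fat vertex'': the attachment midpoints $m^u_{\bullet}$ occur along $\partial Q_i$ in a fixed cyclic order, namely the order in which the columns and rows of a matrix with row-column order $\sigma_i$ are met when walking around its boundary. The only genuine freedom left in an \nt representation is then the placement of these fat vertices, the routing of the inter-cluster edges, and the rotation of the bundle of inter-cluster edges leaving a common midpoint --- and this is exactly the freedom of choosing a planar embedding of a suitable auxiliary graph.

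First I would build this auxiliary graph $G^+$ from $G=(V,E,\mathcal{C})$. Intra-cluster edges play no role in \nt planarity (they are represented by matrix cells, not drawn), so I discard them. For each cluster $V_i$ with $n_i=|V_i|$, I replace $V_i$ by a wheel $H_i\cong W_{4n_i}$ whose rim, in clockwise cyclic order, is $c^{\textrm{\sc{t}}}_1,\dots,c^{\textrm{\sc{t}}}_{n_i},r^{\textrm{\sc{r}}}_1,\dots,r^{\textrm{\sc{r}}}_{n_i},c^{\textrm{\sc{b}}}_{n_i},\dots,c^{\textrm{\sc{b}}}_1,r^{\textrm{\sc{l}}}_{n_i},\dots,r^{\textrm{\sc{l}}}_1$, where $c^{\bullet}_k$ and $r^{\bullet}_k$ denote the column and the row of vertex $\sigma_i^{-1}(k)$ on side $\bullet$; this is precisely the clockwise cyclic order of the midpoints $m^u_{\bullet}$ around the boundary of a matrix ordered by $\sigma_i$. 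Then I reroute each inter-cluster edge $(u,v)$ with $u\in V_i$ and $v\in V_j$ so that it connects the rim vertex of $H_i$ indexed by $(\sigma_i(u),s_i((u,v)))$ to the rim vertex of $H_j$ indexed by $(\sigma_j(v),s_j((u,v)))$. Since $\sum_i 4n_i=4|V|$, the graph $G^+$ has $O(|V|+|E|)$ vertices and edges and is constructed in linear time; hence, since planarity can be decided in linear time, it suffices to prove the equivalence below.

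The core claim is: $G$ is \nt planar with fixed order and fixed side if and only if $G^+$ is planar. For the forward direction, given such a representation $\Gamma$, I draw each $H_i$ inside $Q_i$ with its hub at the centre, its rim vertices placed at the midpoints $m^u_{\bullet}$ on $\partial Q_i$, its spokes inside $Q_i$, and its rim cycle as arcs running just inside $\partial Q_i$; I keep each inter-cluster edge essentially as it is in $\Gamma$, preserving at every shared midpoint the cyclic order its edges have in $\Gamma$. For the converse, I take a planar embedding of $G^+$; since $3$-connected planar graphs have an essentially unique planar embedding, the embedding of $W_{4n_i}$ is rigid up to reflection, so its rim bounds a disk containing only the hub and the spokes --- no inter-cluster edge can enter one of the bounded triangular faces of $H_i$ (each bounded by the hub and two consecutive rim vertices) without being trapped there. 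I then place $Q_i$ on this rim disk, using an axis-parallel square or its reflection across the main diagonal according to the orientation the embedding gives the rim (legitimate because $M_i$ is symmetric, so the two representations of $V_i$ coincide), so that the rim vertices land at the prescribed row/column midpoints on the prescribed sides; I push the hub and spokes into the interior of $Q_i$ and route the inter-cluster edges as in $G^+$. Planarity of $G^+$ then delivers pairwise disjoint matrices, no inter-cluster edge crossing a matrix or another inter-cluster edge, and the prescribed order and sides, i.e., a valid \nt representation (after a routine straightening of the matrices to axis-parallel squares).

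The step I expect to be the main obstacle is the careful bookkeeping in this equivalence: checking that the gadget captures the \nt model exactly, neither more nor less. On one side, one must verify that permuting the inter-cluster edges sharing a single midpoint $m^u_{\bullet}$ --- they all attach to that midpoint, not to distinct points --- corresponds precisely to the free rotation around the matching rim vertex, and that an inter-cluster edge that in $\Gamma$ loops around its own matrix simply lies in the outer face of $H_i$ in the new drawing, which is harmless. On the other side, one must check that the rigid rim order of $W_{4n_i}$, read off an oriented planar embedding, always yields an assignment of rows and columns to the four sides of an axis-parallel square that is geometrically realizable and consistent with $\sigma_i$ --- this is exactly where symmetry of $M_i$ is needed, since it makes a square and its main-diagonal reflection interchangeable as a representation of $V_i$ --- and that discarding intra-cluster edges loses nothing. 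Once this is in place, the linear-time planarity test gives the claimed running time.
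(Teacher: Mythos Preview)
Your reduction to ordinary planarity of $G^+$ has a genuine gap in the reverse direction. A wheel $W_{4n_i}$, being $3$-connected, fixes the cyclic order of its rim vertices only \emph{up to reflection}; it cannot distinguish the intended clockwise order $c^{\textrm{\sc{t}}}_1,\dots,r^{\textrm{\sc{l}}}_1$ from its reverse. You try to absorb this ambiguity by reflecting $Q_i$ across its main diagonal, arguing that symmetry of $M_i$ makes this harmless. But that reflection swaps the \textsc{t} side with the \textsc{l} side and the \textsc{r} side with the \textsc{b} side, so an edge $e$ with $s_i(e)=\textrm{\sc{t}}$ would now attach to the left side of $Q_i$ --- violating the fixed side assignment. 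Symmetry of the matrix says nothing about the side labels, which are part of the input.

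This is not just a missing detail; the equivalence is false. Take $\mathcal{C}=\{V_1,V_2\}$ with $V_1=\{a_1,\dots,a_4\}$, $V_2=\{c_1,\dots,c_4\}$, $\sigma_1,\sigma_2$ the identity, edges $e_i=(a_i,c_i)$ for $i=1,\dots,4$, and $s_1(e_i)=s_2(e_i)$ equal to \textsc{t}, \textsc{r}, \textsc{b}, \textsc{l} for $i=1,2,3,4$ respectively. Then the clockwise order of $e_1,\dots,e_4$ around $Q_1$ equals the clockwise order around $Q_2$; since four curves joining two disjoint disks are crossing-free only when the two cyclic orders are reverses of each other, this instance is \emph{not} \nt planar with fixed order and side. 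Yet $G^+$ \emph{is} planar: draw $H_2$ mirrored and the four inter-cluster edges become nested. So planarity of $G^+$ does not imply \nt planarity.

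The paper avoids this by not reducing to ordinary planarity at all: it contracts each cluster to a single vertex and asks for a planar embedding in which the incident edges respect a \emph{prescribed} circular grouping (the sequence $\mathcal{E}_{\textrm{\sc{t}},1},\dots,\mathcal{E}_{\textrm{\sc{l}},1}$), invoking a linear-time constrained-embedding test. That constraint is genuinely one-sided, not up to reflection, which is exactly what your wheel gadget cannot encode. If you want to rescue the gadget route, you would need an additional global structure tying the orientations of all wheels together (so that a single global reflection suffices), and then argue that this extra structure never obstructs planarity; as stated, the proof does not go through.
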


\begin{proof}
Consider the graph $G'$ obtained from an instance $G=(V,E,\mathcal{C})$ of {\sc NodeTrix Planarity with Fixed Order and Fixed Side} by collapsing each cluster $V_i \in \mathcal{C}$ into a vertex~$v_i$. Intuitively, instance $G$ is \nt planar with fixed order and fixed side if and only if $G'$ is planar with the additional constraint that the clockwise order of the edges incident to each vertex $v_i$ is ``compatible'' with the row-column order $\sigma_i$ and the side assignment $s_i$ for the cluster $V_i$.

More formally, denote by ${\cal E}_i$ the set of the inter-cluster edges incident to $V_i$ and denote by $v_i(e)$ the vertex of $V_i$ incident to an edge $e \in {\cal E}_i$. The edges in ${\cal E}_i$ can be decomposed into a circular sequence of sets ${\cal S} = {\cal E}_{{\textrm{\sc{t}}},1}, {\cal E}_{{\textrm{\sc{t}}},2}, \dots, {\cal E}_{{\textrm{\sc{t}}},|V_i|},$  
${\cal E}_{{\textrm{\sc{r}}},1}, {\cal E}_{{\textrm{\sc{r}}},2}, \dots, {\cal E}_{{\textrm{\sc{r}}},|V_i|},$  
${\cal E}_{{\textrm{\sc{b}}},|V_i|}, {\cal E}_{{\textrm{\sc{b}}},|V_i-1|}, \dots, {\cal E}_{{\textrm{\sc{b}}},1},$  
${\cal E}_{{\textrm{\sc{l}}},|V_i|}, {\cal E}_{{\textrm{\sc{l}}},|V_i-1|}, \dots, {\cal E}_{{\textrm{\sc{l}}},1}$, where each ${\cal E}_{{\textrm{\sc{x}}},j}$, with $\textrm{\sc{x}} \in \{\textrm{\sc{t}}, \textrm{\sc{b}},$ $\textrm{\sc{l}}, \textrm{\sc{r}}\}$ and $j \in \{1, \dots, |V_i|\}$, contains the edges $e\in {\cal E}_i$ such that $s_i(e)=\textrm{\sc{x}}$ and $\sigma_i(v_i(e))=j$. Let $\Gamma'$ be a planar embedding of $G'$ and let $\lambda_i$ denote the clockwise order of the edges incident to vertex $v_i$ of $G'$ in $\Gamma'$. 
The embedding $\Gamma'$ of $G'$ is \emph{compatible} with functions $\sigma_i$ and $s_i$ if: (i) all the edges belonging to the same set ${\cal E}_{{\textrm{\sc{x}}},j}$ appear consecutively in $\lambda_i$, and (ii) for any three edges $e' \in {\cal E}_{{\textrm{\sc{x}}'},j'}$,  $e'' \in {\cal E}_{{\textrm{\sc{x}}''},j''}$, and $e''' \in {\cal E}_{{\textrm{\sc{x}}'''},j'''}$, where ${\cal E}_{{\textrm{\sc{x}}'},j'}$, ${\cal E}_{{\textrm{\sc{x}}''},j''}$, and ${\cal E}_{{\textrm{\sc{x}}'''},j'''}$ are all distinct, appear in this clockwise order in $\lambda_i$ if and only if ${\cal E}_{{\textrm{\sc{x}}'},j'}$, ${\cal E}_{{\textrm{\sc{x}}''},j''}$, and ${\cal E}_{{\textrm{\sc{x}}'''},j'''}$ appear in this circular order in $\cal S$.

It can be easily seen that an instance of {\sc NodeTrix Planarity with Fixed Order and Fixed Side} has a solution if and only if $G'$ admits an embedding $\Gamma'$ that is compatible with $\sigma_i$ and $s_i$, for all vertices $v_i$ of $G'$.  
We obtain an instance of constrained planarity for $G'$ that can be tested in linear time with known techniques~\cite{gkm-ptoeiec-08}.
\end{proof}


We conclude the section with the following lemma.

\begin{lemma} \label{le:np}
{\sc NodeTrix Planarity}, {\sc NodeTrix Planarity with Fixed Side}, and {\sc NodeTrix Planarity with Fixed Order} are in \NP.
\end{lemma}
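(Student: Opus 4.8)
The plan is to give, for each of the three problems, a polynomial-size certificate verifiable in polynomial time. The certificate records precisely the combinatorial degrees of freedom that the problem leaves open: a row-column order $\sigma_i$ for every cluster $V_i$ and a side assignment $s_i$ for every inter-cluster edge incident to $V_i$. Concretely, for {\sc \nt Planarity} the certificate lists all the $\sigma_i$'s and all the $s_i$'s; for {\sc \nt Planarity with Fixed Side} it lists only the $\sigma_i$'s (the $s_i$'s being part of the input); and for {\sc \nt Planarity with Fixed Order} it lists only the $s_i$'s. Each $\sigma_i$ is a permutation of $V_i$ and can be written with $O(|V_i|\log |V_i|)$ bits, while each $s_i(e)$ is one of four symbols; hence, summing over all clusters and all inter-cluster edges, the certificate has size polynomial in $|V| + |E|$.

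For the verification, once the input clustered graph is augmented with a complete set of row-column orders and side assignments---partly guessed, partly already prescribed---we are exactly in the situation of {\sc \nt Planarity with Fixed Order and Fixed Side}, which by Theorem~\ref{th:fixed-ordering-fixed-sides} is decidable in linear time. The verifier just runs that algorithm and accepts iff the instance is positive. Completeness follows because any planar \nt representation $\Gamma$ of the input (respecting the prescribed orders and sides in the constrained variants) determines a row-column order for each matrix and a side for each inter-cluster edge incident to each matrix, and $\Gamma$ itself then witnesses that the corresponding fixed-order-fixed-side instance is positive. Soundness follows because the algorithm of Theorem~\ref{th:fixed-ordering-fixed-sides} actually constructs a planar \nt representation realizing the supplied orders and sides, which in the constrained variants still honors the input-prescribed data since that data was passed unchanged to the algorithm. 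Since the certificate has polynomial size and the verifier runs in linear time, each of the three problems is in \NP.

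The only delicate point, and it is minor, is making sure that ``fixing'' the guessed orders and sides does not lose any positive instance---that is, that a yes-instance of the unconstrained (or partially constrained) problem always admits a representation whose read-off orders and sides form a valid certificate; this is immediate from the completeness argument above. An alternative, more self-contained route that avoids the appeal to Theorem~\ref{th:fixed-ordering-fixed-sides} is to have the certificate additionally contain a rotation system of the auxiliary multigraph obtained from $G$ by replacing each matrix $M_i$ with a cycle of $4|V_i|$ ``port'' vertices arranged in the cyclic order dictated by $\sigma_i$, and by attaching each inter-cluster edge to the port determined by its endpoint and by $s_i$; one then checks in polynomial time that this rotation system is planar and that the rotation at each matrix is consistent with $\sigma_i$ and with the convention that the left and right sides carry rows while the top and bottom sides carry columns. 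Both routes work; the main obstacle is routine bookkeeping---certifying that replacing a matrix by its boundary cycle faithfully encodes the constraint that no inter-cluster edge enters a matrix interior---and this point is already handled inside the proof of Theorem~\ref{th:fixed-ordering-fixed-sides}.
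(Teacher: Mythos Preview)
Your proof is correct and follows essentially the same approach as the paper: guess the missing combinatorial data (row-column orders and/or side assignments) and verify in polynomial time using the algorithm of Theorem~\ref{th:fixed-ordering-fixed-sides}. The paper's version is terser and omits your explicit completeness/soundness discussion and alternative rotation-system route, but the core idea is identical.
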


\remove{ 
\begin{sketch}
The number of distinct row-column orders and side assignments for an instance $(V,E,\mathcal{C})$ is a function of $|V|+|E|$. The statement follows from Theorem~\ref{th:fixed-ordering-fixed-sides}.
\end{sketch}
} 

\noindent
\begin{proof}
We prove the statement for {\sc \nt Planarity}; the other proofs are analogous. Consider an instance $(V,E,\mathcal{C})$ of {\sc \nt Planarity}. For each $V_i \in \mathcal{C}$, guess a row-column order $\sigma_i$ and a side assignment $s_i$; then use the algorithm described in the proof of Theorem~\ref{th:fixed-ordering-fixed-sides} to test in linear time whether $(V,E,\mathcal{C})$ is \nt planar with fixed order and fixed side. Since the number of distinct row-column orders and side assignments is a function of $|V|+|E|$, we get the \NP~membership.
\end{proof}

\section{Monotone NodeTrix Representations}\label{se:monotone}

Let $G=(V,E,\mathcal{C})$ be a flat clustered graph and $\gamma$ be a square assignment for $G$ that maps each cluster in $\cal C$ to an axis-aligned square in the plane. A curve is {\em $x$-monotone} (resp.\ {\em $y$-monotone}) if no two of its points have the same projection on the $x$-axis (resp.\ on the $y$-axis) and is {\em $xy$-monotone} if it is either a horizontal or a vertical segment or it is both $x$- and $y$-monotone.  A {\em monotone} \nt representation $\Gamma$ of $\langle G, \gamma \rangle$ is an \nt representation such that: 
\begin{enumerate}
\item all the inter-cluster edges are represented by $xy$-monotone curves; 
\item for each cluster $V_i \in {\cal C}$, the boundary of the matrix $M_i$ representing $V_i$ is $Q_i=\gamma(V_i)$; 
\item for each pair of adjacent clusters $V_i$ and $V_j$, with $i\neq j$, the convex hull of $Q_i$ and $Q_j$ does not intersect any other square $Q_k$, with $k\neq i, j$ -- we call this convex hull the \emph{pipe} of $Q_i$ and $Q_j$; and 
\item all the inter-cluster edges between vertices in $V_i$ and vertices in $V_j$ lie inside the pipe of $Q_i$ and $Q_j$.
\end{enumerate}
In a monotone \nt representation $\Gamma$ of $G$ let $\chi_i(\Gamma)$ denote the number of edge crossings between pairs of inter-cluster edges incident to $V_i$. Let $\chi(\Gamma) = \sum_{i} \chi_i(\Gamma)$, where the sum is over all the clusters $V_i \in {\cal C}$; we say that $\Gamma$ is {\em locally planar} if $\chi(\Gamma)=0$ and no inter-cluster edge intersects any matrix except at its incidence points. The notions of {\em fixed order} and {\em fixed side} easily extend to monotone \nt representations. 

We study the complexity of testing if a flat clustered graph with fixed square assignment admits a monotone locally-planar \nt representation, a problem which we call {\sc Monotone \nt Local Planarity} ({\sc MNTLP}). The next two theorems show the \NP-hardness of {\sc MNTLP} and of its variant with fixed side.

\begin{theorem}\label{th:monotone-free-ordering-free-sides}
{\sc MNTLP} is~\NPC.
\end{theorem}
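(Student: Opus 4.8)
The plan is to establish that {\sc MNTLP} is in \NP and that it is \NPH. Membership in \NP follows the pattern of Lemma~\ref{le:np}: the number of row-column orders $\sigma_i$ and side assignments $s_i$ over all clusters of $G$ is bounded by a function of $|V|+|E|$, so it suffices to guess all of them and then verify in polynomial time that $\langle G,\gamma\rangle$ admits a monotone locally-planar representation realizing the guess. Once $\sigma_i$, $s_i$, and $\gamma$ are fixed, every pipe is determined, the sequence of attachment points along each side of each matrix is determined, and the existence of the required non-crossing $xy$-monotone routing inside the pipes reduces to checking the mutual compatibility of these linear sequences around every matrix; as in the proof of Theorem~\ref{th:fixed-ordering-fixed-sides}, this can be phrased as a constrained-planarity test and solved in polynomial time.

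For the \NPHN I would reduce from the \NPC problem {\sc Betweenness}, whose input is a set of items $a_1,\dots,a_h$ together with $t$ ordered triplets $\tau_j=\langle a_{b_j},a_{c_j},a_{d_j}\rangle$, and which asks for a total order of the items in which each $a_{c_j}$ lies between $a_{b_j}$ and $a_{d_j}$ (this choice is in analogy with Theorem~\ref{th:free-ordering-fixed-sides}). Given such an instance I would build a flat clustered graph with a large cluster $V_*$ whose row-column order will encode the sought total order --- with one vertex per item, or, to decouple the gadgets, a \emph{column} $v_{[i,1]},\dots,v_{[i,t]}$ of copies per item $a_i$ --- together with single-vertex clusters playing two roles: a rigid \emph{frame} (in the spirit of the cycle $D$ and the corner edges of Theorem~\ref{th:free-ordering-free-sides}) that pins down the cyclic arrangement around the matrix $M_*$ representing $V_*$ and forces the remaining edges onto prescribed sides of $M_*$; and, for every triplet $\tau_j$, a \emph{betweenness gadget} --- a short path through two single-vertex clusters attached to the item-vertices of $a_{b_j}$, $a_{c_j}$, and $a_{d_j}$ --- whose two bend-vertices are placed by $\gamma$ at fixed positions so that, since all four gadget edges are incident to $M_*$ and must be drawn as non-crossing $xy$-monotone curves, the gadget admits a valid routing if and only if the item-vertex of $a_{c_j}$ lies between those of $a_{b_j}$ and $a_{d_j}$ in the order of $M_*$. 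A further ingredient, in the spirit of the $\pi_i$ paths of Theorem~\ref{th:free-ordering-fixed-sides}, forces the copies of each item into a common relative order. Finally, $\gamma$ must be chosen so that no pipe of a pair of adjacent clusters meets a square not incident to it, and so that the chosen positions together with $xy$-monotonicity leave essentially no freedom in which side of $M_*$ an edge attaches to; this is arranged by placing each gadget in a thin corridor immediately adjacent to a single side of $M_*$.

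Given this construction, the forward direction is routine: from a valid betweenness order one reads off the row-column order of $M_*$, places all matrices as prescribed by $\gamma$, and routes the frame edges and the gadget edges as non-crossing $xy$-monotone curves inside their pipes, precisely because the order satisfies every betweenness constraint. The reverse direction is the core of the argument and follows the blueprint of Theorem~\ref{th:free-ordering-free-sides}: from a monotone locally-planar representation $\Gamma$ of $\langle G,\gamma\rangle$, one first shows --- using the clean pipes of $\gamma$ and $xy$-monotonicity --- that the frame edges and the gadget edges attach to the forced sides of $M_*$ (the analogue of Claim~\ref{cl:corners}); one then deduces, as in Lemma~\ref{le:same-ordering}, that the order induced on the item-vertices is well defined up to a global reversal; and one finally argues that a violated betweenness constraint would force the corresponding gadget path to self-cross along its side of $M_*$, contradicting local planarity. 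Reading off the induced order then yields a valid {\sc Betweenness} solution.

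The step I expect to be the main obstacle is the design of $\gamma$: one must simultaneously keep every pipe ``clean'' and use the frame together with $xy$-monotonicity to rule out all ``exotic'' side assignments and edge routings that could avoid local crossings while violating a betweenness constraint, thereby effectively reducing the free-side setting to a fixed-side one. This is in essence a geometric layout problem --- arranging a bounded-size family of thin, mutually non-interfering corridors around $M_*$ --- and it is where most of the technical care will be needed; the combinatorial equivalence itself is a fairly direct adaptation of the arguments already developed for Theorems~\ref{th:free-ordering-free-sides} and~\ref{th:free-ordering-fixed-sides}.
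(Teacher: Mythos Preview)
Your high-level plan is sound and matches the paper's: membership in \NP via guessing $\sigma_i$ and $s_i$ and then invoking the fixed-order-fixed-side test, and \NPHN via a reduction from {\sc Betweenness} in which the row-column order of a matrix encodes the sought total order and small gadgets enforce each triplet.

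The architecture, however, differs from the paper in a way that bears directly on what you flag as the main obstacle. You propose a \emph{single} large cluster $V_*$ surrounded by a frame cycle in the style of Theorem~\ref{th:free-ordering-free-sides}, with all $t$ betweenness gadgets packed around it. The paper instead uses a \emph{chain} of $m$ large clusters $V_1,\dots,V_m$ (one per triplet), placed side by side at the same height, with order-preserving edges between consecutive matrices propagating the permutation, together with two tiny clusters $\{x^0\}$ and $\{u\}$ on the left to anchor a distinguished vertex $x^1$ as first in $\sigma_1$. Each triplet $\tau_i$ then gets its own two-vertex gadget $\{\alpha^i,\beta^i\}$ placed \emph{above the gap} between $Q_{i-1}$ and $Q_i$, so that its pipe to $Q_i$ is automatically clean and the betweenness edges are forced to the top side of $M_i$ by $xy$-monotonicity and the protecting edge $(x^{i-1},x^i)$.

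This buys exactly the thing you are worried about. With one large matrix $M_*$, the pipe of $M_*$ and any small gadget above it is the convex hull of the two squares, which sweeps across the whole top side of $M_*$; a second gadget above $M_*$ will typically fall inside that hull, violating the clean-pipe requirement before any combinatorics even start. Staggering the gadgets or spreading them to different sides only gives you a constant number of slots, not $t$. The paper's chain layout sidesteps this entirely: each gadget sits in its own private corridor, and the only nontrivial argument left is that monotonicity plus the anchoring by $\{x^0\},\{u\}$ and the protecting edges forces $x^i$ first in every $\sigma_i$ and pushes the betweenness edges to the top of $M_i$. Relatedly, a frame cycle $D$ is a tool for the general model, where it constrains an otherwise free embedding; in the monotone model $\gamma$ already fixes positions, so the paper replaces the cycle by a couple of anchor clusters and lets the geometry do the rest. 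If you want to complete your route, the cleanest fix is to adopt the chain-of-matrices layout rather than try to pack everything around a single $M_*$.
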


\noindent
\begin{proof}
The proof that the problem is in \NP is similar to the proof of Theorem~\ref{th:fixed-ordering-fixed-sides}: one can guess a row-column order and a side assignment for each cluster; then the monotone \nt local planarity of the given clustered graph with the given square assignment and the guessed row-column order and side assignment can be tested in polynomial time by Theorem~\ref{th:monotone-fixed-order-and-side-polynomial}, to be presented later.

For the \NP-hardness we give a reduction from the \NPC problem {\sc Betweenness}~\cite{o-top-79}, defined in the proof of Theorem~\ref{th:free-ordering-fixed-sides}. Consider an instance $\cal T$ of {\sc Betweenness} consisting of a set of $h$ items $\{a_1, \dots a_h\}$ and $m$ ordered triplets $\tau_i = \langle a_{b_i},a_{c_i},a_{d_i} \rangle$, with $i = 1, \dots, m$. We construct an instance $\langle G=(V,E,\mathcal{C}),\gamma \rangle$ of {\sc MNTLP} as follows (refer to Fig.~\ref{fig:monotone-hardness}). Let $\cal C$ consist of:

  \begin{figure}[tb!]
    \centering
    \includegraphics[width=0.8\textwidth,page=1]{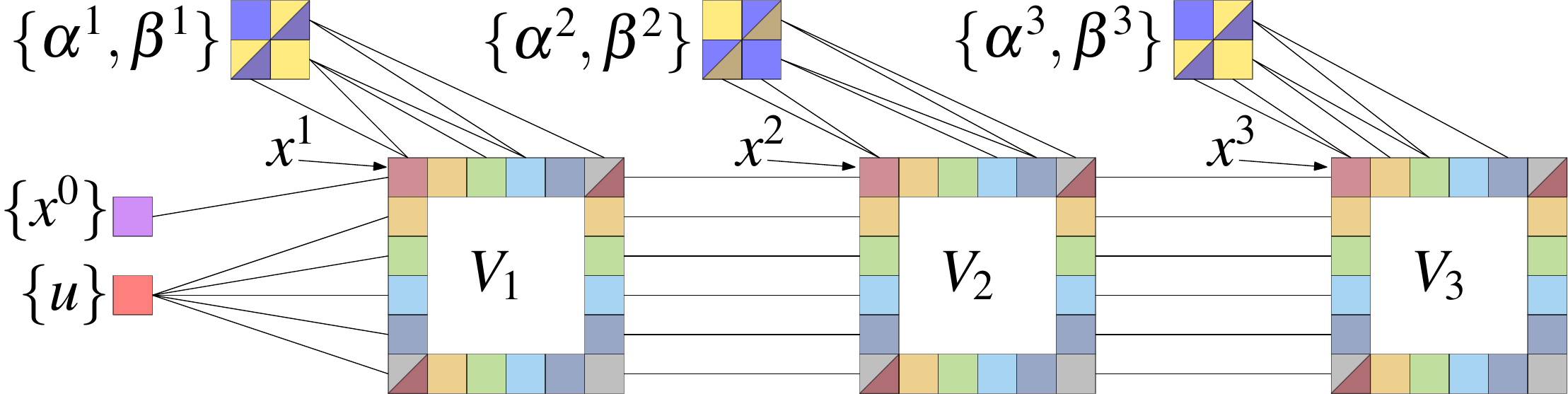}
    \caption{Illustration for the proof of Theorem~\ref{th:monotone-free-ordering-free-sides}, with $m=3$.}
  \label{fig:monotone-hardness}
  \end{figure}
  
\begin{itemize}
\item for $i=1,\dots,m$, a cluster $V_i$ containing $h+1$ vertices $v^i_1,\dots,v^i_h,x^i$;
\item two clusters $\{x^0\}$ and $\{u\}$; and 
\item for $i=1,\dots,m$, a cluster $\{\alpha^i,\beta^i\}$.
\end{itemize}

The set $E$ contains an arbitrary set of intra-cluster edges and the following inter-cluster edges.

\begin{itemize}
\item {\em order-preserving edges} connecting $v^i_j$ with $v^{i+1}_j$, for $1\leq i\leq m-1$ and $1\leq j\leq h$;
\item {\em side-filling edges} connecting $u$ with $v^1_j$, for $1\leq j\leq h$;
\item {\em protecting edges} connecting $x^i$ with $x^{i+1}$, for $0\leq i\leq m-1$; 
\item {\em corner edges} connecting $x^i$ with $\alpha^i$ and $\beta^i$, for $1\leq i\leq m$; and 
\item {\em betweenness edges} connecting, for each triplet $\tau_i = \langle a_{b_i},a_{c_i},a_{d_i} \rangle$ with $i\in \{1, \dots, m\}$, the vertex $v^i_{b_i}$ with $\alpha^i$, the vertex $v^i_{d_i}$ with $\beta^i$, and the vertex $v^i_{c_i}$ with both $\alpha^i$ and $\beta^i$. 
\end{itemize}

Finally, square assignment $\gamma$ is defined as follows:

\begin{itemize} 
\item for $i=1,\dots,m$, the cluster $V_i$ is assigned to a square $Q_i$ so that, for every $1\leq i < j\leq m$, we have $\min_y\{Q_i\}=\min_y\{Q_j\}$, $\max_y\{Q_i\}=\max_y\{Q_j\}$, and $\max_x\{Q_i\}<\min_x\{Q_j\}$; 
\item the cluster $\{x^0\}$ (the cluster $\{u\}$) is assigned to a square $Q_0$ (resp.\ $Q_u$), so that $\min_x\{Q_0\}=\min_x\{Q_u\}<\max_x\{Q_0\}=\max_x\{Q_u\}<\min_x\{Q_1\}$, and $\min_y\{Q_1\}<\min_y\{Q_u\}<\max_y\{Q_u\}<\min_y\{Q_0\}<\max_y\{Q_0\}<\max_y\{Q_1\}$; and 
\item for $i=1,\dots,m$, the cluster $\{\alpha^i,\beta^i\}$ is assigned to a square $Q^{\alpha}_i$ so that $\max_x\{Q_{i-1}\}<\min_x\{Q^{\alpha}_{i}\}<\max_x\{Q^{\alpha}_{i}\}<\min_x\{Q_{i}\}$, and $\max_y\{Q_i\}<\min_y\{Q^{\alpha}_i\}$.
\end{itemize} 

Notice that the above definition of $\gamma$ is such that, for each pair of adjacent clusters $V_i$ and $V_j$, with $i\neq j$, the convex hull of $Q_i$ and $Q_j$ does not intersect any other square $Q_k$, with $k\neq i, j$. 

We now prove the equivalence between the given instance of {\sc Betweenness} and the constructed instance of {\sc MNTLP}. 

($\Longrightarrow$) If $\cal T$ admits an order $\sigma=(a_{\pi_1}, \dots, a_{\pi_h})$ in which $a_{c_i}$ appears between $a_{b_i}$ and $a_{d_i}$ in $\sigma$, for each triplet $\tau_i = \langle a_{b_i},a_{c_i},a_{d_i} \rangle$, we construct a monotone locally-planar \nt representation $\Gamma$ with fixed square assignment of $G$ as follows. For $i=1,\dots,m$, represent $V_i$ as a matrix $M_i$ with boundary $Q_i$ and with row-column order $x^i,v^i_{\pi_1},\dots,v^i_{\pi_h}$; represent $\{\alpha^i,\beta^i\}$ as a matrix $M^{\alpha}_i$ with boundary $Q^{\alpha}_i$ and with row-column order $\alpha^i,\beta^i$ or $\beta^i,\alpha^i$ depending on whether $v^i_{b_i}$ follows or precedes $v^i_{d_i}$ in $\sigma$, respectively; the representation of the unitary clusters $\{x^0\}$ and $\{u\}$ in $\Gamma$ is trivially defined. Assign every inter-cluster edge incident to $x^0$ (or $u$) to the right side of $Q_0$ (or $Q_u$) and to the left side of $Q_1$, every order-preserving or protecting edge between a vertex in $V_i$ and a vertex in $V_{i+1}$ to the right side of $Q_i$ and to the left side of $Q_{i+1}$, every corner edge incident to $x^i$ to the top side of $Q_i$ and to the bottom side of $Q^{\alpha}_i$, and every betweenness edge incident to a vertex in $V_i$ to the top side of $Q_i$ and to the right side of $Q^{\alpha}_i$. Finally, draw all the inter-cluster edges as straight-line segments in $\Gamma$. Observe that such segments are $xy$-monotone curves inside the corresponding pipe.

Representation $\Gamma$ has no crossing between any inter-cluster edge and any matrix, as a consequence of the square and side assignments and independently of the row-column order of the matrices. Further, the order-preserving, side-filling, and protecting edges do not cross the corner and betweenness edges since they are separated by the horizontal line $y=\max_y\{Q_1\}$, and do not cross each other since the row-column orders of any two matrices $M_i$ and $M_{i+1}$ both correspond to $\sigma$ with $x^i$ and $x^{i+1}$ as the first element, respectively. The corner edges incident to $x^i$ do not cross the betweenness edges incident to vertices in $V_i$, because of the side assignment of these edges to $Q^{\alpha}_i$ and since $x^i$ precedes $v^i_{\pi_1},\dots,v^i_{\pi_h}$ in the row-column order of $M_i$. Finally, the betweenness edges do not cross each other because of the row-column order defined for $M^{\alpha}_i$ and since the top side of the column representing $v^i_{c_i}$ in $M_i$ is between the top sides of the columns representing $v^i_{b_i}$ and $v^i_{d_i}$ in $M_i$.

($\Longleftarrow$) Suppose that $G$ admits a monotone locally-planar \nt representation $\Gamma$ with fixed square assignment $\gamma$. Let $M_1,\dots,M_m$ be the matrices representing the clusters $V_1,\dots,V_m$ in $\Gamma$, respectively. First, the monotonicity of $\Gamma$ and the placement of squares $Q_0$, $Q_1$, and $Q_u$ imply that the side-filling edges, as well as the edge $(x^0,x^1)$, lie to the right of $Q_0$ and $Q_u$, to the left of $Q_1$, above the line $y=\min_y\{Q_1\}$, and below the line $y=\max_y\{Q_1\}$. Since $\max_y\{Q_u\}<\min_y\{Q_0\}$, we have that all the side-filling edges lie below edge $(x^0,x^1)$, hence the planarity of $\Gamma$ implies that $x^1$ is the first vertex in the row-column order of $M_1$; let $x^1,v^1_{\pi_1},\dots,v^1_{\pi_h}$ be such an order, for some permutation $\pi_1,\dots,\pi_h$ of $\{1,\dots,h\}$. We claim that the total ordering $\sigma=(a_{\pi_1}, \dots, a_{\pi_h})$ is a solution to instance $\cal T$ of {\sc betweenness}.

We first prove that the order $x^1,v^1_{\pi_1},\dots,v^1_{\pi_h}$ is ``preserved'' in $M_2,\dots,M_m$. The monotonicity of $\Gamma$ and the placement of squares $Q_i$ imply that, for $1\leq i\leq m-1$, the order-preserving and protecting edges between vertices in $V_i$ and vertices in $V_{i+1}$ are to the right of $Q_i$, to the left of $Q_{i+1}$, above the line $y=\min_y\{Q_i\}$, and below the line $y=\max_y\{Q_i\}$. Then the planarity of $\Gamma$ implies that, if the row-column order of $M_i$ is $x^i,v^i_{\pi_1},\dots,v^i_{\pi_h}$, the row-column order of $M_{i+1}$ is $x^{i+1},v^{i+1}_{\pi_1},\dots,v^{i+1}_{\pi_h}$. 

Now consider any triplet $\tau_i = \langle a_{b_i},a_{c_i},a_{d_i} \rangle$ in $\cal T$. The monotonicity of $\Gamma$ and the placement of the squares $Q_i$ and $Q^{\alpha}_i$ imply that every corner or betweenness edge is assigned to the top or left side of $Q_i$ and to the bottom or right side of $Q^{\alpha}_i$. Further, since $x^i$ is the first element in the row-column order of $M_i$ and since $\max_y\{Q_i\}<\min_y\{Q^{\alpha}_i\}$, no betweenness edge is assigned to the left side of $Q_i$, as otherwise it would cross edge $(x^{i-1},x^{i})$. Hence, all the betweenness edges are assigned to the top side of $Q_i$. Since $x^i$ is the first vertex in the row-column order of $M_i$ and by the planarity of $\Gamma$, we have that, when traversing $Q^{\alpha}_i$ in clockwise direction starting from its top-right corner, the incidence points between $Q^{\alpha}_i$ and the betweenness edges are encountered before the incidence points between $Q^{\alpha}_i$ and the corner edges; in particular, no betweenness edge incident to the first vertex -- say $\alpha^i$ as other case is analogous -- in the row-column order of $M^{\alpha}_i$ is assigned to the bottom side of $M^{\alpha}_i$, as otherwise this edge would cross the corner edge incident to $\beta^i$. The planarity of $\Gamma$ also implies that, when traversing the top side of $Q_i$ from left to right, the end-points of the betweenness edges incident to $\beta^i$ are encountered all before the end-points of the betweenness edges incident to $\alpha^i$. Since $v^i_{c_i}$ is the only vertex among $v^i_{b_i}$, $v^i_{c_i}$, and $v^i_{d_i}$ that is neighbor of both $\alpha^i$ and $\beta^i$, then its associated column is between the columns associated to $v^i_{b_i}$ and $v^i_{d_i}$, hence $a_{c_i}$ is between $a_{b_i}$ and $a_{d_i}$ in $\sigma$. This concludes the proof.
\end{proof}

\begin{theorem}\label{th:monotone-free-ordering-fixed-sides}
{\sc MNTLP with Fixed Side} is \NPC.
\end{theorem}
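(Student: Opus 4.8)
The plan is to reduce once more from \textsc{Betweenness}, reusing almost verbatim the construction of Theorem~\ref{th:monotone-free-ordering-free-sides}. Given an instance $\cal T$ of \textsc{Betweenness}, I would build the \emph{same} flat clustered graph $G=(V,E,\mathcal{C})$ and the \emph{same} square assignment $\gamma$ as there, and additionally equip $G$ with the side assignment $s$ that records exactly the choices made in the $(\Longrightarrow)$ direction of that proof: every edge incident to $x^0$ or to $u$ attaches to the right side of $Q_0$ (resp.\ of $Q_u$) and to the left side of $Q_1$; every order-preserving or protecting edge between $V_i$ and $V_{i+1}$ attaches to the right side of $Q_i$ and to the left side of $Q_{i+1}$; every corner edge incident to $x^i$ attaches to the top side of $Q_i$ and to the bottom side of $Q^{\alpha}_i$; and every betweenness edge incident to a vertex of $V_i$ attaches to the top side of $Q_i$ and to the right side of $Q^{\alpha}_i$. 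The resulting instance $\langle G,\gamma,s\rangle$ of {\sc MNTLP with Fixed Side} is clearly produced in polynomial time.

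For the \NP-membership I would argue exactly as in Theorem~\ref{th:monotone-free-ordering-free-sides}: since the side assignment is now part of the input, it suffices to guess only a row-column order for each cluster and then invoke the polynomial-time algorithm of Theorem~\ref{th:monotone-fixed-order-and-side-polynomial}.

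For correctness, the forward implication is immediate: if $\cal T$ is a positive instance, the monotone locally-planar representation constructed in the $(\Longrightarrow)$ direction of Theorem~\ref{th:monotone-free-ordering-free-sides} already attaches every inter-cluster edge to the side prescribed by $s$, hence it witnesses that $\langle G,\gamma,s\rangle$ is a positive instance. For the reverse implication, any monotone locally-planar representation of $\langle G,\gamma\rangle$ in which every inter-cluster edge attaches according to $s$ is, in particular, a monotone locally-planar representation of $\langle G,\gamma\rangle$; therefore the $(\Longleftarrow)$ argument of Theorem~\ref{th:monotone-free-ordering-free-sides} applies unchanged and extracts from it a total order of the items of $\cal T$ solving the \textsc{Betweenness} instance. (In the same vein, if $\cal T$ is negative, then by that theorem $\langle G,\gamma\rangle$ admits no monotone locally-planar representation at all, in particular none respecting $s$.)

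The one point that needs to be checked — and what I expect to be the single, mild obstacle — is that the $(\Longleftarrow)$ argument of Theorem~\ref{th:monotone-free-ordering-free-sides} nowhere exploits the freedom of choosing sides. Inspecting that proof, every structural fact it uses — the side-filling edges and the edge $(x^0,x^1)$ lying to the left of $Q_1$ and forcing $x^1$ to be first in the row-column order of $M_1$; the order-preserving and protecting edges propagating the permutation $\pi$ through $M_1,\dots,M_m$; the betweenness edges landing on the top side of $Q_i$ and, together with the row-column order of $M^{\alpha}_i$, forcing the column of $v^i_{c_i}$ to lie between those of $v^i_{b_i}$ and $v^i_{d_i}$ — is derived purely from $xy$-monotonicity, local planarity, and the prescribed positions of the squares $Q_0$, $Q_u$, $Q_i$, and $Q^{\alpha}_i$. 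Since fixing the sides only restricts the set of admissible representations, the equivalence with \textsc{Betweenness} is preserved, which completes the reduction.
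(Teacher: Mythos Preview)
Your proposal is correct and is exactly the approach the paper takes: the paper's proof consists of a single sentence stating that the reduction of Theorem~\ref{th:monotone-free-ordering-free-sides}, equipped with the side assignment described in its $(\Longrightarrow)$ direction, implies the statement. Your elaboration of why this works (the $(\Longrightarrow)$ representation already respects $s$, and any fixed-side solution is in particular a free-side solution so the $(\Longleftarrow)$ argument applies unchanged) is precisely the intended reasoning.
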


\noindent
\begin{proof}
The reduction presented in the proof of Theorem~\ref{th:monotone-free-ordering-free-sides}, equipped with the side assignment for the inter-cluster edges described in the direction ($\Longrightarrow$) implies the statement.
\end{proof}



Since the instances of {\sc MNTLP} used in the proof of Theorem~\ref{th:monotone-free-ordering-free-sides} are planar whenever they are locally planar, testing the existence of a monotone planar \nt representation with fixed square assignment is also \NPC. 
Further, the instances of {\sc \nt Planarity} used in the proof of Theorem~\ref{th:free-ordering-free-sides} can be drawn planarly with straight-line (hence monotone) edges, whenever they are planar. Hence, testing whether a flat clustered graph admits a monotone planar \nt representation -- without square assignment -- is also \NPC.

Consider now a flat clustered graph $G=(V,E,\mathcal{C})$ and a monotone \nt representation $\Gamma$ of $G$ with fixed square assignment $\gamma$. Consider two clusters $V_a,V_b\in \mathcal{C}$ and let $Q_a=\gamma(V_a)$ and $Q_b=\gamma(V_b)$. Since $Q_a$ and $Q_b$ are disjoint, there exists either a vertical or a horizontal line separating them. Suppose that the former holds, the other case being analogous. Also suppose that $\max_x(Q_a) < \min_x(Q_b)$ and $\max_y(Q_a) \geq \max_y(Q_b)$, the other cases being analogous up to reflections of the Cartesian axes (refer to
Fig.~\ref{fi:forbidden-edge-drawings}). Also, consider an inter-cluster edge $e = (u,v) \in E_{a,b}$. 
Depending on the relative positions of $Q_a$ and $Q_b$ in $\Gamma$, not all the possible combinations of side assignments for $e$ might be allowed, as described in the following property. Notice that, by the assumptions on the relative positions of $Q_a$ and $Q_b$ in $\Gamma$ and by the monotonicity and the local planarity of $\Gamma$, we have that $s_a(e)\neq \textrm{\sc l}, \textrm{\sc t}$ and $s_a(e)\neq \textrm{\sc r}$.

\begin{property}\label{prop:arrangements} 
Let $y_u$ and $y_v$ be the $y$-coordinate of points $m_\textrm{\sc r}^u$ and $m_\textrm{\sc l}^v$, respectively. The following three arrangements are possible for $Q_a$ and $Q_b$ in $\Gamma$.

{\em Arrangement~1:} $\max_y(Q_b) < \min_y(Q_a)$. Then $s_b(e)\neq \textrm{\sc b}$ and all other four side assignments $\langle s_a(e)=\textrm{\sc r}, s_b(e)=\textrm{\sc t} \rangle$, $\langle s_a(e)=\textrm{\sc r}, s_b(e)=\textrm{\sc l} \rangle$, $\langle s_a(e)=\textrm{\sc b}, s_b(e)=\textrm{\sc t} \rangle$, and 
$\langle s_a(e)=\textrm{\sc b}, s_b(e)=\textrm{\sc l} \rangle$ are allowed for $e$.

{\em Arrangement~2:}  $\min_y(Q_b)< \min_y(Q_a)\leq \max_y(Q_b)$. Then $s_b(e)\neq \textrm{\sc b}$; also, pair $\langle s_a(e)=\textrm{\sc b}, s_b(e)=\textrm{\sc t} \rangle$ is not allowed, while pair $\langle s_a(e)=\textrm{\sc r}, s_b(e)=\textrm{\sc l} \rangle$ is allowed. The remaining two possible pairs $\langle s_a(e)=\textrm{\sc r}, s_b(e)=\textrm{\sc t} \rangle$ and $\langle s_a(e)=\textrm{\sc b}, s_b(e)=\textrm{\sc l} \rangle$ are or are not allowed, depending on $y_u$ and $y_v$. In particular, if $y_u \leq \max_y(Q_b)$, then $\langle s_a(e)=\textrm{\sc r}, s_b(e)=\textrm{\sc t} \rangle$ is not allowed, otherwise it is; also, if $y_v \geq \min_y(Q_a)$, then $\langle
s_a(e)=\textrm{\sc b}, s_b(e)=\textrm{\sc l} \rangle$ is not allowed, otherwise it is.

{\em Arrangement~3:}  $\min_y(Q_a)\leq \min_y(Q_b)$. Then $s_a(e)\neq \textrm{\sc b}$; also, pair $\langle s_a(e)=\textrm{\sc r}, s_b(e)=\textrm{\sc l} \rangle$ is allowed. The remaining two possible pairs $\langle s_a(e)=\textrm{\sc r}, s_b(e)=\textrm{\sc t} \rangle$ and $\langle s_a(e)=\textrm{\sc r}, s_b(e)=\textrm{\sc b} \rangle$ are or are not allowed, depending on $y_u$. In particular, if $y_u \leq \max_y(Q_b)$, then $\langle s_a(e)=\textrm{\sc r}, s_b(e)=\textrm{\sc t} \rangle$ is not allowed, otherwise it is, and if $y_u \geq \min_y(Q_b)$, then $\langle s_a(e)=\textrm{\sc r}, s_b(e)=\textrm{\sc b} \rangle$ is not allowed, otherwise it is. 
\end{property}

\begin{figure}[tb!]
    \centering

    \subfloat[{Arrangement~1}]{
      \includegraphics[width=0.2\textwidth,page=1]{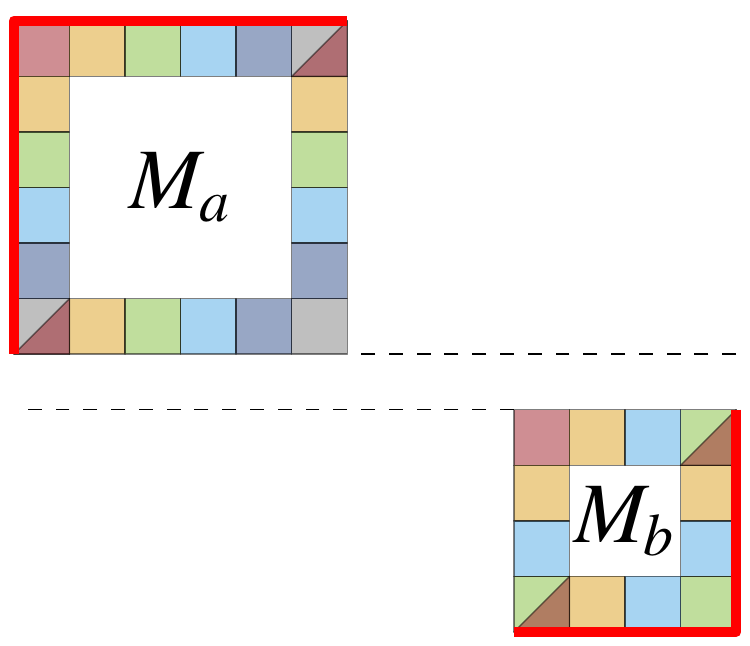}
    }
    \hfil
    \subfloat[{Arrangement~2}]{
      \includegraphics[width=0.2\textwidth,page=2]{img/forbidden-edge-drawings}
    }
    \hfil
    \subfloat[{Arrangement~3}]{
      \includegraphics[width=0.2\textwidth,page=3]{img/forbidden-edge-drawings}
    }
  \caption{Possible arrangements for squares $Q_a$ and $Q_b$. Thick red segments represent sides of $Q_a$ and $Q_b$ edge $(u,v)$ cannot be assigned to. Red curves show further forbidden side assignment pairs for edge $(u,v)$.}
  \label{fi:forbidden-edge-drawings}
  \end{figure}

Note that if an edge $e$ can be drawn as an $xy$-monotone curve not crossing any matrix, then it can also be drawn as a straight-line segment not crossing any matrix, since the pipe of $Q_a$ and $Q_b$ does not intersect any matrix other than $M_a$ and $M_b$. 
The next lemma extends this observation by arguing that the $xy$-monotonicity constraint can be replaced by a straight-line requirement 
also for what concerns crossings between inter-cluster edges incident to the same matrix.

\begin{lemma}\label{le:arrangements-straight-line} 
An instance $\langle G=(V,E,\mathcal{C}),\gamma \rangle$ of {\sc MNTLP with Fixed Order and Fixed Side} is locally planar if and only if it admits a monotone locally planar \nt representation in which all the inter-cluster edges are drawn as straight-line segments.
\end{lemma}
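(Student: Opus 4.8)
The ($\Longleftarrow$) direction is immediate and I would dispose of it in one line: every straight-line segment is horizontal, vertical, or simultaneously $x$- and $y$-monotone, hence an $xy$-monotone curve, so a monotone locally planar \nt representation whose inter-cluster edges happen to be straight-line segments is in particular a monotone locally planar \nt representation, and therefore the instance is locally planar.

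For the ($\Longrightarrow$) direction the plan is as follows. Fix a monotone locally planar \nt representation $\Gamma$. Since $\gamma$, every $\sigma_i$, and every $s_i$ are part of the input, for each inter-cluster edge $e=(u,v)$ with $u\in V_a$ and $v\in V_b$ the incidence points $p_u\in Q_a$ and $p_v\in Q_b$ are uniquely determined, and they are precisely the incidence points that $e$ has in $\Gamma$. I would then define $\Gamma'$ by keeping the matrices where $\gamma$ places them and re-drawing each inter-cluster edge $e=(u,v)$ as the straight-line segment $\overline{p_up_v}$, and argue that $\Gamma'$ is a monotone locally planar \nt representation. Three of the required properties are routine. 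First, each $\overline{p_up_v}$ is $xy$-monotone and respects the prescribed incidence points, and the matrices are pairwise disjoint because they are exactly those of $\gamma$. Second, $\overline{p_up_v}$ lies in the pipe of $Q_a$ and $Q_b$, since that pipe is the convex hull of $Q_a\cup Q_b$ while $p_u\in Q_a$ and $p_v\in Q_b$; hence, the pipe being disjoint from every $Q_k$ with $k\neq a,b$ by the definition of a monotone representation, $\overline{p_up_v}$ meets no matrix $M_k$ with $k\neq a,b$. Third, $\overline{p_up_v}$ avoids the interiors of $M_a$ and $M_b$: the mere existence in $\Gamma$ of an $xy$-monotone crossing-free routing of $e$ forces $\langle s_a(e),s_b(e)\rangle$ (up to reflections of the axes) to be one of the side-assignment pairs admitted by Property~\ref{prop:arrangements}, and for each such pair a direct inspection shows that the segment stays outside $Q_a$ and $Q_b$ — this is exactly the observation stated right before the lemma.

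The substantial step, and the one I expect to be the main obstacle, is showing that $\Gamma'$ has no local crossing. Here I would take two inter-cluster edges $e=(u,v)$ and $e'=(u',v')$ with $u,u'\in V_i$ and prove the contrapositive: if $\overline{p_up_v}$ and $\overline{p_{u'}p_{v'}}$ cross in $\Gamma'$ at a point other than a common endpoint, then the curves of $e$ and $e'$ cross in $\Gamma$ as well, contradicting $\chi_i(\Gamma)=0$. After reducing, by reflections, to the situation of Property~\ref{prop:arrangements} in which $Q_i$ is separated from the squares of $v$ and $v'$ by vertical lines lying to its right, every inter-cluster edge incident to $M_i$ is strictly $x$-monotone, hence the graph of a monotone function $f_e$ over an $x$-interval whose left endpoint is the incidence point on $\partial Q_i$. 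If the $x$-intervals of $e$ and $e'$ are disjoint there is nothing to prove; otherwise, on their common $x$-interval $I=[\alpha,\beta]$ the two edges cross if and only if $f_e-f_{e'}$ changes sign. The crux is that the sign of $f_e-f_{e'}$ at $x=\alpha$ and at $x=\beta$ is the same in $\Gamma$ as in $\Gamma'$: at each of these two abscissas the vertical order of $e$ and $e'$ is forced, either by the order of the incidence points on $\partial Q_i$ (fixed by $\sigma_i$ and $s_i$), or by the order of the incidence points on a common target square when $v$ and $v'$ lie in the same cluster, or — the delicate case — by combining the $x$- and $y$-monotonicity of the two edges with the relative placement of the two distinct target squares $Q_j$ and $Q_{j'}$ when $v$ and $v'$ lie in different clusters. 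Given this, a sign change of $f_e-f_{e'}$ inside $I$ in $\Gamma'$ propagates to a sign change inside $I$ in $\Gamma$, producing the required crossing.

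Concretely, the hard part is the last case: when $e$ and $e'$ leave $M_i$ toward two distinct matrices one cannot read the crossing structure off a single pipe, and one must argue that the vertical order of the two edges is already pinned down both immediately outside $Q_i$ and at the abscissa where their routes separate; I would carry this out by a short case analysis over the arrangements of Property~\ref{prop:arrangements} for the pairs $(Q_i,Q_j)$ and $(Q_i,Q_{j'})$. Everything else is routine plane geometry built on top of Property~\ref{prop:arrangements} and the observation preceding the lemma.
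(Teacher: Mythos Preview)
Your approach is correct in spirit and aims at the same contrapositive as the paper, but it takes a different route and has a real gap in the hardest case. The paper also shows that a crossing between two straight segments incident to a common matrix forces a crossing in every monotone representation, but it does so by a purely topological region argument rather than your analytic one. It distinguishes the same two cases---$v,v'$ in a common cluster $V_b$, or in distinct clusters $V_b,V_c$---and in each case works inside the region $R_{ab}$ (respectively $R_{ac}$) lying in the pipe and outside the two squares. In Case~1, any drawing of $e$ splits $R_{ab}$ into two pieces, and the fixed endpoints of $f$ land in different pieces, so $f$ must cross $e$. In Case~2 the argument runs inside $R_\cap=R_{ab}\cap R_{ac}$: the part of $e$ in $R_\cap$ separates the endpoint of $f$ on $Q_a$ from the portion of $\partial R_\cap$ through which $f$ must leave toward $Q_c$, forcing the crossing.

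Your function-graph plan handles Case~1 cleanly, but in Case~2 the reduction step---``by reflections, to the situation in which $Q_i$ is separated from the squares of $v$ and $v'$ by vertical lines lying to its right''---cannot always be achieved: the two target squares may lie on opposite sides of $Q_i$, or one may be separated from $Q_i$ only by a horizontal line, so no single reflection puts both to the right. More seriously, even when both edges \emph{are} graphs of monotone functions on a common interval $[\alpha,\beta]$, your claim that the sign of $f_e-f_{e'}$ at $\beta$ is forced by monotonicity and the placement of $Q_j,Q_{j'}$ is not justified: if $\beta$ is the abscissa of $e$'s endpoint on $Q_j$ while $e'$ continues past $\beta$ toward $Q_{j'}$, then $f_{e'}(\beta)$ can a priori lie anywhere in the $y$-range of $e'$, which may well straddle $f_e(\beta)$. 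What actually pins the sign down is that $e'$ is confined to the pipe of $Q_i$ and $Q_{j'}$, and that pipe is disjoint from $Q_j$---precisely the ingredient the paper's argument on $R_\cap$ uses directly. Your case analysis can be completed, but only by invoking the pipe constraint explicitly; at that point the paper's Jordan-curve argument is both shorter and free of the coordinate-direction case split.
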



\noindent
\begin{proof}
Since a straight-line segment is an $xy$-monotone curve, one direction of the proof is trivial. Consider an \nt representation $\Gamma$ of $\langle G=(V,E,\mathcal{C}),\gamma \rangle$ with a fixed row-column order, a fixed side assignment, and a fixed square assignment, in which all the inter-cluster edges are  straight-line segments. Suppose that $\Gamma$ is not locally planar and consider two crossing inter-cluster edges $e=(v_{a,1},v_b)$ and $f=(v_{a,2},v_c)$ such that $v_{a,1}$ and $v_{a,2}$ belong to the same cluster $V_a\in \cal C$. We show that $e$ and $f$ cross in any monotone \nt representation $\Gamma'$ with the same row-column order, side assignment, and square assignment as $\Gamma$. Two are the cases: either $v_b$ and $v_c$ belong to the same cluster $V_b$ as in Fig.~\ref{fi:straightline1a} (Case 1), or they belong to different clusters $V_b$ and $V_c$, respectively, as in Fig.~\ref{fi:straightline1b} (Case 2).

  \begin{figure}[tb!]
    \centering
    \subfloat[]{
      \includegraphics[width=0.35\textwidth,page=1]{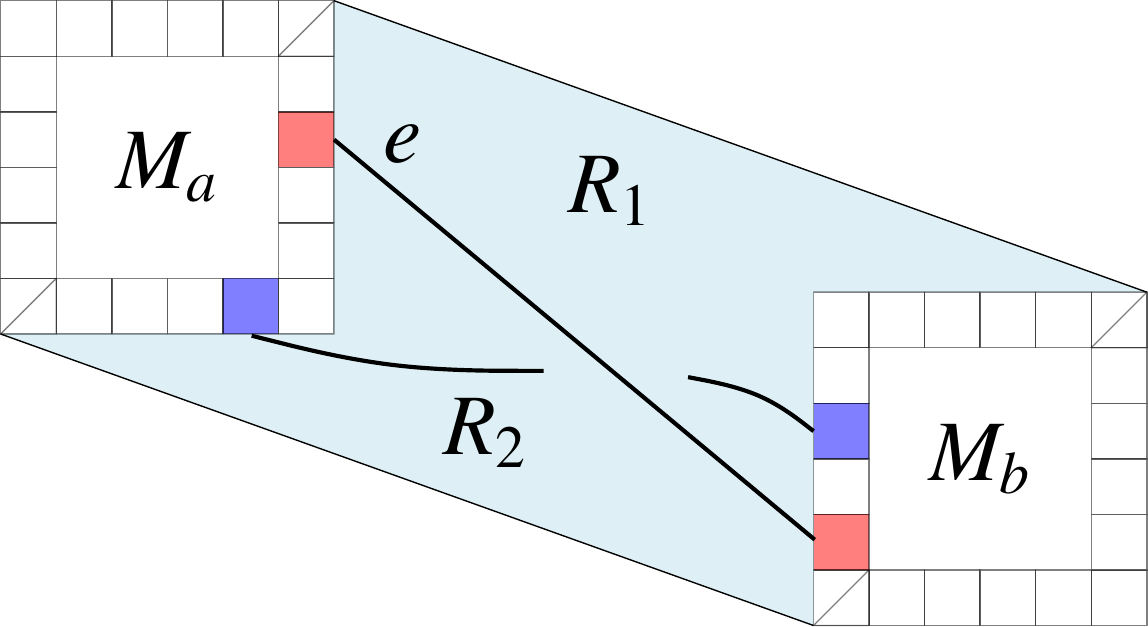}\label{fi:straightline1a}}
    \hfil
    \subfloat[]{
      \includegraphics[width=0.35\textwidth,page=2]{img/straightline1}\label{fi:straightline1b}}
  \caption{Illustration for the proof of Lemma~\ref{le:arrangements-straight-line}. (a) Case~1. (b) Case~2.}
  \label{fi:straightline}
  \end{figure}

In Case 1, consider the region $R_{ab}$ of the plane inside the pipe of $Q_a$ and $Q_b$ and outside each of $Q_a$ and $Q_b$. Edge $e$ splits $R_{ab}$ into two regions $R_1$ and $R_2$. Since $e$ and $f$ cross in $\Gamma$, the end-points of $f$ are one incident to $R_1$ and one incident to $R_2$ in $\Gamma'$. Since the representation of $f$ has to lie inside $R_{ab}$, it follows that $e$ and $f$ cross in $\Gamma'$.



In Case 2, consider the region $R_{ab}$ defined as in Case~1 and consider the region $R_{ac}$ of the plane inside the pipe of $Q_a$ and $Q_c$ and outside each of $Q_a$ and $Q_c$. Since $e$ lies in $R_{ab}$ and $f$ in $R_{ac}$, and since $e$ and $f$ cross in $\Gamma$, it follows that the intersection of $R_{ab}$ and $R_{ac}$ is a non-empty region $R_{\cap}$. The part of $e$ inside $R_{\cap}$ partitions $R_{\cap}$ into two regions $R_1$ and $R_2$. Since $e$ and $f$ cross in $\Gamma$, the end-point of $f$ on the boundary of $Q_a$ is incident to the one between $R_1$ and $R_2$ that does not share the boundary with region $R_{ac}-R_{\cap}$. Since the representation of $f$ has to lie inside $R_{ac}$, it follows that $e$ and $f$ cross in $\Gamma'$.
\end{proof}

The previous lemma, in contrast to the negative results of Theorems~\ref{th:monotone-free-ordering-free-sides} and~\ref{th:monotone-free-ordering-fixed-sides}, allows us to show that {\sc MNTLP with Fixed Order and Fixed Side} is a polynomial-time solvable problem.

\begin{theorem} \label{th:monotone-fixed-order-and-side-polynomial}
{\sc MNTLP with Fixed Order and Fixed Side} can be solved in polynomial time.
\end{theorem}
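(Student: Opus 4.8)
The plan is to use Lemma~\ref{le:arrangements-straight-line} to reduce the problem to the inspection of a single, essentially canonical, straight-line drawing. Since the row-column orders $\sigma_i$, the side assignments $s_i$, and the square assignment $\gamma$ are all part of the input, for every inter-cluster edge $e=(u,v)$ with $u\in V_a$ and $v\in V_b$ the incidence point $p_u=m^u_{s_a(e)}$ on $Q_a$ and the incidence point $p_v=m^v_{s_b(e)}$ on $Q_b$ are completely determined (the cell sizes of $M_a$ and $M_b$ follow from $\gamma$ and from $|V_a|$ and $|V_b|$, and the internal cell pattern of the matrices is irrelevant to local planarity). Drawing each inter-cluster edge as the straight-line segment $\overline{p_up_v}$ therefore yields a unique candidate representation $\Gamma^\star$, and by Lemma~\ref{le:arrangements-straight-line} the instance is a yes-instance if and only if $\Gamma^\star$ is a valid monotone locally planar \nt representation. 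The algorithm builds $\Gamma^\star$ and verifies this.

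The verification consists of the following checks, performed in order. First, check that the squares $\gamma(V_i)$ are pairwise disjoint and that, for every pair of adjacent clusters $V_a$ and $V_b$, the pipe $\mathrm{conv}(Q_a\cup Q_b)$ is disjoint from every other square $Q_k$; if this fails, no monotone \nt representation of $\langle G,\gamma\rangle$ exists at all, so we reject. Second, for each inter-cluster edge $e=(u,v)$ with $u\in V_a$ and $v\in V_b$, check that the segment $\overline{p_up_v}$ meets $M_a$ and $M_b$ only at its endpoints --- equivalently, that the fixed side pair $\langle s_a(e),s_b(e)\rangle$ is among those declared allowed by Property~\ref{prop:arrangements} for the arrangement of $Q_a$ and $Q_b$ and for the values $y_u,y_v$; by the observation preceding Lemma~\ref{le:arrangements-straight-line}, if some edge fails this test then $e$ cannot be drawn as an $xy$-monotone curve avoiding all matrices with the prescribed sides, so we reject. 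Once these two checks pass, $\Gamma^\star$ is a monotone \nt representation (each segment is $xy$-monotone and lies inside its pipe by convexity) in which no inter-cluster edge meets a matrix other than at its endpoints; it therefore only remains to rule out crossings between inter-cluster edges incident to a common cluster. Third, check local planarity: for every cluster $V_i$ and every pair of inter-cluster edges $e$ and $f$ incident to $V_i$, test whether their segments in $\Gamma^\star$ intersect at a point that is not a common endpoint, and accept if and only if no such crossing occurs for any cluster. The first step costs $O(k^3)$, where $k=|\mathcal{C}|$, the second step is $O(|E|)$, and the third step is $O(|E|^2)$ elementary segment-intersection tests, so the whole procedure runs in polynomial time.

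Correctness follows from Lemma~\ref{le:arrangements-straight-line}: $\langle G,\gamma\rangle$ with fixed order and side is locally planar if and only if it admits a monotone locally planar straight-line \nt representation, if and only if the canonical candidate $\Gamma^\star$ is such a representation, if and only if the three checks succeed. The point that must be handled with care is that several inter-cluster edges incident to the same vertex $u$ and assigned to the same side of $M_a$ share the single incidence point $m^u_{s_a(e)}$, so in $\Gamma^\star$ their segments emanate from a common point and could even overlap when their opposite endpoints happen to be collinear with it. To make the crossing test in the third step well defined, one replaces each incidence point by an arbitrarily short sub-segment of the corresponding side of $Q_a$, split among the edges sharing it in an arbitrary order; since all these edges leave $Q_a$ through the same side in the same direction, the resulting set of pairs of same-cluster inter-cluster edges that cross is independent of this choice, so it does not affect whether $\Gamma^\star$ is locally planar. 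Making Property~\ref{prop:arrangements} precise enough that ``allowed side pair'' is exactly ``the segment meets $M_a$ and $M_b$ only at its endpoints'' is the other piece of routine plane-geometry bookkeeping, and I expect that --- rather than any algorithmic difficulty --- to be the main thing to get right, since the substantive work has already been done in Lemma~\ref{le:arrangements-straight-line}.
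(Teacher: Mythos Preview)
Your approach is essentially the paper's: check Property~\ref{prop:arrangements} for each edge, then invoke Lemma~\ref{le:arrangements-straight-line} to reduce local planarity to pairwise segment-intersection tests in the canonical straight-line drawing. The paper's proof is two sentences to the same effect; your added precondition checks (pairwise disjointness of the squares, pipe non-intersection) are things the paper leaves implicit.

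One small correction to your last paragraph: the claim that the perturbation order at a shared incidence point does not affect which pairs of same-cluster edges cross is not true in general --- two segments emanating from nearly coincident points on the same side can be made to cross or not near that side depending on which is placed to the left of the other. Fortunately the perturbation is unnecessary: the definition of local planarity explicitly permits two inter-cluster edges to meet at a common endpoint, and two distinct straight segments sharing an endpoint cannot intersect elsewhere unless they are collinear. So you can test the unperturbed segments directly and handle the degenerate collinear case separately (or by an order at the shared endpoint that matches the angular order of the opposite endpoints, which \emph{is} canonical).
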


\begin{proof}
We check whether every edge can be represented as an $xy$-monotone curve by Property~\ref{prop:arrangements}. Further, we check whether all the pairs of inter-cluster edges incident to the same cluster admit a non-crossing straight-line drawing; by Lemma~\ref{le:arrangements-straight-line} this is equivalent to test the local planarity with fixed row-column order, fixed side assignment, and fixed square assignment of the given instance. 
\end{proof}


The remaining piece of the complexity puzzle for {\sc MNTLP} is the setting with fixed row-column order and free side assignment. Although we are not able to establish the complexity of the corresponding decision problem, we show that testing 
{\sc MNTLP} with fixed order is a polynomial-time solvable problem if the number of clusters is constant. In order to do that, we show how to transform the instances of our problem into instances of $2$-SAT. 

Assuming the hypotheses stated before Property~\ref{prop:arrangements} about the relative positions of $Q_a$ and $Q_b$, we say that an inter-cluster edge $e = (u \in V_a, v \in V_b)$ is {\em S-drawn} in $\Gamma$ if: 
\begin{itemize}
\item[$(i)$] $Q_a$ and $Q_b$ are arranged as in Arrangement~1 of Property~\ref{prop:arrangements} and either $\langle s_a(e)=\textrm{\sc r}, s_b(e)=\textrm{\sc l} \rangle$ or $\langle s_a(e)=\textrm{\sc b}, s_b(e)=\textrm{\sc t} \rangle$; or
\item[$(ii)$] $Q_a$ and $Q_b$ are arranged as in Arrangement~2 of Property~\ref{prop:arrangements} and it holds that (a) $\langle s_a(e)=\textrm{\sc r}, s_b(e)=\textrm{\sc l} \rangle$, (b) $y_u>\max_y(Q_b)$, and (c) $y_v<\min_y(Q_a)$.
\end{itemize}
Note that if $Q_a$ and $Q_b$ are arranged as in Arrangement~3 of Property~\ref{prop:arrangements}, then $e$ is not S-drawn in $\Gamma$, by definition. The representation of an S-drawn edge is an {\em S-drawing}. We have the following.


\begin{lemma}\label{le:two-drawings-with-S-FORMULA}
Let $\langle G=(V,E,\mathcal{C}=\{V_a,V_b\}), \gamma, \sigma \rangle$ be an
instance of {\sc MNTLP with Fixed Order}.
Consider the following two cases:

\begin{itemize}
\item Case~1: an inter-cluster edge $e^* \in E$ has a given S-drawing $\Gamma_e$, or
\item Case~2: no inter-cluster edge in $E$ has an S-drawing.
\end{itemize}

Both in Case~1 and in Case~2, we can construct in $O(|E|^2)$ time a $2$-SAT formula $\phi(a,b,\Gamma_e)$ and $\phi(a,b)$, respectively, with length $O(|E|^2)$ that is satisfiable if and only if $\langle
G, \gamma, \sigma \rangle$ admits a monotone locally planar \nt representation with fixed order satisfying the constraint of the corresponding~case.
\end{lemma}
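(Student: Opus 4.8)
First I would record the rigidity coming from the input. Since $\gamma$ and $\sigma$ are given and $\mathcal{C}=\{V_a,V_b\}$, the pipe of $Q_a$ and $Q_b$ is determined, it avoids every other square vacuously, every inter-cluster edge joins $V_a$ to $V_b$ and hence is incident to both matrices, and for each vertex $v$ the four candidate incidence points $m^v_{\textrm{\sc t}},m^v_{\textrm{\sc b}},m^v_{\textrm{\sc l}},m^v_{\textrm{\sc r}}$ on $Q_a$ or $Q_b$ have fixed coordinates. Up to a reflection of the axes I may assume the hypotheses preceding Property~\ref{prop:arrangements}, so that $Q_a,Q_b$ realize one of Arrangements~1--3, and I would use that property to compute, in $O(1)$ time per edge, the constant-size set of \emph{admissible side pairs} $\langle s_a(e),s_b(e)\rangle$ with which an inter-cluster edge $e$ can be routed as an $xy$-monotone curve inside the pipe without crossing a matrix, at most one of which yields an S-drawing. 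By the note preceding the lemma every admissible side pair is realizable by a straight-line segment, and for two edges with prescribed admissible side pairs whether their straight-line realizations cross is a Boolean computable in $O(1)$ from these coordinates.

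The core step is to show that, once the case constraint is imposed, every inter-cluster edge retains only \emph{two} topologically inequivalent candidate drawings, so that its drawing is encoded by one Boolean variable $x_e$. For Case~2 this is immediate: deleting the at most one S-realizing side pair from the admissible set leaves, by inspection of Arrangements~1--3 in Property~\ref{prop:arrangements}, at most two side pairs, whose two straight-line segments are distinct chords of the simply connected region $R_{ab}$ (the pipe minus $Q_a$ and $Q_b$). For Case~1 I would first prove a normalization statement: any monotone locally planar representation in which $e^*$ is S-drawn as $\Gamma_e$ can be turned into one in which, in addition, no inter-cluster edge other than $e^*$ is S-drawn; the idea is that $\Gamma_e$ splits $R_{ab}$ into an upper and a lower part, every edge not crossing $e^*$ lies entirely in one part, and an S-drawn edge $f\ne e^*$ in, say, the upper part can be pushed toward the upper boundary arc of the pipe (processing the upper S-drawn edges in their stacking order between $\Gamma_e$ and that arc) in a way that crosses neither $\Gamma_{e}$ nor any other inter-cluster edge --- each push only removes crossings --- while strictly decreasing the number of S-drawn edges. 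After normalization Case~1 coincides with Case~2 plus the extra requirement that $e^*$ be drawn as $\Gamma_e$, so each edge other than $e^*$ again carries a single variable $x_e$, and $x_{e^*}$ is forced by $\Gamma_e$.

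With this in hand I would build the $2$-SAT formula over the $O(|E|)$ variables $x_e$: if some edge has no admissible drawing compatible with the case constraint, output a trivially unsatisfiable formula; otherwise add a unit clause fixing $x_{e^*}$ to the value induced by $\Gamma_e$ (Case~1) together with unit clauses for any edge forced to a unique drawing, and, for every unordered pair $\{e,f\}$ of inter-cluster edges and every one of the at most four combinations of a drawing $\varepsilon$ of $e$ and a drawing $\delta$ of $f$ whose straight-line realizations cross, the $2$-clause $(x_e\ne\varepsilon)\vee(x_f\ne\delta)$. This produces $O(|E|^2)$ clauses of length two in $O(|E|^2)$ time. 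Correctness then follows routinely: from a satisfying assignment, drawing each edge as the straight-line segment of its selected admissible side pair gives a monotone locally planar representation (no edge crosses a matrix by admissibility and the note before the lemma, each edge is $xy$-monotone by admissibility, no two inter-cluster edges cross by the pairwise clauses) that meets the case constraint; conversely, a monotone locally planar representation meeting the case constraint can be redrawn with straight-line edges keeping order, side assignment and square assignment and staying locally planar --- by Lemma~\ref{le:arrangements-straight-line}, and in Case~1 after the normalization above --- and recording which of its two candidate drawings each edge uses satisfies all clauses. The delicate part, and the only place that needs genuine care, is the normalization in Case~1: controlling the pushing of S-drawn edges so that it never introduces a crossing with another inter-cluster edge or with $\Gamma_{e}$; granting that, the $2$-SAT encoding and the equivalence are bookkeeping on Property~\ref{prop:arrangements} and Lemma~\ref{le:arrangements-straight-line}.
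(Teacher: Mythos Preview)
Your treatment of Case~2 is essentially the paper's, and the $2$-SAT encoding you describe is the right one. The problem is Case~1. The normalization you rely on --- that any monotone locally planar representation in which $e^*$ is S-drawn as $\Gamma_e$ can be modified into one in which no other edge is S-drawn --- is false, and your argument collapses with it.

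Here is a concrete counterexample in Arrangement~1. Take four inter-cluster edges $e_1,e_2,e_3,e_4$ with $\sigma_a(e_i)=\sigma_b(e_i)=i$. Drawing all four with side pair $\langle s_a=\textrm{\sc{r}},\,s_b=\textrm{\sc{l}}\rangle$ gives four ``parallel'' S-drawn chords of $R_{ab}$ and is locally planar. Now fix $e^*=e_2$ and try to normalize. The only edge above $\Gamma_e$ is $e_1$; its sole non-S-drawn option avoiding $e^*$ is $\langle\textrm{\sc{r}},\textrm{\sc{t}}\rangle$, which is fine. Below $\Gamma_e$ sit $e_3$ and $e_4$; their sole non-S-drawn option avoiding $e^*$ is $\langle\textrm{\sc{b}},\textrm{\sc{l}}\rangle$, but two $\langle\textrm{\sc{b}},\textrm{\sc{l}}\rangle$ edges whose $\sigma_a$- and $\sigma_b$-orders agree always cross. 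Thus your ``push'' of $e_3$ (or of $e_4$) toward the lower arc necessarily creates a crossing with the other, whichever order you process them in. The same obstruction arises for $e^*=e_3$ (now $e_1,e_2$ are forced to $\langle\textrm{\sc{r}},\textrm{\sc{t}}\rangle$ and cross), and for $e^*\in\{e_1,e_4\}$ three edges end up on the same side of $\Gamma_e$. So your Case~1 formula is unsatisfiable for \emph{every} choice of $e^*$; your Case~2 formula is also unsatisfiable (four edges, two non-S side pairs, and any two edges sharing a pair cross by the same order-agreement argument). Yet the instance is locally planar. The gap therefore breaks not only the lemma as stated but also the downstream enumeration in Theorem~\ref{th:monotone-fixed-ordering-free-sides}.

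The paper avoids this entirely by \emph{not} forbidding other edges from being S-drawn. For each edge $e\neq e^*$ it looks at the four possible relative positions of the endpoints of $e$ and $e^*$ in $\sigma_a$ and $\sigma_b$ and shows that exactly two side pairs for $e$ avoid crossing $e^*$ (or none, giving a trivially false formula). Those two side pairs --- one of which may itself be an S-drawing --- become the two values of $x_e$, and pairwise crossing clauses do the rest. In the example above this yields, for $e^*=e_2$, the choices $\{\langle\textrm{\sc{r}},\textrm{\sc{t}}\rangle,\langle\textrm{\sc{r}},\textrm{\sc{l}}\rangle\}$ for $e_1$ and $\{\langle\textrm{\sc{r}},\textrm{\sc{l}}\rangle,\langle\textrm{\sc{b}},\textrm{\sc{l}}\rangle\}$ for $e_3,e_4$, and the all-$\langle\textrm{\sc{r}},\textrm{\sc{l}}\rangle$ assignment satisfies the formula. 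The key point is that the reduction to two options per edge comes from the constraint ``do not cross $e^*$'', not from a ``no other S-drawn edge'' restriction.
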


\noindent
\begin{proof}
Consider the squares $Q_a=\gamma(V_a)$ and $Q_b=\gamma(V_b)$. If they are not disjoint, no \nt representation of $G$ exists, hence the statement is trivially true. Otherwise, there exists either a vertical line or a horizontal line separating them. Suppose that the former holds, the other case being analogous. Suppose that $\max_x(Q_a) < \min_x(Q_b)$ and $\max_y(Q_a) \geq \max_y(Q_b)$, the
other cases being analogous up to reflections of the Cartesian axes.

%
%



Suppose that an inter-cluster edge $e^*$ is required to have a drawing $\Gamma_e$ as in Case~1. By the definition of an S-drawn edge, if $Q_a$ and $Q_b$ are arranged as in Arrangement~3 of Property~\ref{prop:arrangements}, then the required \nt representation does not exist, thus the statement trivially holds. Hence, we can assume that $Q_a$ and $Q_b$ are arranged as in Arrangement~1 or~2 of Property~\ref{prop:arrangements}. Let $e \neq e^* \in E$ be any inter-cluster edge not adjacent to $e$. Denote by $\sigma_a$ and $\sigma_b$ the row-column orders associated to $V_a$ and $V_b$ in $\sigma$, respectively.


Consider Arrangement~1 and suppose $s_a(e^*)=\textrm{\sc r}$ and $s_b(e^*)=\textrm{\sc l}$. The end-vertices of $e$ and $e^*$ in $V_a$ (in $V_b$) have two possible relative positions in $\sigma_a$ (resp.\ in $\sigma_b$). This leads to four possible combinations for these relative positions. 

If $\sigma_a(e^*) < \sigma_a(e)$ and $\sigma_b(e) < \sigma_b(e^*)$, then any $xy$-monotone curve representing $e$ crosses $e^*$, independently of the side assignment for $e$, and the statement trivially holds.  See Fig.~\ref{fi:lemma-s-case-1-arrangement-1-d}.  For each of the three remaining combinations, {\em exactly two} side assignments for $e$ create no crossing with $e^*$. Indeed:

\begin{itemize}
\item If $\sigma_a(e) < \sigma_a(e^*)$ and $\sigma_b(e) < \sigma_b(e^*)$, then it holds true that either $s_a(e)=\textrm{\sc r}$ and $s_b(e) = \textrm{\sc t}$, or that $s_a(e)=\textrm{\sc r}$
and $s_b(e) = \textrm{\sc l}$. See Fig.~\ref{fi:lemma-s-case-1-arrangement-1-a}.
\item If $\sigma_a(e) < \sigma_a(e^*)$ and $\sigma_b(e^*) < \sigma_b(e)$, then it holds true that either $s_a(e)=\textrm{\sc r}$ and $s_b(e) = \textrm{\sc t}$, or that $s_a(e)=\textrm{\sc b}$
and $s_b(e) = \textrm{\sc l}$. See Fig.~\ref{fi:lemma-s-case-1-arrangement-1-b}.
\item If $\sigma_a(e^*) < \sigma_a(e)$ and $\sigma_b(e^*) < \sigma_b(e)$, then it holds true that either $s_a(e)=\textrm{\sc r}$ and $s_b(e) = \textrm{\sc l}$, or that $s_a(e)=\textrm{\sc b}$
and $s_b(e) = \textrm{\sc l}$. See Fig.~\ref{fi:lemma-s-case-1-arrangement-1-c}.
\end{itemize}

  \begin{figure}[tb!]
    \centering
    \subfloat[]{
      \includegraphics[width=0.2\textwidth,page=4]{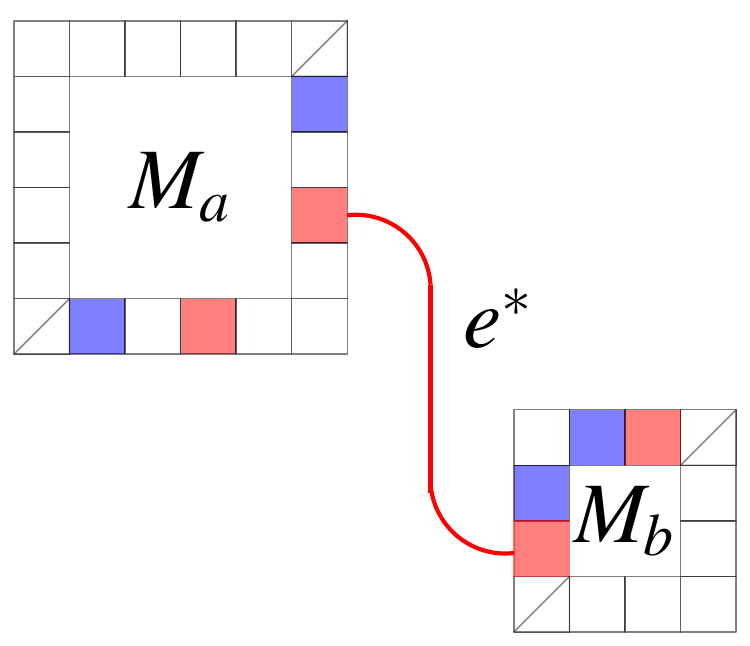}\label{fi:lemma-s-case-1-arrangement-1-d}
    }
    \hfil
    \subfloat[]{
      \includegraphics[width=0.2\textwidth,page=1]{img/lemma-s-case-1-arrangement-1}\label{fi:lemma-s-case-1-arrangement-1-a}
    }
    \hfil
    \subfloat[]{
      \includegraphics[width=0.2\textwidth,page=2]{img/lemma-s-case-1-arrangement-1}\label{fi:lemma-s-case-1-arrangement-1-b}
    }
    \hfil
    \subfloat[]{
      \includegraphics[width=0.2\textwidth,page=3]{img/lemma-s-case-1-arrangement-1}\label{fi:lemma-s-case-1-arrangement-1-c}
    }
     
  \caption{
  Illustrations for the proof of Lemma~\ref{le:two-drawings-with-S-FORMULA}, Case 1, Arrangement 1.
  }
  \label{fi:lemma-s-case-1-arrangement-1}
  \end{figure}

The discussion for the case in which $Q_a$ and $Q_b$ are arranged as in Arrangement~1, $s_a(e^*)=\textrm{\sc b}$, and $s_b(e^*)=\textrm{\sc t}$ is analogous to the previous one.

Consider now Arrangement~2. According to the definition of S-drawing it holds true
for $e^*=(u,v)$ that (a) $s_a(e^*)=\textrm{\sc r}$ and $s_b(e^*)=\textrm{\sc l}$, (b) the $y$-coordinate of $p_u$ is greater than $\max_y(Q_b)$, and
(c) the $y$-coordinate of $p_v$ is smaller than $\min_y(Q_a)$. 

Similarly to Arrangement~1, there are four possible combinations for the relative positions of the end-vertices of $e$ and $e^*$ in $\sigma_a$ and $\sigma_b$. If $\sigma_a(e^*) < \sigma_a(e)$ and $\sigma_b(e) < \sigma_b(e^*)$, then any $xy$-monotone curve representing $e$ crosses $e^*$, independently of the side assignment for $e$, and the statement trivially holds.  See Fig.~\ref{fi:lemma-s-case-1-arrangement-2-d}.  For each of the three remaining combinations, {\em exactly two} side assignments for $e$ create no crossing with $e^*$. 

\begin{itemize}
\item If $\sigma_a(e) < \sigma_a(e^*)$ and $\sigma_b(e) < \sigma_b(e^*)$, then it holds true that either $s_a(e)=\textrm{\sc r}$ and $s_b(e) = \textrm{\sc t}$, or that $s_a(e)=\textrm{\sc r}$
and $s_b(e) = \textrm{\sc l}$. See Fig.~\ref{fi:lemma-s-case-1-arrangement-2-a}.
\item If $\sigma_a(e) < \sigma_a(e^*)$ and $\sigma_b(e^*) < \sigma_b(e)$, then it holds true that either $s_a(e)=\textrm{\sc r}$ and $s_b(e) = \textrm{\sc t}$, or that $s_a(e)=\textrm{\sc b}$
and $s_b(e) = \textrm{\sc l}$. See Fig.~\ref{fi:lemma-s-case-1-arrangement-2-b}.
\item If $\sigma_a(e^*) < \sigma_a(e)$ and $\sigma_b(e^*) < \sigma_b(e)$, then it holds true that either $s_a(e)=\textrm{\sc r}$ and $s_b(e) = \textrm{\sc l}$, or that $s_a(e)=\textrm{\sc b}$
and $s_b(e) = \textrm{\sc l}$. 
See Fig.~\ref{fi:lemma-s-case-1-arrangement-2-c}.
\end{itemize}

  \begin{figure}[tb!]
    \centering
    \subfloat[]{
      \includegraphics[width=0.2\textwidth,page=4]{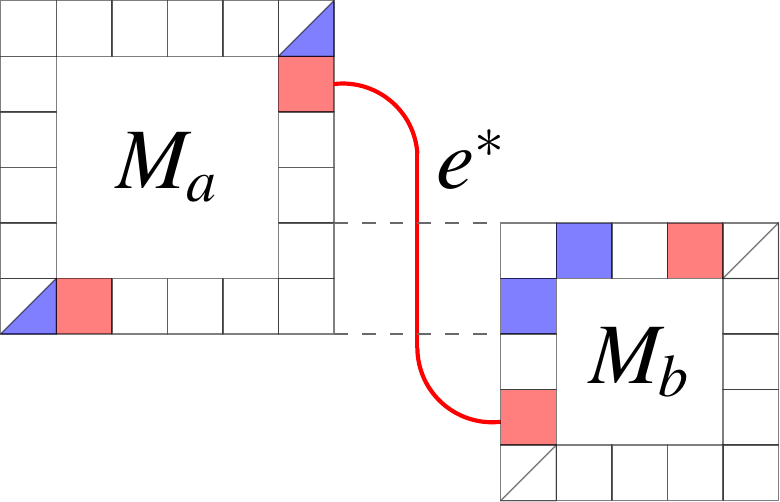}\label{fi:lemma-s-case-1-arrangement-2-d}
    }
    \hfil
    \subfloat[]{
      \includegraphics[width=0.2\textwidth,page=1]{img/lemma-s-case-1-arrangement-2}\label{fi:lemma-s-case-1-arrangement-2-a}
    }
    \hfil
    \subfloat[]{
      \includegraphics[width=0.2\textwidth,page=2]{img/lemma-s-case-1-arrangement-2}\label{fi:lemma-s-case-1-arrangement-2-b}
    }
    \hfil
    \subfloat[]{
      \includegraphics[width=0.2\textwidth,page=3]{img/lemma-s-case-1-arrangement-2}\label{fi:lemma-s-case-1-arrangement-2-c}
    }
  \caption{
  Illustrations for the proof of Lemma~\ref{le:two-drawings-with-S-FORMULA}, Case 1, Arrangement 2.
  }
  \label{fi:lemma-s-case-1-arrangement-2}
  \end{figure}

Hence, for each inter-cluster edge $e \neq e^* \in E$ not adjacent to $e^*$, there exist two side assignments for $e$ that allow it to be represented as an $xy$-monotone curve not intersecting $e^*$. 

We are now ready to show, for Case~1 of the lemma, that a monotone locally planar \nt representation of $\langle G=(V,E,\mathcal{C}=\{V_a,V_b\}), \gamma \rangle$ in which $e^*$ is represented by $\Gamma_e$ exists if and only if a suitable $2$-SAT formula $\phi(a,b,\Gamma_e)$ is satisfiable.

For each inter-cluster edge $e \neq e^* \in E$ not adjacent to $e^*$, we define a Boolean variable $x_e$. 
The above discussion shows that, if we did not conclude that a trivially false formula exists, then there are exactly two distinct side assignments for $e$. We select one arbitrarily, which we call {\em canonical side assignment}, and associate $x_e={\textrm{\sc{true}}}$ to it and $x_e={\textrm{\sc{false}}}$
to the other.



For each pair of non-adjacent inter-cluster edges $e_1,e_2 \neq e^* \in E$, consider the four possible side assignments for them. We add to
$\phi(a,b,\Gamma_e)$ at most four clauses defined as follows.
\begin{itemize}
\item If the canonical side assignment for $e_1$ and the canonical side
assignment for $e_2$ generate a crossing between $e_1$ and $e_2$, then we add
clause $\{\overline{x_{e_1}} \vee \overline{x_{e_2}}\}$ to $\phi(a,b,\Gamma_e)$.
\item If the canonical side assignment for $e_1$ and the non-canonical side
assignment for $e_2$ generate a crossing between $e_1$ and $e_2$, then
we add clause $\{\overline{x_{e_1}} \vee x_{e_2}\}$ to $\phi(a,b,\Gamma_e)$.
\item If the non-canonical side assignment for $e_1$ and the canonical
side assignment for $e_2$ generate a crossing between $e_1$ and $e_2$, then we
add clause $\{x_{e_1} \vee \overline{x_{e_2}}\}$ to $\phi(a,b,\Gamma_e)$.
\item If the non-canonical side assignment for $e_1$ and the non-canonical side assignment for $e_2$ generate a crossing between $e_1$
and $e_2$, then we add clause $\{x_{e_1} \vee x_{e_2}\}$ to $\phi(a,b,\Gamma_e)$.
\end{itemize}

As a consequence of the above discussion $\langle G=(V,E,\mathcal{C}=\{V_a,V_b\}), \gamma \rangle$
admits a monotone locally planar \nt representation in which $e^*$ is represented by $\Gamma_e$ if and only if $\phi(a,b,\Gamma_e)$ is satisfiable. Further, since the
number of clauses in $\phi(a,b,\Gamma_e)$ is upper-bounded by $O(|E|^2)$ and since it
can be determined in constant time whether a side assignment for any two edges
produces a crossing, then formula $\phi(a,b,\Gamma_e)$ can be constructed in $O(|E|^2)$
time and has $O(|E|^2)$ size. Since $2$-SAT formulae can be tested for
satisfiability in linear time~\cite{apt-lattcqbf-79}, the statement of
Case~1 follows.


Suppose now that Case~2 of the statement holds. According to  Property~\ref{prop:arrangements}, squares $Q_a$ and $Q_b$ can be arranged as in Arrangement~1,~2, or~3.

Consider Arrangement~1. By the hypothesis of the case, no edge is allowed to be S-drawn. Hence, for each inter-cluster edge $e$, we have either $s_a(e)=\textrm{\sc r}$ and $s_b(e)=\textrm{\sc t}$ or $s_a(e)=\textrm{\sc b}$ and $s_b(e)=\textrm{\sc l}$.

  \begin{figure}[tb!]
    \centering
    \subfloat[]{
      \includegraphics[width=0.2\textwidth,page=1]{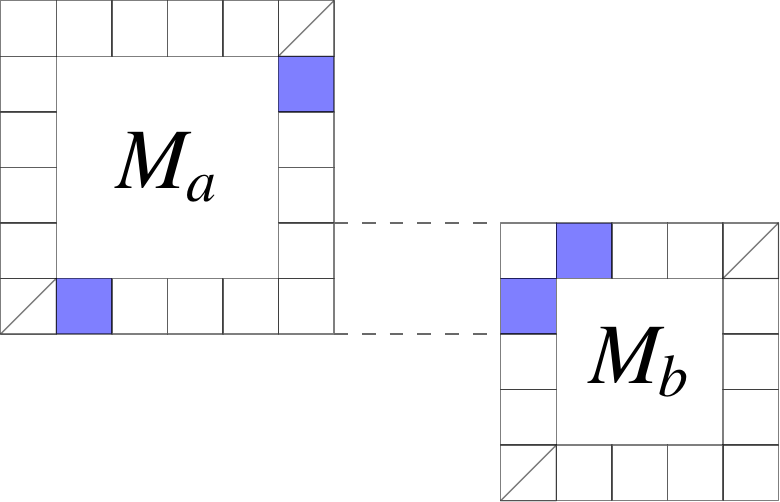}\label{fi:lemma-s-case-2-arrangement-2-a}
    }
    \hfil
    \subfloat[]{
      \includegraphics[width=0.2\textwidth,page=2]{img/lemma-s-case-2-arrangement-2}\label{fi:lemma-s-case-2-arrangement-2-b}
    }
    \hfil
    \subfloat[]{
      \includegraphics[width=0.2\textwidth,page=3]{img/lemma-s-case-2-arrangement-2}\label{fi:lemma-s-case-2-arrangement-2-c}
    }
    \hfil
    \subfloat[]{
      \includegraphics[width=0.2\textwidth,page=4]{img/lemma-s-case-2-arrangement-2}\label{fi:lemma-s-case-2-arrangement-2-d}
    }
  \caption{  Illustrations for the proof of Lemma~\ref{le:two-drawings-with-S-FORMULA}, Case 2, Arrangement 2.
  }
  \label{fi:lemma-s-case-2-arrangement-2}
  \end{figure}

Consider Arrangement~2. Let $e=(u,v)$ be an inter-cluster edge. We distinguish four cases depending on the $y$-coordinate $y_u$ of $m_\textrm{\sc r}^u$ with respect to $\max_y(Q_b)$ and on the $y$-coordinate $y_v$ of $m_\textrm{\sc l}^v$ with respect to $\min_y(Q_a)$. In each of the four cases, {\em at most two} side assignments for $e$ are possible so that $e$ is not S-drawn.
\begin{itemize}
\item If $y_u>\max_y(Q_b)$ and $y_v\geq \min_y(Q_a)$, then it holds true that either
$s_a(e)=\textrm{\sc r}$ and $s_b(e)=\textrm{\sc t}$, or that $s_a(e)=\textrm{\sc r}$ and $s_b(e)=\textrm{\sc l}$. See Fig.~\ref{fi:lemma-s-case-2-arrangement-2-a}.
\item If $y_u>\max_y(Q_b)$ and $y_v<\min_y(Q_a)$, then it holds true that either
$s_a(e)=\textrm{\sc r}$ and $s_b(e)=\textrm{\sc t}$, or that $s_a(e)=\textrm{\sc b}$ and $s_b(e)=\textrm{\sc l}$. See Fig.~\ref{fi:lemma-s-case-2-arrangement-2-b}; notice that the side assignment $s_a(e)=\textrm{\sc r}$ and $s_b(e)=\textrm{\sc l}$ would imply that $e$ is S-drawn, which is not possible by hypothesis.
\item If $y_u\leq\max_y(Q_b)$ and $y_v\geq \min_y(Q_a)$, then it holds true that $s_a(e)=\textrm{\sc r}$ and $s_b(e)=\textrm{\sc l}$. See Fig.~\ref{fi:lemma-s-case-2-arrangement-2-c}.
\item If $y_u\leq \max_y(Q_b)$ and $y_v<\min_y(Q_a)$, then it holds true that either
$s_a(e)=\textrm{\sc r}$ and $s_b(e)=\textrm{\sc l}$, or that $s_a(e)=\textrm{\sc b}$ and $s_b(e)=\textrm{\sc l}$. See Fig.~\ref{fi:lemma-s-case-2-arrangement-2-d}.
\end{itemize}

Consider Arrangement~3. Let $e=(u,v)$ be an inter-cluster edge. By definition
$e$ cannot be S-drawn. We distinguish three cases depending on the $y$-coordinate $y_u$ of $m_\textrm{\sc r}^u$ with respect to $\min_y(Q_b)$ and $\max_y(Q_b)$. In each of the three cases, {\em at most two} side assignments for $e$ are possible.

\begin{itemize}
\item If $y_u>\max_y(Q_b)$, then it holds true that either $s_a(e)=\textrm{\sc r}$ and $s_b(e)=\textrm{\sc t}$, or that $s_a(e)=\textrm{\sc r}$ and $s_b(e)=\textrm{\sc l}$. 
See Fig.~\ref{fi:lemma-s-case-2-arrangement-3-a}.
\item If $\min_y(Q_b)\leq y_u\leq \max_y(Q_b)$, then it holds true that $s_a(e)=\textrm{\sc r}$ and $s_b(e)=\textrm{\sc l}$. See Fig.~\ref{fi:lemma-s-case-2-arrangement-3-b}.
\item If $y_u<\min_y(Q_b)$, then it holds true that either $s_a(e)=\textrm{\sc r}$ and $s_b(e)=\textrm{\sc l}$, or that $s_a(e)=\textrm{\sc r}$ and $s_b(e)=\textrm{\sc b}$. See Fig.~\ref{fi:lemma-s-case-2-arrangement-3-c}.
\end{itemize}

  \begin{figure}[tb!]
    \centering
    \subfloat[]{
      \includegraphics[width=0.2\textwidth,page=1]{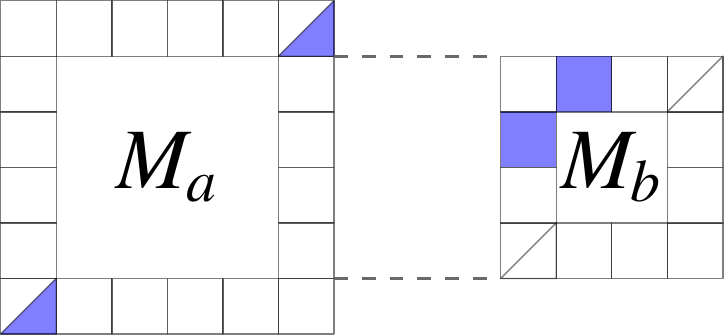}\label{fi:lemma-s-case-2-arrangement-3-a}
    }
    \hfil
    \subfloat[]{
      \includegraphics[width=0.2\textwidth,page=2]{img/lemma-s-case-2-arrangement-3}\label{fi:lemma-s-case-2-arrangement-3-b}
    }
    \hfil
    \subfloat[]{
      \includegraphics[width=0.2\textwidth,page=3]{img/lemma-s-case-2-arrangement-3}\label{fi:lemma-s-case-2-arrangement-3-c}
    }
     
  \caption{
  Illustrations for the proof of Lemma~\ref{le:two-drawings-with-S-FORMULA}, Case 2, Arrangement 3.
  }
  \label{fi:lemma-s-case-2-arrangement-3}
  \end{figure}

Hence, regardless of whether $Q_a$ and $Q_b$ are arranged as in Arrangement~1,~2, or~3, and regardless of the $y$-coordinate of $m_\textrm{\sc r}^u$ and $m_\textrm{\sc l}^v$, there exist at most two side assignments for $e$ that allow it to be represented as an $xy$-monotone curve. 


The construction of the $2$-SAT formula and the bound on its size can be derived analogously to Case~1; the only difference is that, when only one side assignment is possible, a clause with a single literal is generated. This concludes the proof of the lemma.
\end{proof}

We now turn to the study of flat clustered graphs with three clusters.

\begin{lemma}\label{le:two-drawings-three-matrices-FORMULA}
Let $\langle G=(V,E,\mathcal{C}=\{V_a,V_b,V_c\}), \gamma, \sigma \rangle$ be an
instance of {\sc MNTLP with Fixed Order}. 
Consider the four cases that are generated by assuming that an edge $e^* \in E_{a,b}$ has a prescribed S-drawing or not and that an edge $f^* \in E_{a,c}$ has a prescribed S-drawing or not.
In each case, we can construct in $O(|E|^2)$ time a $2$-SAT formula $\phi(a,b,c)$ with
length $O(|E|^2)$ that is satisfiable if and only if $\langle
G, \gamma, \sigma \rangle$ admits a monotone \nt representation with fixed order that satisfies the constraints of the corresponding case, such that no inter-cluster edge intersects any matrix except at its incidence points, and such that there are no two edges, one in $E_{a,b}$ and one in $E_{a,c}$, that cross each other.
\end{lemma}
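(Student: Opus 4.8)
The plan is to replay the construction of Lemma~\ref{le:two-drawings-with-S-FORMULA} simultaneously on the two pairs of clusters $(V_a,V_b)$ and $(V_a,V_c)$ that share $V_a$, and then to superimpose a set of clauses that forbids crossings between an edge of $E_{a,b}$ and an edge of $E_{a,c}$. First I would dispose of degenerate inputs: if $Q_a$, $Q_b$, $Q_c$ are not pairwise disjoint, if a prescribed S-drawing $\Gamma_{e^*}$ and a prescribed S-drawing $\Gamma_{f^*}$ cross each other, or if some inter-cluster edge admits no side assignment compatible with Property~\ref{prop:arrangements} and with the case, then the required representation cannot exist and $\phi(a,b,c)$ is set to a fixed unsatisfiable $2$-SAT formula. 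Otherwise, by reflections of the Cartesian axes I would fix canonical relative positions of $Q_a$ and $Q_b$ exactly as before Property~\ref{prop:arrangements}, and then enumerate the $O(1)$ possible placements of $Q_c$ relative to $Q_a$, developing the argument for one of them since the others are symmetric.

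Next I would define the Boolean variables. Applying to the pair $(V_a,V_b)$ the case analysis in the proof of Lemma~\ref{le:two-drawings-with-S-FORMULA} --- the sub-case being determined by whether the case of the present lemma prescribes an S-drawing $\Gamma_{e^*}$ for $e^*\in E_{a,b}$ or forbids any S-drawn edge in $E_{a,b}$ --- each edge $e\in E_{a,b}\setminus\{e^*\}$ is left with at most two side-assignment pairs $(s_a(e),s_b(e))$ under which it can be realised as an $xy$-monotone curve disjoint from every matrix and consistent with the case (an $E_{a,b}$ edge, lying inside the pipe of $Q_a$ and $Q_b$, is automatically disjoint from $Q_c$). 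I would introduce one variable $x_e$ per such edge, declaring one of its two options canonical, the other non-canonical, and emitting a unit clause whenever only one option survives; the edges $e^*$ and $f^*$, whose drawings are fixed, receive no variable. The same construction for the pair $(V_a,V_c)$ produces variables $x_f$ for $f\in E_{a,c}\setminus\{f^*\}$. Thus the four cases of the statement are precisely the four choices of sub-case for the two pairs.

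Then I would add the cross-bundle clauses. For each pair $(e,f)$ with $e\in E_{a,b}$ and $f\in E_{a,c}$, I run over the at most $2\times2$ combinations of their canonical/non-canonical side assignments (when $e=e^*$ or $f=f^*$ the relevant side is fixed, so the number of combinations drops, and the case $e=e^*,f=f^*$ with a crossing has already been excluded above), and for each combination that makes $e$ and $f$ cross I add the corresponding clause on $\{x_e,x_f\}$ --- possibly a unit clause. The key point that makes this well defined is that whether a given pair of side assignments produces a crossing between $e$ and $f$ is a combinatorial function of $\sigma$ and $\gamma$ alone, decidable in $O(1)$ time: since $e$ lies inside the pipe of $Q_a$ and $Q_b$ and $f$ inside the pipe of $Q_a$ and $Q_c$, and both are $xy$-monotone curves attached to $Q_a$, an argument identical to Case~2 of the proof of Lemma~\ref{le:arrangements-straight-line} shows that $e$ and $f$ can be drawn without crossing if and only if their straight-line renderings do not cross. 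A routine inspection then confirms that a satisfying assignment of $\phi(a,b,c)$ yields the desired representation and conversely. Finally, there are $O(|E|)$ variables, each of the at most $|E_{a,b}|\cdot|E_{a,c}|=O(|E|^2)$ pairs $(e,f)$ contributes $O(1)$ clauses, hence $\phi(a,b,c)$ has length $O(|E|^2)$ and is built in $O(|E|^2)$ time, and $2$-SAT is testable in linear time~\cite{apt-lattcqbf-79}.

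The step I expect to be the main obstacle is the second one: faithfully transcribing, for \emph{both} pairs at once and for each of the three arrangements of Property~\ref{prop:arrangements} (together with the sub-cases on the $y$-coordinates of the incidence points), the bound of two surviving side assignments per edge, while keeping the bookkeeping of the prescribed S-drawings $\Gamma_{e^*}$ and $\Gamma_{f^*}$ consistent. Once this reduction to at most two options per edge is in place, the cross-bundle clauses, the correctness of the equivalence, and the complexity bounds follow along the lines of Lemma~\ref{le:two-drawings-with-S-FORMULA} and Lemma~\ref{le:arrangements-straight-line}.
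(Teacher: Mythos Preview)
Your proposal is correct and follows essentially the same approach as the paper: invoke the case analysis of Lemma~\ref{le:two-drawings-with-S-FORMULA} on each of the two pairs $(V_a,V_b)$ and $(V_a,V_c)$ to reduce every edge to at most two side assignments (hence one Boolean variable), and then, for each cross-bundle pair $(e,f)\in E_{a,b}\times E_{a,c}$, use Lemma~\ref{le:arrangements-straight-line} to test in $O(1)$ time which of the at most four combinations produce a crossing and emit the corresponding forbidding clause. The paper's own proof is a three-sentence sketch of exactly this; your write-up is more explicit about degenerate configurations and about why the crossing test is well defined, but there is no substantive difference in method.
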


\begin{proof}
In each of the four cases, the hypotheses lead us in either Case~1 or Case~2 of Lemma~\ref{le:two-drawings-with-S-FORMULA} for the edges in $E_{a,b}$ and the same holds for the edges in $E_{a,c}$. Hence, by Lemma~\ref{le:two-drawings-with-S-FORMULA}, each of these edges admits at most two side assignments in each case. Moreover, each of these side assignments corresponds to a directed or negated literal. For each pair of edges $e\in E_{a,b}$ and $f\in E_{a,c}$ and for each of the at most four side assignments for them, we exploit Lemma~\ref{le:arrangements-straight-line} to test whether a side assignment for $e$ and $f$ leads to a crossing and in the case of a crossing we introduce suitable clauses to rule out that side assignment.  
\end{proof}

We finally get the following.

\begin{theorem}\label{th:monotone-fixed-ordering-free-sides}
{\sc MNTLP with Fixed Ordering}
can be tested in 
$|E|^{O({|\mathcal{C}|^2})}$
time for an instance $\langle G=(V,E,\mathcal{C}), \gamma, \sigma \rangle$.
\end{theorem}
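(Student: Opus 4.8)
The plan is to reduce {\sc MNTLP with Fixed Ordering} to a collection of $2$-SAT instances, one for each way of guessing a bounded amount of information about the sought representation, and to bound the number of such instances by $|E|^{O(|\mathcal{C}|^2)}$. Write $k=|\mathcal{C}|$; since the square assignment $\gamma$ is part of the input, the relative position of every pair of squares $Q_a=\gamma(V_a)$ and $Q_b=\gamma(V_b)$ is determined, so the hypotheses preceding Property~\ref{prop:arrangements} hold for each adjacent pair up to reflections of the axes. First I would discard the trivial obstructions: if two squares of $\gamma$ overlap, or if for some adjacent pair the pipe of $Q_a$ and $Q_b$ meets a third square $Q_c$, or if some inter-cluster edge admits no $xy$-monotone side assignment (checked via Property~\ref{prop:arrangements}), then no monotone locally planar representation exists and we reject. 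Otherwise no pipe meets a foreign matrix, so an edge drawn inside its pipe avoiding $M_a$ and $M_b$ avoids every matrix, and by Lemma~\ref{le:arrangements-straight-line} we may restrict attention to straight-line representations.

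The engine of the reduction is the pair of Lemmata~\ref{le:two-drawings-with-S-FORMULA} and~\ref{le:two-drawings-three-matrices-FORMULA}: once we decide, for each pipe of a pair of adjacent clusters $(V_a,V_b)$, whether \emph{some} edge of $E_{a,b}$ is S-drawn and, if so, which edge $e^*_{ab}$ and which of its (at most two, by the definition of S-drawing and because the order is fixed) S-drawings $\Gamma_{ab}$ is used, then every edge of $E_{a,b}$ has at most two admissible side assignments and can be encoded by a single Boolean variable (a unit clause if only one survives). Hence there are only $O(|E|)$ guesses for a single pipe — the $O(|E|)$ pairs (edge, S-drawing) plus the ``no S-drawn edge'' option — so the number of joint guesses over all $\binom{k}{2}=O(k^2)$ pipes is $|E|^{O(k^2)}$.

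For a fixed joint guess I would assemble one $2$-SAT formula $\Phi$. For each adjacent pair $(V_a,V_b)$, include the formula given by Lemma~\ref{le:two-drawings-with-S-FORMULA} (its Case~1 with the guessed $e^*_{ab},\Gamma_{ab}$, or its Case~2, according to the guess for this pipe): this creates the Boolean variable of every edge of $E_{a,b}$ and all clauses forbidding a crossing between two edges of $E_{a,b}$ or between an edge of $E_{a,b}$ and a matrix. For each cluster $V_a$ and each pair $V_b,V_c$ of distinct neighbours of $V_a$, include the cross-pipe clauses of the formula $\phi(a,b,c)$ of Lemma~\ref{le:two-drawings-three-matrices-FORMULA}, selecting the one of its four cases forced by the guesses for the pipes $(V_a,V_b)$ and $(V_a,V_c)$: these forbid a crossing between an edge of $E_{a,b}$ and an edge of $E_{a,c}$. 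The variable of an edge $e\in E_{a,b}$ is shared consistently across all these subformulas, because the ``canonical'' side assignment of $e$ is chosen once from the guess for $e$'s own pipe and does not depend on any third cluster. Since the pairs of inter-cluster edges relevant for local planarity are exactly those incident to a common matrix $M_a$, and such a pair either lies in one pipe (ruled out by some $\phi_{ab}$) or in two pipes both containing $V_a$ (ruled out by some $\phi(a,b,c)$), a routine argument shows that $G$ admits a monotone locally planar \nt representation with fixed order and square assignment $\gamma$ if and only if $\Phi$ is satisfiable for at least one joint guess: given such a representation, reading off for each pipe whether an S-drawn edge is present fixes a joint guess whose $\Phi$ is satisfied by the representation's side assignments; conversely, a satisfying assignment of $\Phi$ yields admissible side assignments, and drawing the edges straight-line gives, by Lemma~\ref{le:arrangements-straight-line} and the correctness of the subformulas, a monotone locally planar representation.

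For the time bound, each $\phi_{ab}$ and each $\phi(a,b,c)$ is built in $O(|E|^2)$ time and has size $O(|E|^2)$ by the cited lemmata, so $\Phi$ has size $O(k^3|E|^2)$ and, being $2$-SAT, is tested for satisfiability in time linear in its size~\cite{apt-lattcqbf-79}; multiplying by the $|E|^{O(k^2)}$ joint guesses and adding the polynomial preprocessing gives the claimed $|E|^{O(|\mathcal{C}|^2)}$ running time. The step I expect to be the main obstacle is justifying that guessing the S-drawing status of each pipe \emph{independently} suffices: one must verify that the choices made in different pipes interact only through the shared edge variables handled by Lemmata~\ref{le:two-drawings-with-S-FORMULA} and~\ref{le:two-drawings-three-matrices-FORMULA}, and that a pipe which in the target representation carries several S-drawn edges is nonetheless captured by designating just one of them as the prescribed S-drawn edge, the rest being absorbed by the two-literal encoding of that pipe's formula.
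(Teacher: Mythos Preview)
Your proposal is correct and follows essentially the same approach as the paper: guess, for each adjacent pair of clusters, whether an S-drawn edge is present and if so which one (with which S-drawing), then assemble the $2$-SAT formula from Lemmata~\ref{le:two-drawings-with-S-FORMULA} and~\ref{le:two-drawings-three-matrices-FORMULA} over all pairs and triplets, and test satisfiability; the paper's proof is slightly terser but identical in structure and bound. Your closing worry about a pipe carrying several S-drawn edges is handled exactly as you suspect: the case analysis in Lemma~\ref{le:two-drawings-with-S-FORMULA} shows that the two surviving side assignments for any edge $e\neq e^*$ are precisely those that do not cross $e^*$, so whatever side assignment $e$ carries in the target representation (S-drawn or not) is among them.
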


\noindent
\begin{proof}
For each pair $V_a,V_b$ of adjacent clusters in $\cal C$, we guess whether
$V_a,V_b$ belongs to a set $\mathcal P_{\textrm{\sc s}}$ or to a set $\mathcal
P_{\textrm{{\sc n}}}$. The set $\mathcal P_{\textrm{\sc s}}$ contains all the pairs $V_a,V_b$ of clusters that have an inter-cluster edge that is S-drawn. The set $\mathcal P_{\textrm{{\sc n}}}$ contains all the pairs $V_a,V_b$ of clusters that do not have an inter-cluster edge that is S-drawn. For each pair $V_a,V_b$ of clusters in $\mathcal P_{\textrm{\sc s}}$ we guess an inter-cluster edge $e \in E_{a,b}$ that can be S-drawn and one of its possible S-drawings $\Gamma_e$ for $e$; we remark that the guess of $\Gamma_e$ consists of a guess of the side assignement for $e$, hence there are a constant number of possible guesses for each edge $e$. 

By means of Lemma~\ref{le:two-drawings-with-S-FORMULA} we compute the following formula:

$$\phi_{\textrm{pairs}} = \bigwedge_{V_a,V_b \in {\mathcal P_{\textrm{\sc s}}}} \phi(a,b,\Gamma_e) \bigwedge_{V_a,V_b \in {\mathcal P_{\textrm{{\sc n}}}}} \phi(a,b).$$

Further, let $\mathcal P_{\textrm{triplet}}$ be the set of triplets $V_a,V_b,V_c$ of clusters in $\cal C$ such that $V_b$ and $V_c$ are adjacent to $V_a$. We write  one of the formulae $\phi(a,b,c)$ of the four cases of Lemma~\ref{le:two-drawings-three-matrices-FORMULA} according to the presence in $\mathcal P_{\textrm{\sc s}}$ of an inter-cluster edge between $V_a$ and $V_b$ or of an inter-cluster edge between $V_a$ and $V_c$. By means of Lemma~\ref{le:two-drawings-three-matrices-FORMULA}, we compute the following:

$$\phi_{\textrm{triplets}} = \bigwedge_{V_a,V_b,V_c \in {\mathcal P_{\textrm{triplet}}}}
\phi(a,b,c).$$

Finally, we define $$\phi = \phi_{\textrm{pairs}} \wedge \phi_{\textrm{triplets}}.$$

We have that instance $\langle G=(V,E,\mathcal{C}), \gamma, \sigma \rangle$  is
a positive instance if and only if there exists a guess such that the
corresponding formula $\phi$ is satisfiable.

About the time complexity, for each guess $O(|E|^2)$ time is needed to compute the corresponding formula $\phi$ and to check it for satisfiability, due to Lemmata~\ref{le:two-drawings-with-S-FORMULA} and~\ref{le:two-drawings-three-matrices-FORMULA}. The number of guesses can be bounded as follows. For each pair of adjacent clusters $V_a,V_b$ we have to guess among $2|E_{a,b}|+1$ possibilities, corresponding to the choice of $|E_{a,b}|$ edges to be S-drawn, each in two possible ways, plus the possibility of not having any S-drawn edge.  This leads to $O((2|E|+1)^{|\mathcal{C}|^2})$, which is in $|E|^{O({|\mathcal{C}|^2})}$, guesses.
\end{proof}

Observe that the computational complexity of the algorithm described in the proof of Theorem~\ref{th:monotone-fixed-ordering-free-sides} is polynomial if the number of clusters is constant.



\subsection{A JavaScript Library for Monotone NodeTrix Representations}\label{se:editor}

In this section we consider the following scenario. A user moves (e.g.\ via a drag-and-drop primitive) the matrices representing the clusters of a flat clustered graph, choosing also her preferred row-column order. A system automatically selects sides for the inter-cluster edges so to produce a monotone \nt representation $\Gamma$ with a ``small'' $\chi(\Gamma)$.

The algorithm in the proof of Theorem~\ref{th:monotone-fixed-ordering-free-sides} suggests the following strategy: 
\begin{enumerate}
\item Compute the $2$-SAT formula associated to each
possible guess of S-drawings of edges between adjacent matrices. In each formula the value of a variable represents the two possible side assignments for an edge; further, each unsatisfied clause corresponds to a crossing in the monotone \nt representation.  
\item If one of such formulae admits a solution, draw the edges according to the values of the associated variables, obtaining an MNTLP representation. 
\item Otherwise, for each formula, heuristically search for a solution of the corresponding MAX-$2$-SAT problem and keep the solution with the smallest number of false clauses, corresponding to a drawing with few local crossings.
\end{enumerate}
Such a strategy requires polynomial time if the number of clusters is constant (Theorem~\ref{th:monotone-fixed-ordering-free-sides}) and the selected MAX-$2$-SAT heuristic is polynomial. However, solving a MAX-$2$-SAT instance for each of the guesses of the proof of Theorem~\ref{th:monotone-fixed-ordering-free-sides} is unfeasible even in a static setting. 
%

Therefore, we modify the above strategy as follows. We restrict to monotone
\nt representations without S-drawn edges. A locally-planar flat clustered graph may become non-planar with this restriction, hence this choice corresponds to trading accuracy for efficiency. However, the proof of Theorem~\ref{th:monotone-fixed-ordering-free-sides} shows that in this setting there is a unique formula associated to an instance, hence we need to solve one MAX-$2$-SAT instance.

\begin{figure}[htb]
    \centering
	\includegraphics[width=0.9\textwidth]{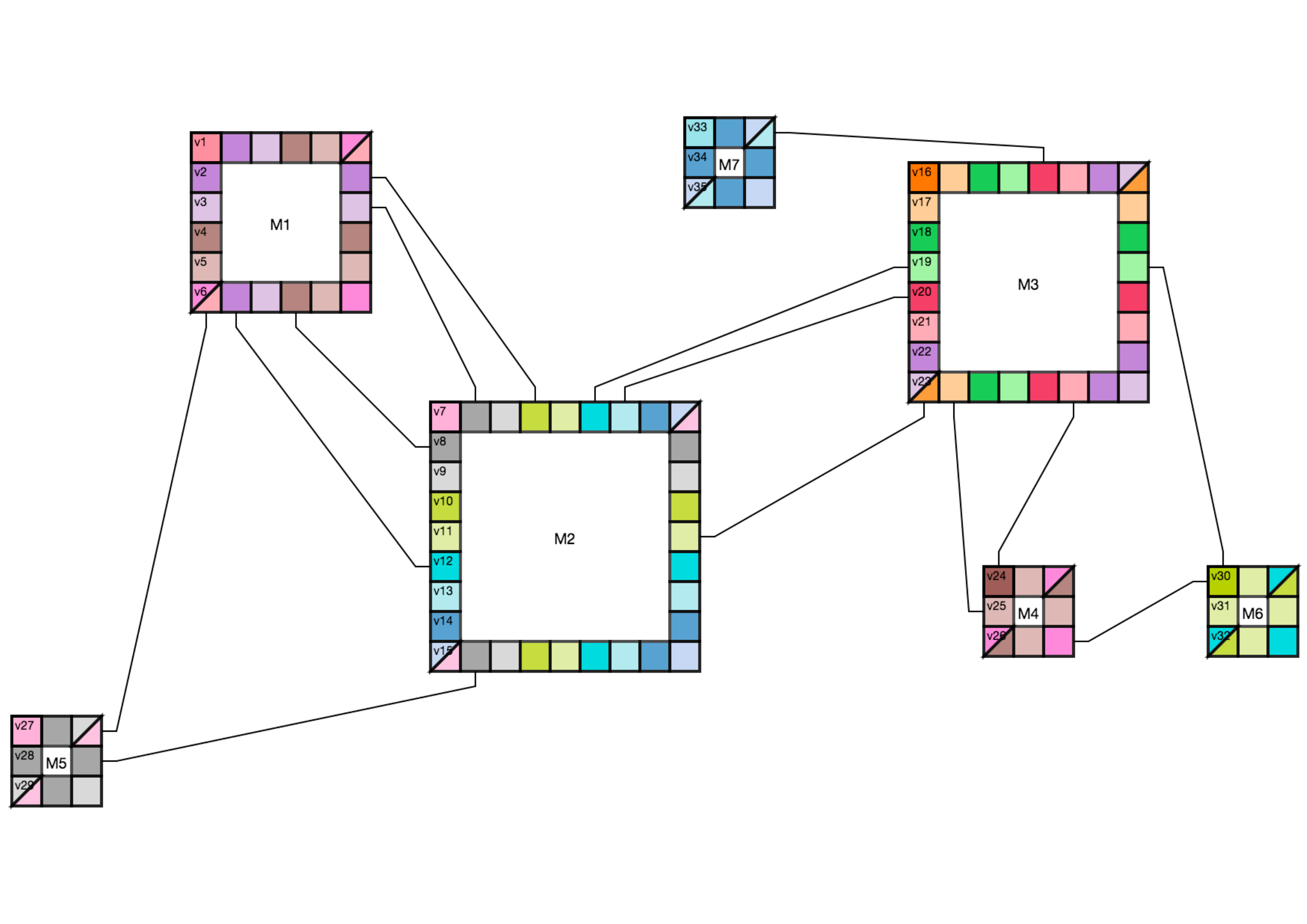}
    \caption{A NodeTrix Representation created by the demo editor~\cite{giordano-demo}.}\label{fi:overview}
\end{figure}

A JavaScript library implementing the above heuristic has been designed and used
in a proof-of-concept editor available at~\cite{giordano-demo} (see Fig.~\ref{fi:overview}). 
In the editor the internal part of the matrices is not shown and inter-cluster edges are polylines or splines; this is not intended to be the best choice and many alternatives for the actual geometry of the edges are possible whose visual appeal should be considered according to the specific application domain.
In order to check if the strategy is usable on medium size instances,
we experimented our simple editor on clustered graphs with at most twenty clusters
and 200 inter-cluster edges, without experiencing delays in the
interaction.
We did not compare the number of crossings produced by our heuristic with other approaches because, as far as we know, this is the first attempt to reduce local crossings in \nt representations. Also, it would be pointless to compare our approach with the original \nt software, since in that case the edges just attach to the nearest sides.
The JavaScript software of our library is freely available and can be integrated in any NodeTrix-style interface. 
As an example, it can be coupled with an algorithm that automatically places matrices based on a force-directed approach or with one that computes row-column order for the matrices with the purpose of clarifying the internal structure of the clusters~\cite{dbf-gb-05}.


\section{Conclusions and Open problems}\label{se:conclusions}

We have shown that clustered graphs for Nodetrix planarity is \NPC even if the order of the rows and columns is fixed or if the matrix sides to which the inter-cluster edges attach is fixed.
We have also studied the setting where matrices have fixed positions and inter-cluster edges are $xy$-monotone curves. In this case we established negative and positive results; leveraging on the latter, we developed a library that computes a layout of the inter-cluster edges with few crossings. A demo~\cite{giordano-demo} shows that the computation allows the user to move matrices without any slowdown of the interaction.

Several theoretical problems are related to the planarity of Nodetrix representations.
First, the \NPCN of Nodetrix planarity can be interpreted as a
proof of the \NPCN of clustered planarity (see, for example, \cite{addfpr-rccp-14,tibp-addfr-15,cdfpp-cccg-j-08,fce-pcg-95}) when a
specific type of representation is required. Observe, though, that a flat clustered 
graph may be Nodetrix planar even if its underlying graph is not planar.
Second, planarity of hybrid representations have been recently studied~\cite{addfp-ilrg-15} in the setting in which clusters are
represented as the intersections of geometric objects. Our results can be viewed
as a further progress in this area.
Third, given a flat clustered graph with two clusters, computing a locally planar Nodetrix representation in which the clusters are represented as matrices aligned along their principal diagonal is equivalent to solve the $2$-page bipartite book embedding with spine crossings problem~\cite{addfp-ilrg-15}. Interestingly, if the two matrices are aligned along their secondary diagonal this equivalence is not evident anymore.

Among the future research directions, we mention the one of automatically embedding the matrices in order to minimize the number of crossings in monotone Nodetrix representations. 

\remove{
A {\em $2$-page book embedding} is a planar drawing of a graph where all
vertices are placed along a closed curve $\ell$, called {\em spine}, and each
edge is drawn in one of the two regions of the plane delimited by $\ell$. The
{\sc $2$-Page Book Embedding} problem asks whether a $2$-page book embedding
exists for a given graph. This problem is \NPC~\cite{w-chcpmpg-82}.

Let $B=(V_1, V_2, E \subseteq V_1 \times V_2)$ be a bipartite graph.
A {\em bipartite $2$-page book embedding} of $B$ is a $2$-page book embedding
such that all vertices in ${V_1}$ occur consecutively along the spine (and all
vertices in ${V_2}$ occur consecutively, as well). A {\em bipartite $2$-page
book embedding with spine crossings} ({\sc b2besc}) is a bipartite $2$-page book
embedding in which edges are not restricted to lie in one of the two regions
delimited by $\ell$, but they might cross it in the two portions of $\ell$
delimited by a vertex of $V_1$ and a vertex of $V_2$. The problem of testing
whether a bipartite graph admits a {\sc b2besc} is called {\sc Bipartite
$2$-Page Book Embedding with Spine Crossings} (also abbreviated with the acronym
 {\sc b2besc}).

Let $\langle G=(V,E,\mathcal{C}=\{V_a,V_b\}),\gamma, \sigma \rangle$ be an
instance of {\sc Monotone \nt Planarity with Fixed Square Assignment and Fixed
Order} such that $x_a < x_b$ and $y_a < y_b$, where $(x_a,y_b)$ and $(x_a,y_b)$
are the centers of squares $Q_a$ and $Q_b$, respectively. We associate with
$\langle G=(V,E,\mathcal{C}=\{V_a,V_b\}), \gamma \rangle$ an instance
$(V_1,V_2,E')$ of {\sc B2BESC} defined as follows. We set $V_1=V_a$, $V_2=V_b$,
and $E'= E_{a,b}$.

\begin{theorem}\label{le:b2besc}
$\langle G=(V,E,\mathcal{C}=\{V_a,V_b\}), \gamma \rangle$ is monotone \nt
locally planar with fixed square assignment if and only if $(V_1,V_2,E')$ admits
a {\sc B2BESC}.
\end{theorem}

\begin{proof}
\end{proof}
}

\bibliographystyle{splncs03}
\bibliography{bibliography}




\end{document}